\newcommand{\mb}{\mathbf}
\newcommand{\mc}{\mathcal}
\newcommand{\gf}{\mathrm{GF}}
\newcommand{\transpose}{\ensuremath{\mathsf{T}}}
\DeclareMathOperator{\OPT}{OPT}
\newcommand{\eps}{\ensuremath{\varepsilon}}
\newcommand{\abs}[1]{\ensuremath{\mathopen\lvert #1 \mathclose\rvert}}
\newcommand{\norm}[1]{\ensuremath{\mathopen\lVert #1 \mathclose\rVert}}
\newcommand{\Abs}[1]{\ensuremath{\left| #1 \right|}}
\newcommand{\tensor}{\ensuremath{\otimes}}
\newcommand{\zo}{\ensuremath{\{0,1\}}}
\newcommand{\pmo}{\ensuremath{\{-1,1\}}}
\newcommand{\zpmo}{\ensuremath{\{-1,0,1\}}}
\newcommand{\NP}{\ensuremath{\mathsf{NP}}}
\newcommand{\symdiff}{\ensuremath{\bigtriangleup}}
\newenvironment{proof}{\noindent\textbf{Proof\ }}{\hspace*{\fill}$\Box$\medskip}
\newtheorem{theorem}{Theorem}
\newtheorem{corollary}[theorem]{Corollary}
\newtheorem{definition}[theorem]{Definition}
\newtheorem{lemma}[theorem]{Lemma}
\newtheorem{remark}[theorem]{Remark}
{\theorembodyfont{\rmfamily\small}}
\begin{document}

\title{Low Rank Approximation of Binary Matrices: \\
Column Subset Selection and Generalizations}
\author[1]{Chen Dan\thanks{Email: chen\_dan@pku.edu.cn}}
\author[2]{Kristoffer Arnsfelt Hansen\thanks{Email: arnsfelt@cs.au.dk}}
\author[1]{He Jiang\thanks{Email: hejiang@pku.edu.cn}}
\author[1]{Liwei Wang\thanks{Email: wanglw@cis.pku.edu.cn}}
\author[3]{Yuchen Zhou\thanks{Email: yuchen\_zhou@pku.edu.cn}}

\affil[1]{Key Laboratory of Machine Perception, School of EECS, Peking University}
\affil[2]{Department of Computer Science, Aarhus University}
\affil[3]{Key Laboratory of Machine Perception, School of Mathematics, Peking University}
\date{}
\maketitle

\abstract
Low rank matrix approximation is an important tool in machine learning. Given a data matrix, low rank approximation helps to find factors, patterns and provides concise representations for the data. Research on low rank approximation usually focus on real matrices. However, in many applications data are binary (categorical) rather than continuous. This leads to the problem of low rank approximation of binary matrix. Here we are given a $d \times n$ binary matrix $\mb{A}$ and a small integer $k$. The goal is to find two binary matrices $\mb{U}$ and $\mb{V}$ of sizes $d \times k$ and $k \times n$ respectively, so that the Frobenius norm of $\mb{A} - \mb{U} \mb{V}$ is minimized. There are two models of this problem, depending on the definition of the dot product of binary vectors: The $\mathrm{GF}(2)$ model and the Boolean semiring model. Unlike low rank approximation of real matrix which can be efficiently solved by Singular Value Decomposition, approximation of binary matrix is $\NP$-hard even for $k=1$.

In this paper, we consider the problem of Column Subset Selection (CSS), in which one low rank matrix must be formed by $k$ columns of the data matrix. We characterize the approximation ratio of CSS for binary matrices. For $\gf(2)$ model, we show the approximation ratio of CSS is bounded by $\frac{k}{2}+1+\frac{k}{2(2^k-1)}$ and this bound is asymptotically tight. For Boolean model, it turns out that CSS is no longer sufficient to obtain a bound. We then develop a Generalized CSS (GCSS) procedure in which the columns of one low rank matrix are generated from Boolean formulas operating bitwise on columns of the data matrix. We show the approximation ratio of GCSS is bounded by $2^{k-1}+1$, and the exponential dependency on $k$ is inherent.

\newpage
\fontsize{11pt}{13pt}\selectfont

\section{Introduction}
\label{Section:Introduction}
Low rank approximation of matrices is a classical problem. Given a matrix $\mb{A}$ of size $d \times n$, the goal is to find two low rank matrices $\mb{U}$ and $\mb{V}$, such that $\mb{U} \mb{V}$ approximates $\mb{A}$. Formally, the problem is to solve
\begin{equation}\label{Low_Rank_Problem}
\min_{\mb{U},\mb{V}} \|\mb{A}-\mb{U} \mb{V}\|^2_F,
\end{equation}
where the minimum is over all matrices $\mb{U}, \mb{V}$ of sizes $d \times k$ and $k \times n$ respectively; and $k$, typically a small integer, is the desired rank. Here the error is measured in terms of the Frobenius norm $ \| \cdot \|_F $.

In many applications, $\mb{A}$ is a data matrix. Each column of $\mb{A}$ is a $d$-dimensional data vector, and each row of $\mb{A}$ corresponds to an attribute. Low rank approximation of $\mb{A}$ is often called factor analysis and dimensionality reduction: the $k$ columns of the matrix $\mb{U}$ are the factors and basis vectors of the low dimensional space, and each column of $\mb{V}$ contains the combination coefficients.



If $\mb{A}$, $\mb{U}$, $\mb{V}$ are real matrices, low rank approximation can be efficiently solved by Singular Value Decomposition (SVD). This problem has been studied for more than a century, and is known as Principal Component Analysis (PCA) \cite{PCA:Hotelling33}, Karhunen-Lo\`eve Transform \cite{PCA:Karhunen47}, to name a few.

In this paper we consider low rank approximation of binary matrices. The motivation is that in many applications data are binary (categorical) rather than continuous. Indeed, nearly half data sets in UCI repository contains categorical features. In the binary case, we require both data matrix $\mb{A}$ and the rank-$k$ matrices $\mb{U}, \mb{V}$ are binary. There are two formulations of the binary low rank approximation problem, depending on the definition of vector dot product. One formulation will be referred to as $\mathrm{GF}(2)$ model, in which the dot product of two binary vectors $\mb{u}, \mb{v}$ is defined as $\mb{u}^T \mb{v} := \oplus_i u_i v_i$. The other formulation will be referred to as \emph{Boolean} model, in which the dot product is defined as $\mb{u}^T \mb{v} := \bigvee_i (u_i \wedge v_i)$.

The Boolean model is usually called Boolean Factor Analysis (BFA). It has found numerous applications in machine learning and data mining including latent variable analysis, topic models, association rule mining, clustering, and database tiling \cite{Belohlavek10, Miettinen08, Vaidya07, Seppanen03, Vsingliar06}. The $\mathrm{GF}(2)$ model has also been applied to Independent Component Analysis (ICA) over string data, attracting attention from the signal processing community \cite{Yeredor11, Gutch12, Painsky15}.




Despite of various applications and heuristic algorithms \cite{Miettinen08, Frolov07, Lucchese10, Frank12}, little is known on the theoretical side of the binary low rank approximation problem. In fact, previously the only known result is that for the very special case of $k=1$, for which the $\mathrm{GF}(2)$ and the Boolean model are equivalent, there are $2$-approximation algorithms (see Section \ref{subsection:related_works}).

In this paper, we provide the \emph{first} theoretical results for general binary low rank approximation problem, which is formally stated as follows. Given $\mb{A} \in \{0,1\}^{d \times n}$,
\begin{equation}\label{Binary_Low_Rank_Problem}
\min_{\mb{U} \in \{0,1\}^{d \times k}, \mb{V} \in \{0,1\}^{k \times n}} \|\mb{A} - \mb{U} \mb{V} \|^2_F.
\end{equation}
where the matrix product $\mb{U}\mb{V}$ is over $\mathrm{GF}(2)$ and Boolean semiring respectively.

Before stating the results, let us first see the differences between low rank approximation of real matrices and our $\gf(2)$ and Boolean models. First, the linear space over $\gf(2)$ has a very different structure from the Euclidean space. The dot product over $\gf(2)$ is not an inner product and does not induce a norm: There exists $\mb{a} \neq \mathbf{0}$ such that $\mb{a}^T \mb{a} = 0$ over $\gf(2)$. An immediate consequence is that for binary matrices of the $\gf(2)$ model, there is no Singular Value Decomposition (SVD), which is the basis for low rank approximation of real matrices. Therefore it is not clear how to obtain the optimal low rank approximation except for exhaustive search which requires $\Omega(2^{k\cdot \min(n,d)})$ time. The Boolean model is even more different: As it is a semiring rather than a field, we do not have a linear space (see below for details).

In words, there is no efficient algorithm that can solve the low rank approximation problem for binary matrices. In fact, we will show that finding the exact solution of (\ref{Binary_Low_Rank_Problem}) is $\NP$-hard even for $k=1$ (see Section \ref{Section:Hardness}). This result was obtained independently by Gillis and Vavasis~\cite{GillisV15}.



Another well-studied approach for low rank approximation of matrices is Column Subset Selection (CSS) \cite{Frieze04, Mahoney11randomized}. The goal of CSS is to find a subset of $k$ columns of $\mb{A}$ and form the low rank basis matrix so that the residual is as small as possible. An advantage of CSS is that the result is more interpretable than that of SVD. CSS has been extensively studied for low rank approximation of real matrices \cite{Gu96, Deshpande06, Deshpande06Matrix, Pan00, Drineas08, Boutsidis09, Tropp09, Deshpande10, Civril12, Boutsidis14, Cohen15, Yang15, Wang15, Bhaskara16}. Below is a formal definition of CSS over real matrices.

\begin{definition}[CSS for real matrices]
\label{Def:CSS}
Given a matrix $\mb{A} \in R^{d \times n}$ and a positive integer $k$, pick $k$ columns of $\mb{A}$ forming a matrix $\mb{P}_A \in R^{d \times k}$ such that the residual
\[
\|\mb{A} - \mb{P}_A\mb{Q}\|_{\xi}
\]
is minimized over all possible $\left(n \atop k \right)$ choices for the matrix $\mb{P}_A$. Here $\mb{Q}$ denotes the optimal matrix of size $k \times n$ given $\mb{P}_A$, which can be obtained by solving a least square problem; and $\xi = 2$ or $F$ denotes the spectral norm or Frobenious norm.
\end{definition}

The central problem in CSS is to determine the best function $\phi(n,k)$ on $n,k$ in the following bound.
\begin{equation}
\label{CSS_bound_form}
\|\mb{A} - \mb{P}_A \mb{Q}\|^2_{\xi} \le \phi(k,n) \|\mb{A} - \mb{A}_k\|^2_{\xi},
\end{equation}
where $\mb{A}_k$ denotes the best rank-$k$ approximation to the matrix $\mb{A}$ as computed with SVD.

Two classical results \cite{Gu96, Deshpande06Matrix} showed that for real matrices
\begin{equation}
\label{bound_spectral}
\|\mb{A} - \mb{P}_A \mb{Q}\|^2_{2} \le (k(n-k)+1) \|\mb{A} - \mb{A}_k\|^2_{2},
\end{equation}
and
\begin{equation}
\label{bound_Frob}
\|\mb{A} - \mb{P}_A \mb{Q}\|^2_{F} \le (k+1) \|\mb{A} - \mb{A}_k\|^2_{F}.
\end{equation}

There are extensive work on developing efficient algorithms for CSS with approximation ratio close to the above bounds, possibly using more than $k$ columns of $\mb{A}$. These include methods such as rank revealing QR \cite{Pan00}, adaptive sampling \cite{Deshpande06}, subspace sampling (leverage scores) \cite{Drineas08, Boutsidis09}, efficient volume sampling \cite{Deshpande10}, projection-cost preserving sketches \cite{Cohen15} and greedy CSS \cite{Bhaskara16}.

In this work, we study the CSS problem for binary matrices over $\gf(2)$ and Boolean semiring respectively. We ask the question in (\ref{CSS_bound_form}) and try to determine the best $\phi(k,n)$. Here we only consider Frobenious norm as the spectral norm does not exist in $\gf(2)$ or Boolean model.

The difficulty of the CSS problem for $\gf(2)$ and Boolean semiring model is that all methods developed for CSS over real matrices rely on at least one of the following concepts which are inherent in the Euclidean space: SVD, volume of a simplex, Eclidean distance, orthogonal projection, and QR decomposition. However, none of these concept exists in $\gf(2)$ or Boolean model.

In this paper, we develop new methods for the CSS problem for $\gf(2)$ and Boolean model respectively. For $\gf(2)$ model, we show that by picking the best $k$ columns of $\mb{A}$ to form $\mb{P}_A$,
\[
\|\mb{A} - \mb{P}_A \mb{Q}\|^2_{F} \le  \left(\frac{k}{2}+1+\frac{k}{2(2^k-1)} \right) \|\mb{A} - \mb{A}_k\|^2_{F},
\]
where $\mb{A}_k = \mb{U} \mb{V}$ is the optimal solution of (\ref{Binary_Low_Rank_Problem}). Moreover, we show the ratio $\left(\frac{k}{2}+1+\frac{k}{2(2^k-1)}\right)$ is asymptotically tight. Our technique is different from those for CSS over real matrices.



For Boolean model, it turns out that CSS is no longer sufficient for obtaining a bound, simply because the Boolean semiring does not have a field structure. We thus propose a Generalized CSS (GCSS) procedure. In this GCSS framework, we generate each column of the basis matrix $\mb{P}_A$ from carefully designed Boolean formulas operating bitwise on a predefined number of columns of $\mb{A}$. We show that GCSS achieves $(2^{k-1}+1)$-approximation ratio relative to $\|\mb{A} - \mb{A}_k\|^2_{F}$. Moreover, we argue that the exponential dependence in $k$ is inherent with the Boolean model (see Section \ref{Section:Boolean} for details).



Our work is a first step towards an understanding of low rank approximation of matrices over $\gf(2)$ and Boolean semiring. It is better to view our work as existence results for (Generalized) CSS for binary matrices, parallel to the classical existence theorems for CSS of real matrices given in (\ref{bound_spectral}) and (\ref{bound_Frob}) \cite{Gu96, Deshpande06Matrix}. Moreover, as SVD does not apply to $\gf(2)$ or Boolean models, CSS is so far the only method that obtains a low rank approximation for binary matrices with theoretical guarantees and deserves an in-depth study. Finally, it is an important future direction to develop efficient algorithms to achieve or approximately achieve the optimal ratio proved in this paper. We believe this requires new techniques exploiting the algebraic structure of $\gf(2)$ and Boolean semiring in a deep way.

The rest of this paper is organized as follows. In Section \ref{subsection:related_works} we discuss existing results on low rank approximation of binary matrices. In Section \ref{Section:GF(2)} we present the information-theoretically optimal upper bound for the approximation ratio of CSS over $\gf(2)$. In Section \ref{Section:Boolean} we propose the GCSS procedure and give the upper bound for the Boolean semiring model. In Section \ref{Section:Hardness} we show that finding the exaction solution of the low rank binary matrix approximation problem is $\NP$-hard even for $k=1$. Finally we conclude in Section \ref{section:conclusion}.

\subsection{Other Related Works}\label{subsection:related_works}

To the best of our knowledge, all known theoretical results on the low rank approximation problem are about the special case of rank-one, i.e., $k=1$. In the rank-one case, one looks for binary vectors $\mb{u}$, $\mb{v}$ such that $\|\mb{A}-\mb{u}\mb{v}^T\|_F$ is minimized, and therefore $\gf(2)$ and Boolean models are equivalent.

Shen et al. \cite{KDD:ShenJY09} formulate the rank-one problem as Integer Linear Programming (ILP). They showed that solving its LP relaxation yields a $2$-approximation. They also improved the efficiency by reducing the LP to a Max-Flow problem using a technique developed in \cite{Hochbaum97}. Jiang et al. \cite{Jiang14} observed that for the rank-one case, simply choosing the best column from $\mb{A}$ yields a $2$-approximation.


In the $\mathrm{GF}(2)$ model, low rank approximation is related to
the concept of matrix rigidity introduced by Valiant~\cite{Valiant77},
as a method of proving lower bounds for linear circuits. For a matrix
$\mb{A}$ over $\mathrm{GF}(2)$, the rigidity $R_{\mb{A}}(k)$ is the smallest
number of entries of $\mb{A}$ that must be changed in order to bring its
rank down to $k$. Thus for a $d \times n$ matrix $\mb{A}$, $R_{\mb{A}}(k)$ is
\emph{precisely} the minimum approximation error possible by a product
of a $d \times k$ matrix $\mb{U}$ and a $k \times n$ matrix $\mb{V}$. By the
results of Valiant, an $n \times n$ matrix $\mb{A}$ for which $R_{\mb{A}}(k) \geq
n^{1+\eps}$, for $k=O(n/\log\log n)$ and for some constant $\eps>0$
cannot be computed by a linear circuit of size $O(n)$ and depth
$O(\log n)$. Such rigid matrices exists in abundance -- the challenge
is to come up with an explicit construction of a family of rigid
matrices.  For the low rank approximation problem we are however
interested in the setting of $k \ll n$ and we are interested in
algorithms rather than explicit matrices.

\section{Column Subset Selection for Binary Matrices Over $\mathrm{GF}(2)$}\label{Section:GF(2)}

In this section we characterize the best possible approximation ratio of CSS in the $\gf(2)$ model. As mentioned in Section \ref{Section:Introduction}, the best approximation ratio of CSS for real matrices is $k+1$ under the Frobenius norm. This result is proved by the so-called volume sampling method  \cite{Deshpande06Matrix}. Concretely, the volume sampling method randomly samples a set of $k$  columns of $\mb{A}$ with probability proportional to the volume of the $k$-dimensional simplex formed by the $k$-columns along with the origin. Volume sampling generates an (expected) $k+1$ approximation ratio.

However, the $\gf(2)$ model does not have a notion of volume, since the dot product over $\gf(2)$ is not an inner product. Nevertheless, we develop a new approach and show the following bound.

\begin{theorem}\label{Thm:Main}
For any binary matrix $\mb{A} \in \{0,1\}^{d \times n}$, there exist $\mb{P}_A \in \{0,1\}^{d \times k}$ and $\mb{Q}\in \{0,1\}^{k \times n}$, where the columns of $\mb{P}_A$ are chosen from the columns of $\mb{A}$, such that
\[
\|\mb{A} - \mb{P}_A \mb{Q} \|^2_F \le \left(\frac{k}{2}+1+\frac{k}{2(2^k-1)}\right)\cdot \mathrm{OPT}_k,
\]
where
\[
\mathrm{OPT}_k := \|\mb{A} - \mb{A}_k \|^2_F,
\]
and $\mb{A}_k = \mb{U} \mb{V}$ is the optimal solution of (\ref{Binary_Low_Rank_Problem}). Here all matrix operations are over $\gf(2)$.
\end{theorem}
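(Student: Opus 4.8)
Let me think about this carefully. We have a $d\times n$ binary matrix $\mb{A}$, and the optimal rank-$k$ GF(2) factorization $\mb{A}_k = \mb{U}\mb{V}$ with $\|\mb{A}-\mb{A}_k\|_F^2 = \OPT_k$. We want to choose $k$ columns of $\mb{A}$ to form $\mb{P}_A$ and then optimally fill in $\mb{Q}$ over GF(2).

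Key idea: The column space of $\mb{U}$ over GF(2) is a subspace $S$ of $\{0,1\}^d$ of dimension at most $k$. Each column $\mb{a}_j$ of $\mb{A}$ is close to some vector in $S$ — specifically, $\mb{a}_j$ differs from $(\mb{U}\mb{v}_j)$ in $c_j$ coordinates, where $\sum_j c_j = \OPT_k$. Now if we pick $k$ columns of $\mb{A}$ whose "projections" onto... hmm, but there's no projection. Let me reconsider.

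Alternative: think of it combinatorially. Suppose we pick columns $j_1,\dots,j_k$ of $\mb{A}$. Write $\mb{a}_{j_i} = \mb{U}\mb{v}_{j_i} + \mb{e}_{j_i}$ where $\mb{e}_{j_i}$ is the error vector (over GF(2)), with $\|\mb{e}_{j_i}\|_0 = c_{j_i}$. If the vectors $\mb{v}_{j_1},\dots,\mb{v}_{j_k}$ form a basis of $\{0,1\}^k$ (i.e., the corresponding submatrix of $\mb{V}$ is invertible over GF(2)), then $\mb{U} = \mb{P}_A' \mb{W}$ where $\mb{P}_A'$ consists of the columns $\mb{U}\mb{v}_{j_i}$ (the "clean" versions) and $\mb{W}$ is the inverse. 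But we only have access to the noisy $\mb{a}_{j_i}$. So $\mb{P}_A = \mb{P}_A' + \mb{E}$ where $\mb{E}$ has columns $\mb{e}_{j_i}$. Then for any column $\mb{a}_j$ of $\mb{A}$: the optimal $\mb{Q}$ gives error at most $\|\mb{a}_j - \mb{P}_A \mb{q}\|_0$ for the particular choice $\mb{q} = \mb{W}\mb{v}_j$ (the coefficient that would reconstruct $\mb{U}\mb{v}_j$ from the clean columns). We get $\mb{a}_j - \mb{P}_A \mb{W}\mb{v}_j = (\mb{U}\mb{v}_j + \mb{e}_j) - (\mb{P}_A' + \mb{E})\mb{W}\mb{v}_j = \mb{e}_j - \mb{E}\mb{W}\mb{v}_j$ since $\mb{P}_A'\mb{W} = \mb{U}$. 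And $\mb{E}\mb{W}\mb{v}_j = \sum_i (\mb{W}\mb{v}_j)_i \mb{e}_{j_i}$, a GF(2)-sum of a subset of the error vectors $\mb{e}_{j_i}$. So the error on column $j$ is at most $c_j + \sum_{i: (\mb{W}\mb{v}_j)_i = 1} c_{j_i} \le c_j + \sum_{i=1}^k c_{j_i}$.

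Summing over all $j$: total error $\le \OPT_k + n\sum_{i=1}^k c_{j_i}$. That's way too weak — we need to be much cleverer. The point must be to choose the $k$ columns to have *small* error $c_{j_i}$, and more importantly to use a randomized/averaging argument over the choice of basis. Here's the refined plan:

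\textbf{Step 1 (Reduction via an invertible submatrix of $\mb{V}$).} WLOG $\mb{V}$ has rank exactly $k$ over GF(2); then some $k$ columns of $\mb{V}$ are invertible. More strongly, I would argue that for the averaging to work we want to consider *all* ways of choosing such a basis, or a random one, weighted appropriately. Establish: for any choice of columns $\{j_1,\dots,j_k\}$ with invertible $\mb{V}$-submatrix, and writing $\mb{W}$ for the inverse, the CSS error is at most $\sum_j \bigl(c_j + \sum_{i: (\mb{W}\mb{v}_j)_i=1} c_{j_i}\bigr)$.

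\textbf{Step 2 (Averaging over a uniformly random invertible basis).} The key combinatorial fact I would aim to prove: when $\{j_1,\dots,j_k\}$ is chosen so that $(\mb{W}\mb{v}_j)$ is "uniform-ish" — e.g., if we pick a uniformly random invertible $k\times k$ GF(2) submatrix, or better, observe that summing over coordinate reorderings, each error vector $\mb{e}_{j_i}$ gets included in the reconstruction of $\mb{a}_j$ with probability close to $\tfrac12$ (since a nonzero vector $\mb{W}\mb{v}_j$ over GF(2) has about half its coordinates equal to $1$; precisely, a uniformly random nonzero vector in $\{0,1\}^k$ has $i$-th coordinate $1$ with probability $\tfrac{2^{k-1}}{2^k-1}$). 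This is exactly where the factor $\tfrac{k}{2} + \tfrac{k}{2(2^k-1)} = \tfrac{k \cdot 2^{k-1}}{2^k-1}$ comes from: in expectation the extra error on column $j$ is $\frac{2^{k-1}}{2^k-1}\sum_{i=1}^k c_{j_i}$, plus the $c_j$ term itself, and one then needs the chosen columns $j_1,\dots,j_k$ to be among those with smallest $c$, so that $\sum_i c_{j_i} \le \tfrac{k}{n}\OPT_k$ on average — no wait, that gives $k$ not the refined constant. Let me restate: one shows $\mathbb{E}[\text{error}] \le \OPT_k + \frac{2^{k-1}}{2^k-1}\sum_{i=1}^k \mathbb{E}[c_{j_i}]$ and then bounds $\sum_{i=1}^k \mathbb{E}[c_{j_i}] \le \OPT_k$ by choosing the sampling distribution so that the selected columns are biased toward small error (or: note $\mb{a}_j = \mb{U}\mb{v}_j + \mb{e}_j$ and when $j$ itself is one of the selected columns the error is $0$, handled separately). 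Combining, $\mathbb{E}[\text{error}] \le \bigl(1 + \tfrac{k\cdot 2^{k-1}}{2^k-1}\bigr)\OPT_k = \bigl(\tfrac{k}{2}+1+\tfrac{k}{2(2^k-1)}\bigr)\OPT_k$, so some choice achieves the bound.

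\textbf{Step 3 (Handling rank deficiency and the selected columns themselves).} If $\mathrm{rank}(\mb{V}) = r < k$, replace $k$ by $r$ in the above — the bound only improves. Also, the $r$ columns we select contribute their own errors $c_{j_i}$ to $\OPT_k$, and on those columns the CSS error is $0$, which should be used to tighten the count; care is needed that we don't double count. I would formalize the random selection as: choose an ordered basis $(j_1,\dots,j_r)$ of the column space of $\mb{V}$ uniformly among all such, verify the per-column bound, take expectations.

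\textbf{Main obstacle.} The crux — and the part requiring the most care — is Step 2: setting up the right probability distribution over $k$-subsets of columns so that simultaneously (a) the submatrix of $\mb{V}$ is invertible (so the reconstruction coefficients $\mb{W}\mb{v}_j$ are well-defined), (b) the inclusion probability of each $\mb{e}_{j_i}$ in the reconstruction of a generic column is exactly $\tfrac{2^{k-1}}{2^k-1}$ (this needs the coefficient vectors $\mb{W}\mb{v}_j$ to be uniformly distributed over nonzero vectors as the basis varies — a statement about the transitive action of $GL_k(\mathrm{GF}(2))$ on nonzero vectors), and (c) $\sum_i \mathbb{E}[c_{j_i}]$ is controlled by $\OPT_k$. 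Reconciling (a) and (c) is delicate since restricting to invertible submatrices may force us to pick high-error columns; I expect the resolution is to argue that the column space of $\mb{A}$-columns with small error already spans enough, or to use a weighting. Verifying the asymptotic tightness claim (a matching lower-bound construction, e.g. built from a generic/rigid matrix or a Sylvester-type Hadamard pattern over GF(2)) is a separate matter I would address afterward, not as part of this upper-bound proof.
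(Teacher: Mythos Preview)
Your framework in Step~1 is correct and matches the paper's base case: writing $\mb{a}_{j_i}=\mb{U}\mb{b}_i+\mb{e}_{j_i}$ with $\mb{b}_i:=\mb{v}_{j_i}$ and bounding the per-column error by $c_j+\sum_{i\in I_{\mb{c}}}c_{j_i}$ (where $I_{\mb{c}}$ records which basis vectors appear in the expansion of $\mb{v}_j=\mb{c}$) is exactly the inequality the paper establishes first. You are also right that the constant $\tfrac{k\cdot2^{k-1}}{2^k-1}$ reflects the fact that a nonzero vector in $\{0,1\}^k$, under a random change of basis, has each coordinate equal to~$1$ with probability $\tfrac{2^{k-1}}{2^k-1}$.

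The genuine gap is Step~2, and you have already put your finger on it. Uniform averaging over invertible submatrices (or over ordered bases of $\{0,1\}^k$) does \emph{not} yield the bound: the quantity you must control is $\mathbb{E}\bigl[\sum_i c_{j_i}\,N_i\bigr]$ where $N_i$ is the number of columns whose expansion uses $\mb{b}_i$, and under uniform sampling $c_{j_i}$ (or its upper bound $M_{\mb{b}_i}=L_{\mb{b}_i}/n_{\mb{b}_i}$) is badly correlated with $N_i$. Worse, if some nonzero $\mb{c}$ has $n_{\mb{c}}=0$ there is no column with $\mb{v}_j=\mb{c}$ to select, so the basis cannot be uniform over $GL_k(\gf(2))$ at all. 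Your sentence ``$\sum_i\mathbb{E}[c_{j_i}]\le\OPT_k$'' is the desired conclusion, not an argument; nothing in the proposal justifies it.

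The paper's resolution is precisely the ``weighting'' you speculate about, but the specific scheme is the heart of the proof and is not obvious. Two ingredients differ from your sketch. First, the paper does not insist that the chosen columns have $\mb{v}_{j_i}=\mb{b}_i$; it takes the \emph{nearest neighbour} in $\mb{A}$ of $\mb{U}\mb{b}_i$, whose error is bounded by the average $M_{\mb{b}_i}=L_{\mb{b}_i}/n_{\mb{b}_i}$. Second --- and this is the key device --- the averaging over bases is done with weights proportional to $n_{\mb{b}}$, so that $n_{\mb{b}}M_{\mb{b}}=L_{\mb{b}}$ and the troublesome $M$-terms collapse to $L$-terms summing to $\OPT_k$. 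This cannot be done in one shot: the paper runs an induction on $r=0,1,\dots,k$, at each step freeing one more basis vector and averaging in \emph{two layers} --- within each coset of $\mathrm{span}(\mb{b}_1,\dots,\mb{b}_{k-r-1})$ with weights $\propto n_{\mb{b}}$, then across cosets with weights $\propto Z(\mb{p})/\bigl(\sum_j Z(\mb{p}_j)-(2\lambda_r-1)Z(\mb{p})\bigr)$, where a Cauchy--Schwarz step (Lemma~\ref{Lemma:Sum_Ineq}) produces exactly the increment $\lambda_{r+1}-\lambda_r$. The additive bound (Theorem~\ref{Thm:Main_Techincal}) tracks this through the induction and only becomes multiplicative at $r=k$. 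Your proposal does not contain this mechanism, and without it the argument does not close.
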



Moreover, we show that the approximation ratio $\left(\frac{k}{2}+1+\frac{k}{2(2^k-1)}\right)$ is asymptotically tight.\\

\begin{theorem}\label{Thm:Lower_Bound}
In the $\gf(2)$ model, for every $k \ge 1$ and every $\epsilon > 0$, there exists $\mb{A}$ such that
\[
\|\mb{A} - \mb{P}_A \mb{Q} \|^2_F > \left(\frac{k}{2}+1+\frac{k}{2(2^k-1)} - \epsilon \right)\cdot \mathrm{OPT}_k,
\]
for all $\mb{P}_A, \mb{Q}$, where $\mb{P}_A$ are formed by $k$ columns of $\mb{A}$.
\end{theorem}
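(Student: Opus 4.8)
The plan is to witness the bound by an explicit family of matrices, one for each $k$ and $\epsilon$, and to first recast everything combinatorially. Since $\mb{A}$ and any $\mb{U}\mb{V}$ are binary, $\norm{\mb{A}-\mb{U}\mb{V}}_F^2$ just counts disagreeing entries, so the cost of the CSS solution using the columns in a set $S$ equals $\sum_j\min_{\mb{v}\in\operatorname{span}(S)}\abs{\mb{a}_j\oplus\mb{v}}$, where $\operatorname{span}(S)$ is the $\gf(2)$-span of those columns and $\abs{\cdot}$ denotes Hamming weight (the optimal $\mb{Q}$ reconstructs each $\mb{a}_j$ by a nearest element of $\operatorname{span}(S)$); likewise $\OPT_k=\min_{\dim W\le k}\sum_j\min_{\mb{v}\in W}\abs{\mb{a}_j\oplus\mb{v}}$. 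Thus I need a matrix for which some $k$-dimensional subspace is very close to every column while no span of $k$ actual columns is.

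The construction I would use: fix a large integer $N$ and a constant $B=B(k)$ to be pinned down later; index the columns by pairs $(\mb{t},c)$ with $\mb{t}\in\gf(2)^k\setminus\{\mathbf{0}\}$ and $c\in\{1,\dots,N\}$, and split the $d$ rows into $k$ ``label'' rows, $k$ ``blocks'' $\beta_1,\dots,\beta_k$ of $B$ rows each, and one private row per column. Put $\mb{g}_i:=\mb{e}_i+\mb{1}_{\beta_i}$ (label bit $i$ together with all of block $i$) and $\mb{w}_{\mb{t}}:=\bigoplus_{i:\,t_i=1}\mb{g}_i$, so that $\mb{t}\mapsto\mb{w}_{\mb{t}}$ is $\gf(2)$-linear and $\abs{\mb{w}_{\mb{t}}}=(B+1)\abs{\mb{t}}$, and define the column indexed by $(\mb{t},c)$ to be $\mb{a}_{\mb{t},c}:=\mb{w}_{\mb{t}}+\mb{e}_{(\mb{t},c)}$ — a codeword of the linear code $\{\mb{w}_{\mb{t}}\}$ flipped in its own private coordinate. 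Taking $W^*:=\operatorname{span}(\mb{g}_1,\dots,\mb{g}_k)$, each column sits at Hamming distance exactly $1$ from $\mb{w}_{\mb{t}}\in W^*$, so $\OPT_k\le N(2^k-1)$.

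Next I would lower bound $\sum_j\min_{\mb{v}\in\operatorname{span}(S)}\abs{\mb{a}_j\oplus\mb{v}}$ over all $k$-subsets $S$. Write $S=\{\mb{a}_{\mb{t}_1,c_1},\dots,\mb{a}_{\mb{t}_k,c_k}\}$; its span is $\{\mb{w}_{\bigoplus_{l\in L}\mb{t}_l}+\sum_{l\in L}\mb{e}_{(\mb{t}_l,c_l)}:L\subseteq[k]\}$, and since the label/block rows and the private rows are disjoint, the distance from a non-selected column $\mb{a}_{\mb{t},c}$ to this span is $\min_{L\subseteq[k]}\big[(B+1)\abs{\mb{t}\oplus\bigoplus_{l\in L}\mb{t}_l}+\abs{L}+1\big]$. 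If the $\mb{t}_l$ do not span $\gf(2)^k$, some residue is never matched and contributes $\ge B+1$ for each of its $N$ columns, so a suitably large constant $B$ makes such a selection cost more than any spanning one. If the $\mb{t}_l$ form a basis, each $\mb{t}$ has a unique $L(\mb{t})$ with $\bigoplus_{l\in L(\mb{t})}\mb{t}_l=\mb{t}$ and, for $B$ large, the inner minimum is attained there with value $\abs{L(\mb{t})}+1$; hence $S$ costs $\sum_{\mb{t}\ne\mathbf{0}}N(\abs{L(\mb{t})}+1)-2k$ (the $-2k$ since each of the $k$ selected columns lies in the span, costing $0$ instead of $2$). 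As $\mb{t}\mapsto L(\mb{t})$ is a bijection from nonzero residues onto nonempty subsets of $[k]$, $\sum_{\mb{t}\ne\mathbf{0}}\abs{L(\mb{t})}=\sum_{\emptyset\ne L\subseteq[k]}\abs{L}=k2^{k-1}$, so every $k$-subset costs at least $N\big[(2^k-1)+k2^{k-1}\big]-2k$. Dividing by $N(2^k-1)\ge\OPT_k$ gives ratio at least $\tfrac{(2^k-1)+k2^{k-1}}{2^k-1}-\tfrac{2k}{N(2^k-1)}=\tfrac{k}{2}+1+\tfrac{k}{2(2^k-1)}-\tfrac{2k}{N(2^k-1)}$, and choosing $N$ large in terms of $k,\epsilon$ finishes it. (With Theorem~\ref{Thm:Main} this also pins $\OPT_k=N(2^k-1)-o(N)$, so the example is exactly extremal.)

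The hard part is getting the construction right rather than the arithmetic: once it is in place the constant drops out of the identity $\sum_{\emptyset\ne L\subseteq[k]}\abs{L}=k2^{k-1}$. The ``blocks'' $\beta_i$ are the essential device — they inflate $\abs{\mb{w}_{\mb{t}}}$ so that any column selection is forced to pay the full price $\abs{L}+1$ to reach the residue $\bigoplus_{l\in L}\mb{t}_l$, rather than exploiting $\gf(2)$-linearity to approximate every residue cheaply. Without them (a single-bit perturbation) the construction collapses: a cleverly chosen basis of residues would let CSS reconstruct every residue at cost $2$, yielding only ratio $\to2$ for $k=2$ instead of $\tfrac{7}{3}$. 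The one real computation is to quantify how large $B$ must be so that (i) non-basis selections are strictly worse and (ii) the minimum over $L$ is attained at $L(\mb{t})$.
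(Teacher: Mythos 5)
Your construction is essentially the same as the paper's: a rank-$k$ matrix whose basis vectors are indicator vectors of large disjoint blocks (the paper uses $\mb{L}\mb{R}$ with block size $p=n/k$, you use $\mb{g}_i=\mb{e}_i+\mb{1}_{\beta_i}$ with block size $B$), with every nonzero $\gf(2)$-combination appearing as a column $q$ (resp.\ $N$) times, perturbed by a single private bit per column (the paper adds $\mb{I}_n$, you append fresh private coordinates). The analysis is also the same: split into spanning vs.\ non-spanning selections, observe that each reconstructed column is forced to carry $\abs{L(\mb{t})}+1$ private-bit errors, and close with the identity $\sum_{\emptyset\neq L\subseteq[k]}\abs{L}=k2^{k-1}$; your use of disjoint private rows merely removes the paper's small $\pm k$ correction terms and is a cosmetic cleanup, not a different argument.
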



Below, we give a high level description of the proof of the theorems. Our method uses the structure of $\gf(2)$ and is different to the techniques developed for CSS of real matrices.

Consider the problem given in (\ref{Binary_Low_Rank_Problem}). Throughout this paper, we will call the left matrix $\mb{U}$ the basis matrix, as its column vectors are the basis of the low dimensional space; and the right matrix $\mb{V}$ the coefficient matrix, as its columns contain the linear combination coefficients. Let $\mb{U}$ and $\mb{V}$ be an optimal solution of Eq.(\ref{Binary_Low_Rank_Problem}). Let $\mb{u_1},\ldots,\mb{u_k}$ be the $k$ columns of $\mb{U}$. For each column $\mb{u}_i$ of the optimal basis matrix $\mb{U}$, consider its nearest neighbor among all the columns of $\mb{A}$. Let $\mb{a}_1,\ldots,\mb{a}_n$ be the $n$ columns of $\mb{A}$. Denote by $\mb{a}_{(\mb{u}_i)}$ the nearest neighbor column of $\mb{u}_i$ in $\mb{A}$. Given an optimal basis matrix $\mb{U}$, we thus have a matrix $\mb{A}_{(\mb{U})}:=(\mb{a}_{(\mb{u}_1)},\ldots,\mb{a}_{(\mb{u}_k)})$, consisting of columns of $\mb{A}$. Note that the optimal solution of Eq.(\ref{Binary_Low_Rank_Problem}) is not unique. In fact, fixing an optimal basis matrix $\mb{U}$, for every matrix $\mb{B}=(\mb{b}_1,\ldots,\mb{b}_k)$, $\mb{b}_i \in \{0,1\}^k$, if the rank of $\mb{B}$ is $k$ over $\mathrm{GF}(2)$, then $(\mb{UB}, \mb{B}^{-1}\mb{V})$ must be an optimal solution also. (Throughout this section, matrix inverse and matrix rank are all over $\gf(2)$). Each optimal basis matrix $\mb{UB}$ induces a nearest neighbor matrix $\mb{A}_{(\mb{UB})}$. We will show that there must exist a rank $k$ matrix $\mb{B}$ such that the induced nearest neighbor matrix $\mb{A}_{(\mb{UB})}$, which when used as basis matrix, achieves an approximation error at most $(\frac{k}{2}+1+\frac{k}{2(2^k-1)})$ times that of the optimal solution $(\mb{UB},\mb{B}^{-1}\mb{V})$. Let $\mathrm{Err}(\mb{b}_1,\ldots,\mb{b}_k)$ be the approximation error associated with the basis matrix $\mb{A}_{(\mb{UB})}$ for $\mb{B}=(\mb{b}_1,\ldots,\mb{b}_k)$. Our goal is to bound the following.
\begin{equation}\label{Def_Err}
\min_{\mb{b}_1,\ldots,\mb{b}_k}\mathrm{Err}(\mb{b}_1,\ldots,\mb{b}_k),
\end{equation}
where $\mb{b}_i \in \{0,1\}^k$ for all $i \in [k]$.

Directly bounding Eq.(\ref{Def_Err}) is prohibitive. The approach we take is to consider a sequence of $k+1$ error minimization problems. For the $r$th ($0 \le r \le k$) minimization, we only optimize $r$ vectors among $\mb{b}_1,\ldots,\mb{b}_k$ and keep the other $k-r$ vectors fixed. Given $\mb{b}_1,\ldots,\mb{b}_k$, let
\begin{eqnarray}
\mathrm{Err}^{(0)}(\mb{b}_1,\ldots,\mb{b}_k) &:=& \mathrm{Err}(\mb{b}_1,\ldots,\mb{b}_k),\\
\mathrm{Err}^{(r)}(\mb{b}_1,\ldots,\mb{b}_{k-r}) &:=& \min_{\mb{b} \in \{0,1\}^k} \\& & \mathrm{Err}^{(r-1)}(\mb{b}_1,\ldots,\mb{b}_{k-r},\mb{b}), \nonumber \\
\mathrm{Err}^{(k)}() &:=& \min_{\mb{b} \in \{0,1\}^k} \mathrm{Err}^{(k-1)}(\mb{b}),
\end{eqnarray}
Then $\mathrm{Err}^{(k)}()$ is equivalent to Eq.(\ref{Def_Err}).

Although the final goal is to bound the \emph{ratio} between $\mathrm{Err}^{(k)}()$ and the error of the optimal solution of Eq.(\ref{Binary_Low_Rank_Problem}), we instead prove \emph{additive} bounds for $\mathrm{Err}^{(r)}(\mb{b}_1,\ldots,\mb{b}_{k-r})$ for all $0\le r \le k$. To be more precise, let $\mathrm{OPT}_k$ be the error of the optimal solution of Eq.(\ref{Binary_Low_Rank_Problem}), we will show that $\mathrm{Err}^{(r)}(\mb{b}_1,\ldots,\mb{b}_{k-r})$ is bounded by $\mathrm{OPT}_k$ plus a term depending on $r$ and $\mb{b}_1,\ldots,\mb{b}_{k-r}$ (Theorem \ref{Thm:Main_Techincal}). The point is that when $r=k$, this additive bound becomes a multiplicative bound with respect to $\mathrm{OPT}_k$ and is the ratio we want. The reason for introducing $\mathrm{Err}^{(0)},\ldots,\mathrm{Err}^{(k-1)}$ is that we need to use the relation between $\mathrm{Err}^{(r)}$ and $\mathrm{Err}^{(r-1)}$ to prove the bound. Actually the additive bound is proved by mathematical induction on $r$.

Although the relation of $\mathrm{Err}^{(r)}$ and $\mathrm{Err}^{(r-1)}$ is
\[
\mathrm{Err}^{(r)}(\mb{b}_1,\ldots,\mb{b}_{k-r}) = \min_{\mb{b}} \mathrm{Err}^{(r-1)}(\mb{b}_1,\ldots,\mb{b}_{k-r},\mb{b}),
\]
directly optimizing $\mb{b}$ is very difficult. Our idea is to use \emph{weighted averaging}. Since for each $\mb{b} \in \{0,1\}^k$,
\[
\mathrm{Err}^{(r)}(\mb{b}_1,\ldots,\mb{b}_{k-r}) \le \mathrm{Err}^{(r-1)}(\mb{b}_1,\ldots,\mb{b}_{k-r},\mb{b}),
\]
we have that for any set of weights $w_{\mb{b}}$ such that $w_{\mb{b}} \ge 0$ and $\sum_{\mb{b}} w_{\mb{b}} = 1$,
\[
\mathrm{Err}^{(r)}(\mb{b}_1,\ldots,\mb{b}_{k-r}) \le \sum_{\mb{b}} w_{\mb{b}} \mathrm{Err}^{(r-1)}(\mb{b}_1,\ldots,\mb{b}_{k-r},\mb{b}).
\]
We carefully choose the weights $w_{\mb{b}}$ to get a small upper bound. We conduct two layers of weighted averaging. Consider the quotient space $\mathrm{GF}(2)^k/\mathrm{span}(\mb{b}_1,\ldots,\mb{b}_{k-r})$ and the coset $[\mb{b}]:= \mb{b}+\mathrm{span}(\mb{b}_1,\ldots,\mb{b}_{k-r})$. In the first layer, we perform weighted averaging within each coset $[\mb{b}]$, and obtain a bound for $\mathrm{Err}^{(r)}$ depending on the coset. In the second layer we average over all cosets using another set of weights. We need different rules to set the weights in the two layers. Within a coset $[\mb{b}]$, we choose the weights as follows. Let $\mb{U}, \mb{V}$ be the already fixed optimal solution of Eq.(\ref{Binary_Low_Rank_Problem}). For each $\mb{c} \in [\mb{b}]$, let $n_{\mb{c}}$ denote the number of columns of $\mb{V}$ that are equal to $\mb{c}$. The weight we assign to $\mb{c}$ is proportional to $n_{\mb{c}}$. For the second layer, let
\[
n_{[\mb{b}]}:=\sum_{\mb{c} \in [\mb{b}]} n_{\mb{c}}
\]
be the total number of columns of $\mb{V}$ that belong to the coset $[\mb{b}]$. We assign the weight to a coset $[\mb{b}]$ as follows. If
\[
[\mb{b}] = \mathrm{span}(\mb{b}_1,\ldots,\mb{b}_{k-r}),
\]
then the weight is set to be zero. Otherwise, we assign the weight to $[\mb{b}]$ proportional to
\[
\frac{n_{[\mb{b}]}}{\sum_{[\mb{b}]} n_{[\mb{b}]} - \lambda n_{[\mb{b}]}},
\] 
where $\lambda$ is a constant depending on $r$. Combining the two layers of averaging we obtain the additive bound and that implies the desired approximation ratio. This finishes the description of the proof of Theorem \ref{Thm:Main}.

The lower bound in Theorem \ref{Thm:Lower_Bound} is proved by explicit construction. We construct a matrix which is approximately low rank in the sense that it is the product of two rank-$k$ matrix plus a very sparse matrix. The key ingredient of the proof is the construction of the two rank-$k$ matrices, which have special structures so that the approximation ratio of column subset selection cannot be smaller than $\frac{k}{2}+1+\frac{k}{2(2^k-1)}$ significantly.


The additive bounds are stated in Theorem \ref{Thm:Main_Techincal}, which is technical. Below we first describe the notions that will appear in Theorem \ref{Thm:Main_Techincal}. These notions will also be frequently used in the proof as well. For clarity, we list the notions in two tables.

\begin{definition}\label{Def:Notations}
For $1 \le r \le k$ and linear independent vectors $\mb{b}_{1} , \ldots
,\mb{b}_{r}$ in $\{0,1\}^{k}$:\\
\begin{table}[h]
  \centering
  \begin{tabular}{|l|l|}
    \hline
    Definition & Explanation\\
    \hline
    $\mathrm{span}^{c} (\mb{b}_{1} , \ldots ,\mb{b}_{r} ) := \{0,1\}^{k}$ & Complement of \\ $\setminus \mathrm{span} (\mb{b}_{1} , \ldots
    ,\mb{b}_{r} )$ & $\mathrm{span} (\mb{b}_{1}, \ldots ,\mb{b}_{r} )$.\\
    \hline
    $\mathrm{span}^{\backslash i} (\mb{b}_{1} , \ldots ,\mb{b}_{r} ) := \mathrm{span} ( $ & Span of all vectors except \\
    $\mb{b}_{1} , \ldots ,\mb{b}_{i-1}
    ,\mb{b}_{i+1} , \ldots ,\mb{b}_{r} )$ & the $i$th.\\
    \hline
  \end{tabular}
  \caption{Definitions for vector spans.}
\end{table}

Let $\mb{A}$ be the matrix to be approximated and $(\mb{U}, \mb{V})$ be a fixed optimal solution of the problem in Eq.(\ref{Binary_Low_Rank_Problem}). For $\mb{u} \in \zo^d$, $\mb{c} \in \{0,1\}^k$, and $\mathcal{X}
\subset \{0,1\}^{k}$:
\begin{table}[h]
  \centering
  \begin{tabular}{|l|l|}
    \hline
    Definition & Explanation\\
    \hline
    \multirow{4}{*}{$\mb{a}_{(\mb{u})}$} & The nearest neighbor of $\mb{u}$ \\ & among the columns of $\mb{A}$ (If \\&more than one nearest neigh-\\&bor, choose one arbitrarily.) \\
    \hline
    \multirow{2}{*}{$\mc{J}_{\mb{c}}:=\{j \in [n] :\mb{V}_{j} =\mb{c}\}$}  &
    The set of columns of $\mb{V}$ \\ & that are equal
    to vector $\mb{c}$.\\
    \hline
    \multirow{2}{*}{$n_{\mb{c}} := | \mc{J}_{\mb{c}} |$} & The
    number of columns of \\& $\mb{V}$ that are equal to $\mb{c}$.\\
    \hline
    \multirow{2}{*}{$L_{\mb{c}} := \sum_{j \in \mc{J}_{\mb{c}}} |\mb{a}_{j}
    -\mb{U}\mb{c}|$} & The total approximation \\& error of those columns in $\mc{J}_{\mb{c}}$.\\
    \hline
    \multirow{2}{*}{$N_{\mathcal{X}} := \sum_{\mb{c} \in \mathcal{X}} n_{\mb{c}}$} & The total number of columns \\& of $\mb{V}$ that belong to set
    $\mathcal{X}$.\\
    \hline
    \multirow{2}{*}{$M_{\mb{c}} = \left\{ \begin{array}{l}
     \frac{L_{\mb{c}}}{n_{\mb{c}}} \hspace{2em} n_{\mb{c}}>0\\
     d \hspace{2em} n_{\mb{c}} = 0
   \end{array}\right._{}$} & Upper bound of the average \\&
    error of the columns in $\mc{J}_{\mb{c}}$.\\
   \hline
  \end{tabular}
  \caption{Definitions for errors and nearest neighbors}
\end{table}

\end{definition}



Now we can state the additive bounds.

\begin{theorem}\label{Thm:Main_Techincal}
Let $\mb{b}_1,\ldots,\mb{b}_k$ be $k$ linear independent vectors in $\{0,1\}^k$. Then for each $0 \le r \le k$,

\begin{multline}\label{Additive_Bound}
\mathrm{Err}^{r}(\mb{b}_1,\ldots,\mb{b}_{k-r}) \le \mathrm{OPT}_k + \lambda_r \cdot \sum_{\mb{c} \in \mathrm{span}^c(\mb{b}_1,\ldots,\mb{b}_{k-r})}L_{\mb{c}} \\+ \sum_{i=1}^{k-r}f_i(\mb{b}_1,\ldots,\mb{b}_{k-r})M_{\mb{b}_i},
\end{multline}
where $M_{\mb{b}_i}$ has been defined in Definition \ref{Def:Notations}, and
  \[ \lambda_r = \left\{ \begin{array}{l}
       0 \hspace{7em} r = 0\\
       \frac{r}{2} \left( 1 + \frac{1}{2^r - 1} \right), \hspace{1em} 1 \leq r
       \leq k
     \end{array} \right._{} \]
and
\begin{equation}
f_i(\mb{b}_1,\ldots,\mb{b}_{k-r}) = N_{\mathcal{X}} + \frac{1}{2}N_{\mathcal{Y}},
\end{equation}
here
\[
\mathcal{X} = \mb{b}_i + \mathrm{span}^{\backslash i}(\mb{b}_1,\ldots,\mb{b}_{k-r}),
\]
and
\[
\mathcal{Y} = \mathrm{span}^c(\mb{b}_1,\ldots,\mb{b}_{k-r}).
\]
\end{theorem}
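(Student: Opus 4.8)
The plan is to prove the additive bound (\ref{Additive_Bound}) by induction on $r$, using the two-layer weighted averaging scheme described above. The base case $r=0$ is essentially a definitional identity: $\mathrm{Err}^{(0)}(\mb{b}_1,\ldots,\mb{b}_k)$ is the approximation error obtained by using the nearest-neighbor matrix $\mb{A}_{(\mb{UB})}$ as basis, and I would show directly that this is at most $\mathrm{OPT}_k$ plus $\sum_{i=1}^k f_i M_{\mb{b}_i}$, where the $f_i$ term accounts for the extra cost of replacing each optimal basis column $\mb{U}\mb{b}_i$ with its nearest neighbor $\mb{a}_{(\mb{U}\mb{b}_i)}$. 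Here the triangle inequality for Hamming distance is the workhorse: for a column $\mb{a}_j$ with coefficient vector $\mb{V}_j = \mb{c}$, rewriting $\mb{c}$ in the basis $\{\mb{b}_1,\ldots,\mb{b}_k\}$ tells us which basis columns are "touched," and each touched column contributes at most $\abs{\mb{a}_{(\mb{U}\mb{b}_i)} - \mb{U}\mb{b}_i} \le M_{\mb{b}_i}$ (by definition of $M$ and of nearest neighbor) to the error; counting how many columns of $\mb{V}$ touch $\mb{b}_i$ gives exactly the combinatorial coefficient $N_{\mathcal{X}} + \tfrac12 N_{\mathcal{Y}}$ once one checks that a random column in $\mathrm{span}^c$ uses $\mb{b}_i$ with probability $\tfrac12$ while a column in the coset $\mathcal{X} = \mb{b}_i + \mathrm{span}^{\backslash i}$ always does.

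For the inductive step, assume the bound holds for $r-1$ and all choices of $k - r + 1$ independent vectors, and fix independent $\mb{b}_1,\ldots,\mb{b}_{k-r}$. Using $\mathrm{Err}^{(r)} \le \sum_{\mb{b}} w_{\mb{b}} \, \mathrm{Err}^{(r-1)}(\mb{b}_1,\ldots,\mb{b}_{k-r},\mb{b})$ for any probability weights $w_{\mb{b}}$ supported on $\mb{b}$ with $\mb{b}_1,\ldots,\mb{b}_{k-r},\mb{b}$ independent (i.e.\ $\mb{b} \in \mathrm{span}^c(\mb{b}_1,\ldots,\mb{b}_{k-r})$), I would substitute the inductive bound for each term. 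This produces three contributions: the $\mathrm{OPT}_k$ carries through; the $\lambda_{r-1} \sum_{\mb{c} \in \mathrm{span}^c(\ldots,\mb{b})} L_{\mb{c}}$ term must be averaged over $\mb{b}$; and the $\sum_{i=1}^{k-r} f_i M_{\mb{b}_i}$ term plus the new $f_{\text{(new)}} M_{\mb{b}}$ term must be controlled. The key point is the choice of weights. Within a coset $[\mb{b}]$, weighting $\mb{c}$ proportionally to $n_{\mb{c}}$ converts $\sum_{\mb{c}} n_{\mb{c}} M_{\mb{c}}$-type quantities back into $\sum_{\mb{c}} L_{\mb{c}}$, eliminating the $M_{\mb{b}}$ term in favor of an $L$-term; across cosets, the weight proportional to $n_{[\mb{b}]} / (\sum n_{[\mb{b}']} - \lambda n_{[\mb{b}]})$ is engineered precisely so that the recursion $\lambda_{r-1} \mapsto \lambda_r$ closes with $\lambda_r = \tfrac{r}{2}(1 + \tfrac{1}{2^r-1})$. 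I would verify the recurrence $\lambda_r = \tfrac12(1 + \lambda_{r-1})\cdot(\text{geometric factor})$ by a direct computation, checking it reduces to the stated closed form, and separately check that the combinatorial coefficients $f_i$ transform correctly when $\mb{b}_{k-r+1} = \mb{b}$ is adjoined (this uses the identity relating $N_{\mathcal{X}}$, $N_{\mathcal{Y}}$ for the $(k-r)$-dimensional span to those for the $(k-r+1)$-dimensional span, again via the "probability $\tfrac12$ of using a given basis vector" principle).

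The main obstacle I anticipate is the bookkeeping in the cross-coset averaging: one must show that after summing $w_{[\mb{b}]} \cdot \lambda_{r-1} \cdot (\text{sum of } L_{\mb{c}} \text{ over the enlarged span}^c)$ over all cosets, the coefficient of each $L_{\mb{c}}$ (for $\mb{c} \in \mathrm{span}^c(\mb{b}_1,\ldots,\mb{b}_{k-r})$) is at most $\lambda_r$, and simultaneously the residual $M_{\mb{b}}$ contributions are absorbed. The delicate part is that $\mathrm{span}^c(\mb{b}_1,\ldots,\mb{b}_{k-r},\mb{b})$ depends on $\mb{b}$, so $L_{\mb{c}}$ for a fixed $\mb{c}$ appears in the inner sum for some choices of $\mb{b}$ but not others — specifically exactly when $\mb{c} \notin \mathrm{span}(\mb{b}_1,\ldots,\mb{b}_{k-r},\mb{b})$ — and one needs a counting lemma on how many $(k-r+1)$-dimensional subspaces of a given space avoid a fixed vector, which is where the $\frac{1}{2^r-1}$ factor enters. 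I would isolate this as a standalone combinatorial lemma about subspaces of $\mathrm{GF}(2)^k$ before plugging it into the induction, so that the averaging computation itself becomes a routine (if lengthy) manipulation.
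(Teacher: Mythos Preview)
Your overall strategy---induction on $r$, base case via the triangle inequality, and the inductive step via two-layer weighted averaging with exactly the weights you name (first layer $\propto n_{\mb c}$ within a coset, second layer $\propto n_{[\mb b]}/(\sum n_{[\mb b']}-\lambda\, n_{[\mb b]})$ across cosets)---is precisely the paper's proof. Two of the mechanisms you anticipate, however, are not the ones that actually make the argument close.

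First, the $\tfrac12 N_{\mathcal Y}$ in $f_i$ is absent from the base case (there $\mathcal Y=\mathrm{span}^c(\mb b_1,\dots,\mb b_k)=\emptyset$) and is not a ``probability $\tfrac12$'' symmetry. It appears in the inductive step, when one shows that the within-coset average of $f_i(\mb b_1,\dots,\mb b_{k-r},\mb b)$ over $\mb b$ is bounded by $f_i(\mb b_1,\dots,\mb b_{k-r})$: splitting the coset into the two halves $X_1=\mb p+\mathrm{span}^{\backslash i}(\dots)$ and $X_2=\mb p+\mb b_i+\mathrm{span}^{\backslash i}(\dots)$, the cross term $2N_{X_1}N_{X_2}$ is controlled by AM--GM as $\le\tfrac12(N_{X_1}+N_{X_2})^2$, and that $\tfrac12$ is the source of the coefficient. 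Second, no subspace-counting lemma is used for the $\lambda$-recurrence. After the first layer one rewrites $\sum_{\mb c\in\mathrm{span}^c(\dots,\mb p)}L_{\mb c}=\sum_{\mb c\in\mathrm{span}^c(\dots)}L_{\mb c}-\sum_{\mb b\in\mb p+\mathrm{span}(\dots)}L_{\mb b}$, so the bound becomes a $\mb p$-independent part plus a single $\mb p$-dependent term $H(\mb p)$. Your cross-coset weights are chosen so that $w_iH(\mb p_i)$ equals the $L$-sum over coset $i$ \emph{exactly}; hence $(\sum_i w_i)\min_i H(\mb p_i)\le\sum_{\mathrm{span}^c}L_{\mb b}$, and the recurrence closes via the elementary Cauchy--Schwarz consequence $\sum_i a_i/(S-\lambda a_i)\ge n/(n-\lambda)$ applied with $n=2^r-1$ nontrivial cosets and $\lambda=2\lambda_{r-1}-1$, not via any combinatorial count of avoiding subspaces.
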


The formal proof of Theorem \ref{Thm:Main_Techincal} is lengthy and can be found in the supplementary materials. Theorem \ref{Thm:Main} follows from Theorem \ref{Thm:Main_Techincal} immediately.

\begin{proof}[Proof of Theorem \ref{Thm:Main}]

Let $r=k$ in Theorem \ref{Thm:Main_Techincal}. Then the last term in the RHS of Eq.(\ref{Additive_Bound}) vanishes. The second term in the RHS of Eq.(\ref{Additive_Bound}) becomes $\lambda_k \cdot \sum_{\mb{c} \in \{0,1\}^k} L_{\mb{c}}$. Observe that
\[
\sum_{\mb{c} \in \{0,1\}^k} L_{\mb{c}} = \mathrm{OPT}_k,
\]
and
\[
1+\lambda_k = \frac{k}{2}+1+\frac{k}{2(2^k-1)},
\]
the theorem follows.
\end{proof}

\section{Generalized CSS Over Boolean Semiring}\label{Section:Boolean}

It is not difficult to see that the method developed for $\gf(2)$ model in the previous section does not apply to the Boolean model, simply because Boolean semiring does not have a field structure. It turns out that, somewhat surprisingly, CSS is not sufficient to yield a bound relative to the optimal low rank solution in the Boolean model.

Here, we propose a Generalized CSS (GCSS) procedure. In GCSS, instead of using the columns of $\mb{A}$ directly to form $\mb{P}_A$, we perform carefully designed Boolean formulas (bitwise) to a predefined number of columns of $\mb{A}$ to form $\mb{P}_A$.

GCSS is rather involved. To illustrate the ideas, we first give an informal high level description of GCSS. We capture our GCSS by the following framework, which we denote as an
\emph{oblivious basis generation algorithm with advice}. Let $f(k)$
and $g(k)$ be functions of $k$. An oblivious basis generation
scheme with \emph{advice size} $f(k)$ and \emph{column dependence
  size} $g(k)$ operates as follows. On input input $\mb{a} \in
\zo^{f(k)}$ the scheme outputs $k$ Boolean formulas
$\Phi_1,\dots,\Phi_k$ each of $g(k)$ bits. Given $g(k)$ columns
$\mb{a}_{i_1},\dots,\mb{a}_{i_{g(k)}}$ of the matrix $\mb{A}$, the $k$ basis vectors
$\mb{u}_1,\dots,\mb{u}_k$ of $\mb{P}_A$ are constructed as
\[
\mb{u}_j = \Phi_j(\mb{a}_{i_1},\dots,\mb{a}_{i_{g(k)}}),
\]
where the Boolean function
$\Phi_j$ is applied \emph{entrywise}. From such a basis generation scheme we immediately obtain an
approximation result by iterating over all possible selections of
$g(k)$ columns of $\mb{A}$ as well as all possible advice strings $\mb{a} \in
\zo^{f(k)}$. We stress that the amount of information about $\mb{A}$ that can
be supplied to the algorithm using the advice string is
\emph{independent} of the actual size of $\mb{A}$. Our construction of GCSS will have column dependence size $2^k-1$ and advice size
$O(k2^k)$ in which we encode an \emph{ordering} of the given $2^k-1$
columns. This results in an approximation ratio of $2^{k-1}+1$.

To give a concrete description of GCSS, it is more convenient to use sets instead of vectors as the representation. For a column $\mb{a}_i$ of $\mb{A}$, let
\[
\mc{A}_i:= \{j \in [d]: (\mb{a}_{i})_j=1\},
\]
i.e., $\mb{a}_i$ is the characteristic string of $\mc{A}_i$. Similarly, for an optimal solution $(\mb{U,V})$ of the Boolean low rank approximation problem, let
\[
\mc{U}_i:= \{j \in [d]: (\mb{u}_{i})_j=1\},
\]
and
\[
\mc{V}_i:= \{j \in [k]: v_{ij}=1\}.
\]
Thus in this section we will always think of a column of $\mb{A}$, $\mb{U}$ or $\mb{V}$ as a set. Given a set $\mc{S} \subset [k]$, let
\[
\mc{J}_{\mc{S}}:=\{j \in [n]: \mc{V}_j = \mc{S} \},
\]
and $n_{\mc{S}}:= |\mc{J}_{\mc{S}}|$. Using these notions, the Boolean product of $\mb{U}$ and a vector which is the characteristic string of $\mc{S}$ will be denoted by $\mc{U}_{\mc{S}}:= \bigcup_{i \in \mc{S}} \mc{U}_i $. (Abuse the notion a little, we still use $\mc{U}_i$ instead of $\mc{U}_{\{i\}}$ from now on.) Like in the previous section, the nearest neighbor column of $\mc{U}_{\mc{S}}$ in $\mb{A}$ is defined by $\mb{a}_{(\mc{U}_{\mc{S}})}$. As we use set representation in this section, for notational simplicity we let $\mc{D}_{\mc{S}} \subset [d]$ be the set corresponding to this nearest neighbor column $\mb{a}_{(\mc{U}_{\mc{S}})}$, i.e., 
\[
\mc{D}_{\mc{S}} := \{i \in [d]: \mb{a}_{(\mc{U}_{\mc{S}})_i} = 1\}
\]


We are going to construct a rank-$k$ solution $\mc{B}_1,\ldots,\mc{B}_k$, where $\mc{B}_i \subset [d]$ is the set representation of the column of the basis matrix. Once the basis matrix is obtained, the coefficient matrix can be calculated in the same way as in the previous section. The concrete GCSS procedure is described in Algorithm 1.

Now we can state the main result of this section.

\begin{theorem}\label{Thm:Boolean}
GCSS (as described above) has approximation ratio $2^k$ relative to the optimal solution of (\ref{Binary_Low_Rank_Problem}) over Boolean semiring.
\end{theorem}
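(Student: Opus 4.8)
The plan is to carry out, for the Boolean semiring, an analogue of the nearest-neighbour argument of Section~\ref{Section:GF(2)}, but applied to \emph{all} $2^k-1$ non-empty Boolean combinations of the optimal basis columns, and then to reassemble a genuine rank-$k$ basis from those nearest neighbours by Boolean formulas. Fix an optimal solution $(\mb U,\mb V)$ of~(\ref{Binary_Low_Rank_Problem}) over the Boolean semiring, written in set form. For each non-empty $\mc S\subseteq[k]$ let $\mc D_{\mc S}\subseteq[d]$ be the column of $\mb A$ closest in $\abs{\,\cdot\symdiff\cdot\,}$ to $\mc U_{\mc S}=\bigcup_{i\in\mc S}\mc U_i$, and order the non-empty subsets as $\mc S_1,\dots,\mc S_{2^k-1}$ so that $n_{\mc S_1}\le\dots\le n_{\mc S_{2^k-1}}$, putting $\mc S_0=\emptyset$. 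The whole theorem reduces to producing a rank-$k$ basis $\mc B_1,\dots,\mc B_k$, each definable from $\mc D_{\mc S_1},\dots,\mc D_{\mc S_{2^k-1}}$ by a Boolean formula, such that the \emph{decomposition property}
\[
\mc U_{\mc S_\ell}\symdiff\Bigl(\bigcup_{i\in\mc S_\ell}\mc B_i\Bigr)\ \subseteq\ \bigcup_{\ell'\ge\ell}\bigl(\mc U_{\mc S_{\ell'}}\symdiff\mc D_{\mc S_{\ell'}}\bigr)
\]
holds for every $\ell\in[2^k-1]$; feeding such a formula system to the oblivious basis generation framework (with column dependence size $2^k-1$ and advice encoding the ordering) then yields the algorithm, whose output is at least as good as this particular basis because it searches over all $2^k-1$ column selections and all orderings.

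Granting the decomposition property, the ratio bound is a short computation. For $j\in\mc J_{\mc S_\ell}$, the metric triangle inequality for symmetric difference, the decomposition property, and a union bound give $\abs{\mc A_j\symdiff\bigcup_{i\in\mc S_\ell}\mc B_i}\le\abs{\mc A_j\symdiff\mc U_{\mc S_\ell}}+\sum_{\ell'\ge\ell}\abs{\mc U_{\mc S_{\ell'}}\symdiff\mc D_{\mc S_{\ell'}}}$. Summing over $j\in\mc J_{\mc S_\ell}$, then over $\ell\in[2^k-1]$, and adding $\sum_{j\in\mc J_{\mc S_0}}\abs{\mc A_j}$ to both sides (completing the left side to $\mathrm{Err}(\mc B_1,\dots,\mc B_k)$ and the first right-hand term to $\OPT_k$), one gets $\mathrm{Err}(\mc B_1,\dots,\mc B_k)\le\OPT_k+\sum_{\ell=1}^{2^k-1}\sum_{\ell'\ge\ell}n_{\mc S_\ell}\abs{\mc U_{\mc S_{\ell'}}\symdiff\mc D_{\mc S_{\ell'}}}$. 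Since the subsets are sorted, $n_{\mc S_\ell}\le n_{\mc S_{\ell'}}$ whenever $\ell'\ge\ell$, so after swapping the order of summation the double sum is at most $\sum_{\ell'=1}^{2^k-1}\ell'\,n_{\mc S_{\ell'}}\abs{\mc U_{\mc S_{\ell'}}\symdiff\mc D_{\mc S_{\ell'}}}$; and since $\mc D_{\mc S_{\ell'}}$ is a nearest neighbour, $n_{\mc S_{\ell'}}\abs{\mc U_{\mc S_{\ell'}}\symdiff\mc D_{\mc S_{\ell'}}}\le\sum_{j\in\mc J_{\mc S_{\ell'}}}\abs{\mc A_j\symdiff\mc U_{\mc S_{\ell'}}}$. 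Hence the double sum is at most $(2^k-1)\sum_{\ell'}\sum_{j\in\mc J_{\mc S_{\ell'}}}\abs{\mc A_j\symdiff\mc U_{\mc S_{\ell'}}}\le(2^k-1)\OPT_k$, giving $\mathrm{Err}\le 2^k\OPT_k$.

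The substance of the proof, and the main obstacle, is constructing the $\mc B_i$ so that the decomposition property holds. The building blocks are the ``tail intersections'' $\mc E_i^\ell:=\bigcap_{\ell'\ge\ell,\ i\in\mc S_{\ell'}}\mc D_{\mc S_{\ell'}}$, which are monotone in $\ell$ and satisfy $\mc E_i^\ell\subseteq\mc D_{\mc S_{\ell'}}$ for every $\ell'\ge\ell$ with $i\in\mc S_{\ell'}$. The naive choice $\mc B_i=\mc E_i^1$ is too small: for each $\ell$ one needs $\bigcup_{i\in\mc S_\ell}\mc B_i$ to contain $\bigcup_{i\in\mc S_\ell}\mc E_i^\ell$ (which is close to $\mc U_{\mc S_\ell}$ up to error sets of index $\ge\ell$), so one must enlarge $\mc B_i$ from $\mc E_i^1$ towards $\mc E_i^\ell$ using a \emph{single} definition that works simultaneously for all $\ell$. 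This is achieved by adding the correction pieces $\mc F_i^{\ell_1,\ell_2}:=\mc E_i^{\ell_1+1}\setminus\bigl[\bigcup_{i'\in\mc S_{\ell_2}}\mc E_{i'}^{\ell_1}\bigr]$ for $\ell_1<\ell_2$ with $i\in\mc S_{\ell_1}\cap\mc S_{\ell_2}$, i.e.\ $\mc B_i:=\mc E_i^1\cup\bigcup_{\ell_1<\ell_2:\ i\in\mc S_{\ell_1}\cap\mc S_{\ell_2}}\mc F_i^{\ell_1,\ell_2}$. Two lemmas must then be verified by careful set manipulation: (i) for each fixed $\ell$, the pieces with $\ell_1<\ell$ telescope $\mc E_i^1$ exactly up to $\mc E_i^\ell$, so $\bigcup_{i\in\mc S_\ell}\mc B_i=\bigl(\bigcup_{i\in\mc S_\ell}\mc E_i^\ell\bigr)\cup\mc R_\ell$ with $\mc R_\ell=\bigcup_{i\in\mc S_\ell}\bigcup_{\ell\le\ell_1<\ell_2:\ i\in\mc S_{\ell_1}\cap\mc S_{\ell_2}}\mc F_i^{\ell_1,\ell_2}$ containing only ``future'' pieces; and (ii) each $\mc F_i^{\ell_1,\ell_2}\subseteq\bigl(\mc D_{\mc S_{\ell_2}}\setminus\mc U_{\mc S_{\ell_2}}\bigr)\cup\bigcup_{i'\in\mc S_{\ell_2}}\bigcup_{\ell'\ge\ell_1:\ i'\in\mc S_{\ell'}}\bigl(\mc U_{\mc S_{\ell'}}\setminus\mc D_{\mc S_{\ell'}}\bigr)$, which charges every correction piece to error sets $\mc U_{\mc S_{\ell'}}\symdiff\mc D_{\mc S_{\ell'}}$ with $\ell'\ge\ell_1\ge\ell$ — this is precisely why $\mc F$ uses $\mc E_i^{\ell_1+1}$ rather than $\mc E_i^{\ell_1}$, and why the enlargement must be performed incrementally along the chosen ordering. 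Combining (i), (ii) with the elementary comparison between $\bigcup_{i\in\mc S_\ell}\mc E_i^\ell$ and $\mc U_{\mc S_\ell}$ in the symmetric-difference bookkeeping yields the decomposition property, and the theorem follows.
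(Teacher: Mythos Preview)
Your proposal is correct and follows essentially the same approach as the paper: fix the sorted ordering, reduce to the decomposition property, derive the $2^k$ ratio from it via the same nearest-neighbour/summation-swap computation, construct the basis through the tail intersections $\mc E_i^\ell$ and the correction pieces $\mc F_i^{\ell_1,\ell_2}$, and verify the two set-theoretic lemmas corresponding to the paper's Lemma~\ref{ap_Def:R} and Lemma~\ref{ap_UB_F}. One small caution on your description of lemma~(i): the ``telescoping'' of $\mc E_i^1$ up to $\mc E_i^\ell$ does not occur coordinate-wise in $i$ but only at the level of the union $\bigcup_{i\in\mc S_\ell}$, and the paper proves the needed inclusion $\bigcup_{i\in\mc S_\ell}\mc E_i^\ell\subseteq\bigcup_{i\in\mc S_\ell}\mc B_i$ by an induction on $\ell'\le\ell$ rather than a direct telescoping identity; but your stated conclusion matches theirs.
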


We now give the very high level idea of the proof. Fix a bijection
$\pi$ that satisfies $n_{S_1} \leq \dots \leq n_{S_{2^k}-1}$.  By
construction the set $\mc{D}_{S_\ell}$ is the best approximation to
$\mc{U}_{\mc{S}_{\ell}}$ given by a column of $\mb{A}$. Ideally the
sets $\mc{B}_1,\ldots,\mc{B}_k$ should be such that $\bigcup_{i \in
  S_\ell} \mc{B}_i$ is a comparable substitute for all $\ell$. What we
instead will be able to achieve is that for all $\ell \in [2^k-1]$
\begin{equation}\label{Boolean_Decomposition}
\mc{U}_{\mc{S}_\ell} \symdiff \left(\bigcup_{i \in S_\ell} \mc{B}_i \right)
\subseteq  \left(\bigcup_{\substack{\ell' \geq \ell}} \left( \mc{U}_{\mc{S}_{\ell'}} \symdiff \mc{D}_{\mc{S}_{\ell'}} \right)\right)
\end{equation}
where as seen from the algorithm the sets $\mc{B}_i$ are Boolean
combinations of the sets $\mc{D}_{\mc{S}_\ell}$. Intuively, we give
more importance to approximating the columns of $\mb{A}$ from
$\mc{J}_{\mc{S_\ell}}$ as $\ell$ increases. As the sizes
$n_{\mc{S}_\ell}$ of these sets of columns also increase this means
that we can account for the extra cost of possible poor approximation
of the sets $\mc{U}_{\mc{S}_{\ell}}$ for smaller $\ell$ in terms of
the approximation error of the sets $\mc{D}_{\mc{S}_{\ell'}}$ to
$\mc{U}_{\mc{S}_{\ell'}}$ for larger $\ell'\geq\ell$.

Intuitively we should attempt to approximate all the sets
$\mc{D}_{\mc{S}_\ell}$ simultaneously by $\bigcup_{i \in \mc{S}_\ell}
\mc{B}_i$. But since we work over a semiring we will have to work with
under-approximations. So for every $\ell$ we instead approximate the
under-approximation $\bigcup_{i\in \mc{S}_\ell} \mc{E}^\ell_i$ of
$\mc{D}_{\mc{S}_\ell}$. We do this by initially lettning $\mc{B}_i =
\mc{E}^1_i$ and then for each $\ell \in [2^k-1]$ adding
$\left(\bigcup_{i \in \mc{S}_\ell} \mc{E}^\ell_i \right) \setminus
\left(\bigcup_{i \in \mc{S}_\ell} \mc{E}^1_i \right)$ to $\bigcup_{i
  \in S_\ell} \mc{B}_i$. This last step has to be done carefully piece
by piece using the ordering of the sets
$\mc{S}_1,\dots,\mc{S}_{2^k-1}$. In the algorithm this is done using
the sets $\mc{F}^{\ell_1,\ell_2}_i$.

The approximation ratio of GCSS over Boolean semiring is $O(2^k)$, much larger than that of $\gf(2)$.  However we argue that this exponential dependency on $k$ is not an artifact of proof technique, it is inherent to the model.

\begin{algorithm}
	\label{alg:boolean_model}
	\caption{Generalized Column Subset Selection}
	\begin{algorithmic}[1]
		\FOR {all selection of $2^k-1$ column vectors $\mc{A}_{j_1},\mc{A}_{j_2},\ldots,\mc{A}_{j_{2^k-1}}$ in $\mb{A}$}
		\FOR {all bijections $\pi : [2^k-1] \rightarrow (2^{[k]} \setminus \{\emptyset\})$}
		
		\STATE Let $S_\ell = \pi(\ell)$ for $\ell \in [2^k-1]$\\
		\FOR {$i \in [k]$ and $\ell \in [2^k-1]$}
		\STATE Compute \\
		$$\mc{E}^\ell_i := \bigcap\limits_{\substack{\ell' \geq \ell :\\ i \in \mc{S}_{\ell'}}} D_{S_{l}}$$ where $D_\mc{S} = \mc{A}_{j_{\pi^{-1}(\mc{S})}}$ for $\emptyset \ne \mc{S} \subseteq [k]. $
		\ENDFOR
		
		\FOR {$1 \leq \ell_1 < \ell_2 \leq 2^k-1$ such that $i \in \mc{S}_{\ell_1} \cap \mc{S}_{l_2}$}
		\STATE Compute $$
		\mc{F}_i^{\ell_1,\ell_2} := \mc{E}_i^{\ell_1+1} \setminus \left[\bigcup_{i' \in \mc{S}_{\ell_2}} \mc{E}^{\ell_1}_{i'}\right].
		$$
		\ENDFOR
		\FOR{$i \in [k]$}
		\STATE Compute solution vector $\{\mc{B}_1,\mc{B}_2,\ldots,\mc{B}_k\}$by
		$$\mc{B}_i := \mc{E}_i^1 \cup \left(\bigcup_{\substack{\ell_1 < \ell_2:\\i \in \mc{S}_{\ell_1} \cap \mc{S}_{\ell_2}}} \mc{F}_i^{\ell_1,\ell_2} \right)\enspace .$$
		\ENDFOR
		\ENDFOR
		\STATE Compute the approximation error using the solution vector.
		\IF{the approximation error is optimal}
		\STATE Save $\{\mc{B}_1,\mc{B}_2,\ldots,\mc{B}_k\}$ as the output.
		\ENDIF
		\ENDFOR
	\end{algorithmic}
\end{algorithm}

Let $k$ be even and let $n=2^{k/2}$. We define the $n \times n$ matrix
$\mb{A}=(a_{\alpha,\beta})$ indexed by strings $\alpha,\beta \in
\zo^{k/2}$ by $a_{\alpha,\beta}=1$ if and only if $\alpha\neq
\beta$. Thus $\mb{A}$ is just the negation of the $n \times n$
identity matrix. It is well-known that the Boolean rank of $\mb{A}$ is
equal to $k$. In particular, we can write $\mb{A}$ as the boolean
product of $\mb{U}$ and $\mb{V}$, where the columns of $\mb{U}$ and
the rows of $\mb{V}$ are indexed by pairs $(i,b)$ where $i\in [k/2]$
and $b\in\{0,1\}$ and entry $(\alpha,(i,b))$ of $\mb{U}$ is $1$ if and
only if $\alpha_i=b$ and entry $((i,b),\beta)$ of $\mb{V}$ is $1$ if
and only if $\beta_i \neq b$. We note that the columns of $\mb{U}$ can
be written as Boolean formulas applied entry-wise to (all of) the
columns of $\mb{A}$. Since we consider approximation algorithms with
\emph{multiplicative error}, when supplied with input $A$ and $k$ our
algorithm is required to compute an \emph{exact factorization} of
$\mb{A}$ into $n \times k$ and $k \times n$ matrices $\mb{U}$ and
$\mb{V}$. If the underlying basis generation algorithm receives, say,
only half of the columns of $\mb{A}$ it does not seem possible to
compute such a factorization. It therefore seems that column
dependence size at least $2^{k/2-1}$ is necessary, which is about the
square-root of the column dependence size of our algorithm.

\begin{remark}
Using the technique of weighted averaging developed for the $\mathrm{GF}(2)$ model, we can actually improve the approximation ratio to $2^{k-1}+1$. We omit the details of the proof.
\end{remark}
The proof of Theorem \eqref{Thm:Boolean} can be found in supplementary materials.

\section{Hardness of Low Rank Approximation of Binary Matrices}\label{Section:Hardness}

Before our work, the computational complexity of the low rank approximation problem is not fully understood. For the rank-1 case, Tan showed that the equivalent problem \textsc{Maximum Edge Weight Biclique} for $\pmo$-matrices is \NP-hard under \textit{randomized} reductions~\cite{TAMC:Tan08}. In the case when the rank $k$ is unrestricted (i.e.\ part of the input) deciding whether there exist $\mb{U}$ and $\mb{V}$ such that $\mb{A}=\mb{U}\mb{V}$ in the Boolean semiring model is precisely the \NP-complete \textsc{Minimal Set Basis} problem~\cite{Stockmeyer95}, and that immediately implies that the approximation problem is \NP-hard to approximate wihtin \emph{any} factor, as noted by Miettinen~\cite{TKDE:MiettinenMGDM08}. On the other hand, this does not imply hardness when $k \ll d,n$. Indeed, the \textsc{Minimal Set Basis} problem is fixed-parameter tractable with parameter $k$, by a simple kernelization algorithm~\cite{TCS:FleischnerMPS09}. Note also that in the $\mathrm{GF}(2)$ model, deciding the existence of $\mb{U}$ and $\mb{V}$ such that $\mb{A}=\mb{U}\mb{V}$ is efficiently solvable using Gaussian elimination, regardless of the rank $k$ being unrestricted.

In this section we show the rank-$1$ \textsc{Binary Matrix Approximation} problem is \NP-hard under normal polynomial time reduction. We first define two related problems. Let $H$ be a \emph{complete} bipartite graph with edge weight, and let
$\mb{W}=(w_{ij})$ be the $d \times n$ matrix consisting of these edge
weights. The \textsc{Maximum Edge Weight Biclique} problem is to find a
biclique subgraph of $H$ with maximizing total edge weight. As an
optimization problem: maximize $\mb{x}^\transpose \mb{W} \mb{y}$, where $\mb{x} \in \zo^d$
and $\mb{y} \in \zo^n$.  The \textsc{Bipartite Max-Cut} problem is to find
a cut of the vertices of $H$ maximum the total weight of the edges
cut. As an optimization problem: maximize $\mb{x}^\transpose \mb{W} \mb{y}$, where $\mb{x}
\in \pmo^d$ and $\mb{y} \in \pmo^n$.  Note that these two problems differ
only in the domain from which $\mb{x}$ and $\mb{y}$ are chosen.

Shen, Ji, and Ye~\cite{KDD:ShenJY09} observed that the rank-1 \textsc{Binary Matrix Approximation}
problem is equivalent to \textsc{Maximum Edge Weight
  Biclique} when all edge weights are chosen from $\pmo$. Namely, if $\mb{A}$ is a $d \times n$ Boolean matrix, $\mb{u} \in
\zo^d$, and $\mb{v} \in \zo^n$, and let $\mb{J}_{d,n}$ denote the $d\times n$ all-$1$ matrix, we have
\begin{align*}
\norm{\mb{A}-\mb{uv}^\transpose}_F^2 & = \norm{\mb{A}}_F^2 - 2 \mb{u}^\transpose \mb{A v }+ \norm{\mb{u v}^\transpose }_F^2 \\&= \norm{\mb{A}}_F^2 - \mb{u}^\transpose(2\mb{A}-\mb{J}_{d,n})\mb{v}.
\end{align*}
Therefore, minimizing $\norm{\mb{A}-\mb{uv}^\transpose}_F^2$ is equivalent of maximizing $\mb{u}^\transpose(2\mb{A}-\mb{J}_{d,n})\mb{v}$. Also note that $(2\mb{A}-\mb{J}_{d,n})$ is a $\{-1,1\}$-matrix. Thus \NP-hardness of \textsc{Maximum Edge Weight Biclique} with $\pmo$ edge weights implies
\NP-hardness of rank-1 \textsc{Binary Matrix Approximation}. To show the  \NP-hardness of \textsc{Maximum Edge Weight Biclique}, we consider reduction from the \textsc{Bipartite Max-Cut} problem.

Roth and Viswanathan showed that Bipartite Max-Cut is \NP-hard even
when all wights are chosen from the set
$\pmo$~\cite{TIT:RothV08}. This is done by first showing \NP-hardness
when the weights are chosen from $\zpmo$ and then reducing to the case
of weights from $\pmo$.

Tan showed that \textsc{Maximum Edge Weight Biclique} is \NP-hard
\cite{TAMC:Tan08} when weights are chosen from $\zpmo$, and shows
\NP-hardness under randomized reductions when weights are chosen from
$\pmo$. He leaves it as an open problem to obtain \NP-hardness under
normal polynomial time reductions. The complexity of this problem was also stated as an open problem by Amit~\cite{Amit-MastersThesis}

The reduction from weights chosen from $\zpmo$ to $\pmo$ by Roth and
Viswanathan and by Tan is similar. The idea is to transform the $n
\times n$ $\zpmo$-weight matrix $\mb{W}$ into a new $nm \times nm$
$\pmo$-weight matrix $\mb{W}'$, where $\mb{W}'$ consists of $m \times m$ blocks
corresponding to each entry of $\mb{W}$. A $(-1)$-entry is transformed into
the all-$(-1)$ $m \times m$ matrix, and similarly is a $1$-entry
transformed into the all $1$ $m \times m$ matrix. But where Tan
transforms a $0$-entry to a \emph{random} $m \times m$ $\pmo$-matrix,
Roth and Viswanathan instead transforms a $0$-entry into a $m \times
m$ \emph{Hadamard} matrix. We will show that this transformation into \emph{Hadamard} matrix also
work in the setting of the \textsc{Maximum Edge Weight Biclique}
problem, thereby properly establishing its \NP-hardness.

\begin{theorem}\label{Thm:NP_Hardness}
The rank-$1$ \textsc{Binary Matrix Approximation} problem is \NP-hard.
\end{theorem}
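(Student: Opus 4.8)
The plan is to chain together three facts. First, as observed by Shen, Ji and Ye and recalled above, the identity $\norm{\mb{A}-\mb{u}\mb{v}^{\transpose}}_F^2 = \norm{\mb{A}}_F^2 - \mb{u}^{\transpose}(2\mb{A}-\mb{J}_{d,n})\mb{v}$ makes rank-$1$ \textsc{Binary Matrix Approximation} polynomial-time equivalent to \textsc{Maximum Edge Weight Biclique} (MEWB) restricted to $\pmo$-weight matrices. Second, Tan proved that MEWB is \NP-hard under ordinary polynomial-time reductions already when the weights lie in $\zpmo$. So it suffices to give a polynomial-time many-one reduction from $\zpmo$-weight MEWB to $\pmo$-weight MEWB; I will obtain it by adapting to the biclique setting the Hadamard-block construction that Roth and Viswanathan used for \textsc{Bipartite Max-Cut} (the difference being that selectors now range over $\zo$ rather than $\pmo$).

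Given an $n\times n$ matrix $\mb{W}$ with entries in $\zpmo$ together with a threshold $t$, I would put $m$ equal to the least power of two exceeding $n^4$ (so $m=\Theta(n^4)$ and an $m\times m$ Sylvester--Hadamard matrix $\mb{H}_m=\mb{H}_2^{\otimes\log_2 m}$ is computable in time $\mathrm{poly}(n)$), and build the $nm\times nm$ matrix $\mb{W}'$ out of $m\times m$ blocks, one per entry of $\mb{W}$: a $(+1)$-entry becomes the all-ones block, a $(-1)$-entry the all-$(-1)$ block, and a $0$-entry the matrix $\mb{H}_m$; thus $\mb{W}'\in\pmo^{nm\times nm}$. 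For $\mb{x}'\in\zo^{nm}$, $\mb{y}'\in\zo^{nm}$, let $a_i$ and $b_j$ (both in $\{0,\dots,m\}$) be the numbers of coordinates selected by $\mb{x}'$ inside block-row $i$ and by $\mb{y}'$ inside block-column $j$. Then
\[
(\mb{x}')^{\transpose}\mb{W}'\mb{y}' \;=\; \sum_{i,j}\mb{W}_{ij}\,a_i b_j \;+\; \sum_{(i,j):\,\mb{W}_{ij}=0}\sigma_{ij},
\]
where $\sigma_{ij}$ is the sum of some $a_i\times b_j$ submatrix of $\mb{H}_m$, so Lindsey's lemma gives $\abs{\sigma_{ij}}\le\sqrt{a_i b_j m}\le m^{3/2}$ and hence $\Abs{\sum_{(i,j):\mb{W}_{ij}=0}\sigma_{ij}}\le n^2m^{3/2}$.

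Next I would analyse the ``clean'' first sum. As a function of $(a,b)$ it is bilinear over the grid $\{0,\dots,m\}^n\times\{0,\dots,m\}^n$; fixing $b$ it is linear in $a$, so (swapping the two maximisations) its maximum over the grid is attained with every $a_i,b_j\in\{0,m\}$, i.e.\ at $a=m\mb{x}$, $b=m\mb{y}$ for some $\mb{x},\mb{y}\in\zo^n$, where its value is $m^2\,\mb{x}^{\transpose}\mb{W}\mb{y}$. Together with the Lindsey bound this gives $\OPT(\mb{W}')\le m^2\,\OPT(\mb{W})+n^2m^{3/2}$, and feeding the block-constant lift of an optimal pair for $\mb{W}$ into $\mb{W}'$ (and using $\mb{1}^{\transpose}\mb{H}_m\mb{1}=m\ge 0$) gives the matching lower bound $\OPT(\mb{W}')\ge m^2\,\OPT(\mb{W})$.

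Finally, since $m>n^4$ forces $n^2m^{3/2}<m^2$, the map $(\mb{W},t)\mapsto(\mb{W}',m^2t)$ is a correct polynomial-time many-one reduction: $\OPT(\mb{W})\ge t$ implies $\OPT(\mb{W}')\ge m^2t$, while $\OPT(\mb{W})\le t-1$ implies $\OPT(\mb{W}')\le m^2(t-1)+n^2m^{3/2}<m^2t$. Composing this with Tan's \NP-hardness of $\zpmo$-weight MEWB and with the biclique equivalence of the first paragraph establishes \NP-hardness of rank-$1$ \textsc{Binary Matrix Approximation}. I expect the main obstacle to be precisely the two-sided estimate: one has to rule out that some \emph{non-block-constant} selection $(\mb{x}',\mb{y}')$ exploits the Hadamard blocks to beat every block-constant selection by more than the discrepancy term $n^2m^{3/2}$ --- the bilinear-form/vertex argument combined with the submatrix discrepancy bound for $\mb{H}_m$ is exactly what controls this, and keeping the slack $m^2-n^2m^{3/2}$ positive is what pins the polynomial blow-up at $m=\Theta(n^4)$.
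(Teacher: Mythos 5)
Your proposal is correct and follows essentially the same route as the paper: the Shen--Ji--Ye reduction of rank-$1$ approximation to \textsc{Maximum Edge Weight Biclique} with $\pmo$ weights, Tan's \NP-hardness for $\zpmo$ weights, and a reduction from $\zpmo$ to $\pmo$ weights by replacing zero entries with Sylvester--Hadamard blocks, with Lindsey's lemma controlling the discrepancy of those blocks. The only differences are cosmetic packaging: you obtain a clean one-sided lower bound $\OPT(\mb{W}')\ge m^2\,\OPT(\mb{W})$ by noting that $\mathbf 1^{\transpose}\mb H_m\mathbf 1=m\ge 0$ (letting you take $m$ just above $n^4$ and map thresholds as $(\mb W,t)\mapsto(\mb W',m^2t)$), whereas the paper proves a two-sided estimate $\abs{\max\mb u^{\transpose}\widetilde{\mb W}\mb v - m^2\max\mb x^{\transpose}\mb W\mb y}<m^2/2$ with $m>4n^4$ and recovers the optimum by rounding to the nearest multiple of $m^2$; both arguments are sound and your block-count/bilinear-vertex derivation of the clean term is an equivalent rephrasing of the paper's Lemma~\ref{hardness_lemma_1}.
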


The proof is a polynomial time many-one reduction from
\textsc{Maximum Edge Weight Biclique} with weights from $\zpmo$ to
\textsc{Maximum Edge Weight Biclique} with weights from $\zo$. Details can be found in supplementary materials.

\section{Conclusion}
\label{section:conclusion}

We study Column Subset Section (CSS) for low rank binary matrix approximation. CSS is often used as an alternative approach of SVD for low rank approximation of real matrices, where the advantage of CSS is the interpretability of its results. For binary matrices, CSS is so far the only approach with theoretical guarantee, as solving the low rank problem exactly is $\NP$-hard. We provide an upper bound on the approximation ration of CSS for the $\gf(2)$ model and show the bound is tight. This is a complete characterization from an information-theoretic point of view. For Boolean semiring model, we propose Generalized CSS (GCSS) since CSS is not strong enough to yield a bound in this scenario. We also show an upper bound for GCSS.

CSS has been actively studied for nearly three decades and the first work can at least date back to \cite{Golub65}, where it was called rank revealing QR in the numerical linear algebra community. The progress on CSS exhibits an interesting trajectory. Early results either gave bounds exponential in $k$ or the running time of the algorithm is $O(n^k)$ \cite{Foster86, Chan87, Hong92, Chan94, Chandrasekaran94, Bischof98}. By efforts from many researches, now there are polynomial time algorithms than have polynomial bound for the approximation ratio.

Our understanding of CSS for binary matrices is at the very beginning stage. It is an important future work to develop efficient CSS algorithms to achieve or approximately achieve the bounds in this paper.

	\section{Proof of Theorem \ref{Thm:Main_Techincal}}
	We are now going to prove Theorem \ref{Thm:Main_Techincal}. First we need a simple lemma.
	
	\begin{lemma}  \label{Lemma:Sum_Ineq}
		Let $a_1, \cdots, a_n$ and $\lambda$ be non-negative real numbers. Let $S := \sum_{i = 1}^n a_i$. If $S > \lambda a_i$ for all $i \in [n]$, then
		\begin{equation}\label{Sum_Ineq}
		\sum_{i = 1}^n \frac{a_i}{S - \lambda a_i} \geq \frac{n}{n - \lambda}.
		\end{equation}
	\end{lemma}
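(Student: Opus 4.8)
The plan is to view the left-hand side as $\sum_i f(a_i)$ for a single \emph{convex} function $f$ and then apply Jensen's inequality at the average $\bar a := S/n$ of the $a_i$; using the tangent-line form of convexity makes the cancellation of the first-order terms transparent.

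First I would record two consequences of the hypothesis. Since the $a_i$ are non-negative and sum to $S$, we have $\max_i a_i \ge S/n$, so the hypothesis $S > \lambda\,\max_i a_i$ gives $S > \lambda S/n$. If $S = 0$ then all $a_i = 0$ and the hypothesis $S > \lambda a_i$ fails; hence $S > 0$, and dividing by $S$ yields $\lambda < n$. In particular $n - \lambda > 0$, so the right-hand side $\frac{n}{n-\lambda}$ is a well-defined positive quantity. (When $\lambda = 0$ both sides equal $1$ and there is nothing to prove, so I may assume $\lambda > 0$ below.)

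Next I would introduce $f(x) := \dfrac{x}{S - \lambda x}$, defined on $\{x : S - \lambda x > 0\}$. By hypothesis every $a_i$ lies in this set, and so does $\bar a = S/n$, because $S - \lambda \bar a = S\,(1 - \lambda/n) > 0$ by the previous paragraph. A direct differentiation gives $f'(x) = \dfrac{S}{(S - \lambda x)^2}$ and $f''(x) = \dfrac{2S\lambda}{(S-\lambda x)^3} \ge 0$ on the domain, so $f$ is convex there (equivalently, $f(x) = \tfrac{S}{\lambda}\cdot\tfrac{1}{S-\lambda x} - \tfrac{1}{\lambda}$ is an affine shift of the reciprocal of a positive affine function).

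Finally I would use the tangent-line inequality at $\bar a$: for convex $f$, $f(a_i) \ge f(\bar a) + f'(\bar a)\,(a_i - \bar a)$ for every $i \in [n]$. Summing over $i$ and using $\sum_i a_i = S = n\bar a$, the linear terms cancel, leaving $\sum_{i=1}^n f(a_i) \ge n f(\bar a)$. Since $f(\bar a) = \dfrac{S/n}{S - \lambda S/n} = \dfrac{1}{n - \lambda}$, this is exactly $\sum_{i=1}^n \dfrac{a_i}{S - \lambda a_i} \ge \dfrac{n}{n-\lambda}$. I do not anticipate a genuine obstacle here: the only points needing (easy) care are checking $\lambda < n$ so that $\bar a$ lies in the domain of $f$, and the elementary convexity computation, both handled above. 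Equality holds precisely when all the $a_i$ coincide.
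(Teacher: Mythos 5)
Your proof is correct, and it takes a genuinely different route from the paper's. The paper applies the Cauchy--Schwarz inequality to the pair of sequences $\bigl(\sqrt{S-\lambda a_i}\bigr)_i$ and $\bigl(1/\sqrt{S-\lambda a_i}\bigr)_i$, obtaining $\sum_i(S-\lambda a_i)\cdot\sum_i\frac{1}{S-\lambda a_i}\ge n^2$, then uses $\sum_i(S-\lambda a_i)=(n-\lambda)S$ and the identity $\frac{S}{S-\lambda a_i}=1+\lambda\frac{a_i}{S-\lambda a_i}$ to extract the claim. You instead observe that $f(x)=\frac{x}{S-\lambda x}$ is convex on $\{x:S-\lambda x>0\}$ and apply the tangent-line (Jensen) inequality at $\bar a=S/n$, with the linear terms cancelling because $\sum_i a_i=n\bar a$. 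Both arguments are short and standard; yours has the small advantages of making the equality case ($a_1=\cdots=a_n$) transparent and of being self-verifying via the explicit computation of $f''$, while the paper's Cauchy--Schwarz route avoids any calculus. You are also careful to establish $\lambda<n$ and $S>0$ (so that $\bar a$ lies in the domain of $f$ and the right-hand side is well defined), a point the paper's proof leaves implicit; this is a welcome bit of rigour. The parenthetical rewriting $f(x)=\frac{S}{\lambda}\cdot\frac{1}{S-\lambda x}-\frac{1}{\lambda}$ breaks down at $\lambda=0$, but you have already disposed of that case separately, and the direct second-derivative computation does not need it, so there is no gap.
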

	
	\begin{proof}
		By Cauchy-Schwarz inequality, we have $\sum_{i = 1}^n (S - \lambda a_i) \sum_{i = 1}^n \frac{1}{S - \lambda a_i} \geq n^2$. Taking into consideration that $\sum_{i = 1}^n (S - \lambda a_i) = (n - \lambda) S$, we have $\sum_{i = 1}^n \frac{S}{S - \lambda a_i} \geq \frac{n^2}{n - \lambda}$. Observing that $\frac{S}{S - \lambda a_i} = 1 + \lambda \frac{a_i}{S - \lambda a_i}$, we further have $n + \lambda \sum_{i = 1}^n \frac{a_i}{S - \lambda a_i} \geq \frac{n^2}{n
			- \lambda}$. The Lemma follows from simple manipulations.
	\end{proof}
	
	We also note that $M_{\mb{c}}$ is an upper bound for the approximation error of $\mb{a}_{(\mb{Uc})}$ by $\mb{Uc}$, as stated in the following lemma.
	
	\begin{lemma}\label{Lemma:Upper_Bound_M}
		For any $\mb{c} \in \{0,1\}^k$, $|\mb{a}_{(\mb{Uc})}-\mb{Uc}| \le M_{\mb{c}}$
	\end{lemma}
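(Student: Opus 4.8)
The plan is a straightforward case split on whether $n_{\mb{c}}$ is zero, followed by an averaging (pigeonhole) argument in the nontrivial case. Recall that $|\cdot|$ here denotes Hamming weight, so $|\mb{x}-\mb{y}|$ is the Hamming distance between binary vectors $\mb{x},\mb{y} \in \zo^d$, and that $\mb{a}_{(\mb{Uc})}$ is by definition a column of $\mb{A}$ minimizing this distance to $\mb{Uc}$.

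First I would dispose of the case $n_{\mb{c}} = 0$. Here $M_{\mb{c}} = d$ by definition, and since $\mb{a}_{(\mb{Uc})}$ and $\mb{Uc}$ both lie in $\zo^d$, their Hamming distance is trivially at most $d$, so $|\mb{a}_{(\mb{Uc})} - \mb{Uc}| \le d = M_{\mb{c}}$.

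Next, for the case $n_{\mb{c}} > 0$, I would write $M_{\mb{c}} = \frac{L_{\mb{c}}}{n_{\mb{c}}} = \frac{1}{n_{\mb{c}}} \sum_{j \in \mc{J}_{\mb{c}}} |\mb{a}_j - \mb{Uc}|$, i.e.\ $M_{\mb{c}}$ is exactly the average over $j \in \mc{J}_{\mb{c}}$ of the quantities $|\mb{a}_j - \mb{Uc}|$. Since an average of finitely many nonnegative reals is at least their minimum, there exists some $j^\star \in \mc{J}_{\mb{c}}$ with $|\mb{a}_{j^\star} - \mb{Uc}| \le M_{\mb{c}}$. But $\mb{a}_{j^\star}$ is one of the columns of $\mb{A}$, and $\mb{a}_{(\mb{Uc})}$ is a nearest neighbor of $\mb{Uc}$ among all columns of $\mb{A}$, so $|\mb{a}_{(\mb{Uc})} - \mb{Uc}| \le |\mb{a}_{j^\star} - \mb{Uc}| \le M_{\mb{c}}$, as desired.

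There is essentially no obstacle here: the only thing to be careful about is simply invoking the defining minimality property of $\mb{a}_{(\mb{Uc})}$ and the fact that $\mc{J}_{\mb{c}}$ is nonempty in the second case so that the average is over a nonempty index set. This lemma is a bookkeeping step that will be used repeatedly in the induction proving Theorem \ref{Thm:Main_Techincal}.
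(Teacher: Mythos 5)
Your proof is correct and follows essentially the same two-case argument as the paper: the $n_{\mb{c}}=0$ case is trivial since both vectors lie in $\zo^d$, and the $n_{\mb{c}}>0$ case uses that the nearest-neighbor distance is at most the average distance $L_{\mb{c}}/n_{\mb{c}}$ over the columns in $\mc{J}_{\mb{c}}$. You merely spell out the averaging/pigeonhole step that the paper leaves implicit.
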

	\begin{proof}
		If $n_{\mb{c}}=0$, then the lemma is true since $(\mb{a}_{(\mb{Uc})}-\mb{Uc}) \in \{0,1\}^d$. If $n_{\mb{c}}>0$, recall that $\mb{a}_{(\mb{Uc})}$ is the nearest neighbor of $\mb{Uc}$ among the columns of $\mb{A}$, and $L_{\mb{c}}$ is the total error of $n_{\mb{c}}$ columns. Therefore $|\mb{a}_{(\mb{Uc})}-\mb{Uc}| \le \frac{L_{\mb{c}}}{n_{\mb{c}}}$.
	\end{proof}
	
	\begin{proof} \textit{of~ Theorem \ref{Thm:Main_Techincal}}
		
		We prove the theorem by mathematical induction on $r$. Throughout the proof we will fix an optimal solution $(\mb{U},\mb{V})$ for the Binary low rank problem Eq.(\ref{Binary_Low_Rank_Problem}).\\
		
		\noindent\textbf{Base Case}\\
		
		We first prove inequality Eq.(\ref{Additive_Bound}) for $r=0$. Observe that in this case, $\lambda_0=0$ and $f_i(\mb{b}_1,\ldots,\mb{b}_k)=N_{\mb{b}_i + \mathrm{span}^{\backslash i}(\mb{b}_1,\ldots,\mb{b}_k)}$. Thus it suffices to show
		\[ \mathrm{Err}^{(0)} (\mb{b}_1, \cdots, \mb{b}_k) \leq
		\mathrm{OPT}_k + \sum_{i = 1}^k M_{\mb{b}_i} N_{\mb{b}_i +
			\mathrm{span}^{\backslash i} (\mb{b}_1, \cdots, \mb{b}_k)}. \]
		
		Recall that $\mathrm{Err}^{(0)} (\mb{b}_1, \cdots, \mb{b}_k)=\mathrm{Err} (\mb{b}_1, \cdots, \mb{b}_k)$, and $\mathrm{Err}(\mb{b}_1, \cdots, \mb{b}_k)$ is the approximation error of using $\mb{A}_{(\mb{U}\mb{B})}=(\mb{a}_{(\mb{Ub}_1)},\ldots,\mb{a}_{(\mb{Ub}_k)})$ as the basis matrix, where $\mb{a}_{(\mb{Ub}_i)}$ is the nearest neighbor of $\mb{Ub}_i$ among the columns of $\mb{A}$.
		
		The total approximation error $\mathrm{Err}(\mb{b}_1, \cdots, \mb{b}_k)$ is the sum of the error for each column $\mb{a}_j$. Let $\mb{b} \in \{ 0, 1 \}^k$. Because $\mb{b}_1, \cdots, \mb{b}_k$ are linear independent in $\mathrm{GF}(2)^k$, $\mb{b}$ can be represented by a linear combination of $\mb{b}_1, \cdots, \mb{b}_k$. Let $\mb{b}= \sum_{i = 1}^k  x_i \mb{b}_i$ for some $x_1,\ldots, x_k \in \{0,1\}$. (We abuse the notion a little bit, using $\sum$ for both ordinary addition and addition in $\mathrm{GF}(2)$. This should be always clear from the context.) Let $I_{\mb{b}} = \{i: x_i =1\}$ be the set that $\mb{b}_i$ contributes to $\mb{b}$. The approximation error of the column $\mb{a}_j$ can be written as $\min_{\mb{b}}|\mb{a}_j-\sum_{i \in I_{\mb{b}}}\mb{a}_{\mb{(Ub}_i)}|$. Recall that $\mc{J}_{\mb{c}}$ is the set of indices of the columns of $\mb{V}$ which are equal to $\mb{c}$. The partition $[n] = \cup_{\mb{c} \in \{0,1\}^k} \mc{J}_{\mb{c}}$ suggests a way to decompose the total approximation error. First, we have:
		\begin{eqnarray}
		 & & \sum_{j \in \mc{J}_{\mb{c}}} \left| \mb{a}_j - \sum_{i
			\in I_{\mb{c}}} \mb{a}_{(\mb{U}\mb{b}_i)} \right| \nonumber \\            
		& \leq & \sum_{j \in \mc{J}_{\mb{c}}} \left( \left| \mb{a}_j -
		\sum_{i \in I_{\mb{c}}} \mb{U}\mb{b}_i \right| + \sum_{i
			\in I_{\mb{c}}} | \mb{a}_{(\mb{U}\mb{b}_i)}
		-\mb{U}\mb{b}_i |\right) \nonumber\\
		& \leq & L_{\mb{c}} + n_{\mb{c}} \sum_{i \in
			I_{\mb{c}}} M_{\mb{b}_i},  \label{base_eq2}
		\end{eqnarray}
		where the last inequality is by the definition of $L_{\mb{c}}$, $n_{\mb{c}}$ and Lemma \ref{Lemma:Upper_Bound_M}. Recall that $L_{\mb{c}}$ is the sum of the approximation error for the optimal solution $(\mb{U},\mb{V})$ of the columns in $\mc{J}_{\mb{c}}$. Eq.(\ref{base_eq2}) leads to the following additive error bound.
		\begin{eqnarray}\label{r_0_ineq_1}
		&  & \mathrm{Err}^{(0)} (\mb{b}_1, \cdots, \mb{b}_k) \nonumber \\ 
        & = & \sum_{j =
			1}^n \min_{\mb{b} \in \{ 0, 1 \}^k} \left| \mb{a}_j
		- \sum_{i \in I_{\mb{b}}}
		\mb{a}_{(\mb{U}\mb{b}_i)} \right|\nonumber\\
		& = & \sum_{\mb{c} \in \{ 0, 1 \}^k} \sum_{j \in
			\mc{J}_{\mb{c}}} \min_{\mb{b} \in \{0,1\}^k}\left| \mb{a}_j - \sum_{i
			\in I_{\mb{b}}} \mb{a}_{(\mb{U}\mb{b}_i)}
		\right|\nonumber\\
		& \leq & \sum_{\mb{c} \in \{ 0, 1 \}^k} \sum_{j \in
			\mc{J}_{\mb{c}}} \left| \mb{a}_j - \sum_{i
			\in I_{\mb{c}}} \mb{a}_{(\mb{U}\mb{b}_i)}
		\right|\nonumber\\
		& \leq & \sum_{\mb{c} \in \{ 0, 1 \}^k} \left( L_{\mb{c}} +
		n_{\mb{c}} \sum_{i \in I_{\mb{c}}} M_{\mb{b}_i}
		\right)\nonumber\\
		& = & \mathrm{OPT}_k + \sum_{\mb{c} \in \{ 0, 1 \}^k} \sum_{i \in
			I_{\mb{c}}} n_{\mb{c}} M_{\mb{b}_i},
		\end{eqnarray}
		where the last inequality uses $\mathrm{OPT}_k = \sum_{\mb{c}}L_{\mb{c}}$. Consider the second term in Eq.(\ref{r_0_ineq_1}), we have
		\begin{eqnarray}\label{r_0_ineq_2}
		& &　\sum_{\mb{c} \in \{ 0, 1 \}^k} \sum_{i \in
			I_{\mb{c}}} n_{\mb{c}} M_{\mb{b}_i} \nonumber \\
            & = & \sum_{i = 1}^k \sum_{\mb{c} \in
			\{ 0, 1 \}^k} n_{\mb{c}} M_{\mb{b}_i} I [i \in
		I_{\mb{c}}] \nonumber \\
		& = &\sum_{i = 1}^k M_{\mb{b}_i} \left(
		\sum_{\mb{c} \in \mb{b}_i + \mathrm{span}^{\backslash i}
			(\mb{b}_1, \cdots, \mb{b}_{k - r})} n_{\mb{c}} \right)\nonumber \\
		& = & \sum_{i = 1}^k M_{\mb{b}_i} N_{\mb{b}_i
			+ \mathrm{span}^{\backslash i} (\mb{b}_1, \cdots, \mb{b}_{k -
				r})},
		\end{eqnarray}
		where the last inequality results from the definition of $N_{\mc{X}}$ in Definition \ref{Def:Notations}. Combining Eq.(\ref{r_0_ineq_1}) and Eq.(\ref{r_0_ineq_2}) finishes the proof for $r=0$.\\
		
		\noindent\textbf{Inductive Step}\\
		
		Assuming Eq.(\ref{Additive_Bound}) is true for all $r' \le r$, we are now going to show
		\fontsize{10pt}{13pt}\selectfont
		\begin{eqnarray}
		& & \mathrm{Err}^{(r + 1)} (\mb{b}_1, \cdots, \mb{b}_{k - r - 1}) \nonumber  \\ 
		&\leq & \mathrm{OPT}_k + \lambda_{r + 1} \sum_{\mb{c} \in
			\mathrm{span}^c_{} (\mb{b}_1, \cdots, \mb{b}_{k - r - 1})}
		L_{\mb{c}} + \nonumber \\
		& & \sum_{i = 1}^{k - r - 1} f_i (\mb{b}_1, \cdots,
		\mb{b}_{k - r - 1}) M_{\mb{b}_i}. \label{inductive_eq3}
		\end{eqnarray}
		\fontsize{11pt}{13pt}\selectfont
		Since \begin{align*}
		& & \mathrm{Err}^{(r + 1)} (\mb{b}_1, \cdots, \mb{b}_{k - r - 1}) \\ &=& \min_{\mb{b}} \mathrm{Err}^{(r)} (\mb{b}_1, \cdots, \mb{b}_{k - r - 1},  \mb{b})
		\end{align*}, we have for every set of weights $w_{\mb{b}}$ such that $w_{\mb{b}} \geq 0$ and $\sum_{\mb{b} \in \{ 0, 1  \}^k} w_{\mb{b}} = 1$,
		\begin{eqnarray}
		& & \mathrm{Err}^{(r + 1)} (\mb{b}_1, \cdots, \mb{b}_{k - r - 1}) \nonumber \\
		& \leq & \sum_{\mb{b} \in \{ 0, 1 \}^k} w_{\mb{b}}
		\mathrm{Err}^{(r)} (\mb{b}_1, \cdots, \mb{b}_{k - r - 1},
		\mb{b}). \label{inductive_eq4+}
		\end{eqnarray}
		We will conduct the weighted averaging in two layers. Consider the quotient space \\ $\mathrm{GF}(2)^k / \mathrm{span} (\mb{b}_1, \cdots, \mb{b}_{k  - r - 1})$ and the induced cosets. \\ Denote all the $2^{r+1}$ cosets by \\ $\mb{p}_0 + \mathrm{span} (\mb{b}_1, \cdots, \mb{b}_{k - r - 1})$ , $\cdots$ , \\ $\mb{p}_{2^{r + 1} - 1} + \mathrm{span} (\mb{b}_1, \cdots, \mb{b}_{k - r - 1})$. \\ Without loss of generality, assume $\mb{p}_0 \in \mathrm{span}  (\mb{b}_1, \cdots, \mb{b}_{k - r - 1})$. The first layer of weighted averaging will be performed within each coset $\mb{p}_{i} + \mathrm{span} (\mb{b}_1, \cdots, \mb{b}_{k - r - 1})$, and we will obtain an upper bound of $\mathrm{Err}^{(r + 1)} (\mb{b}_1, \cdots, \mb{b}_{k - r - 1})$ depending on $\mb{p}_i$. The second layer of averaging is over all the cosets, yielding the desired bound. The two layers use different rules for choosing the weights.\\
		
		\noindent\textbf{First Layer Weighted Averaging (within a coset): }\\
		
		For $\mb{p} \in \{\mb{p}_1,\ldots,\mb{p}_{2^{r+1}-1}\}$, define $Z(\mb{p}):=N_{\mb{p} + \mathrm{span} (\mb{b}_1, \cdots, \mb{b}_{k - r - 1})}$. So $Z(\mb{p})$ is the number of columns of $\mb{V}$ that belong to the coset indexed by $\mb{p}$. We will only consider those $\mb{p}$ such that $Z(\mb{p})>0$. Define the weights as $w_{\mb{b}} = n_{\mb{b}}/Z(\mb{p})$ if $\mb{b} \in \mb{p}+\mathrm{span} (\mb{b}_1, \cdots,\mb{b}_{k - r - 1})$ and zero otherwise. For every $\mb{b} \in \mb{p}+\mathrm{span} (\mb{b}_1, \cdots,\mb{b}_{k - r - 1})$ one has $\mathrm{span} (\mb{b}_1, \cdots,\mb{b}_{k - r - 1},\mb{b}) = \mathrm{span} (\mb{b}_1, \cdots,\mb{b}_{k - r - 1},\mb{p})$. Combining with the inductive hypothesis we have
		\begin{eqnarray}
		&  & \mathrm{Err}^{(r + 1)} (\mb{b}_1, \cdots, \mb{b}_{k - r -
			1}) \nonumber\\
		& \leq & \frac{1}{Z (\mb{p})} \sum_{\mb{b} \in \mb{p} + \mathrm{span}
			(\mb{b}_1, \cdots, \mb{b}_{k - r - 1})} n_{\mb{b}} \cdot \nonumber \\
		& &　\mathrm{Err}^{(r)} (\mb{b}_1, \cdots, \mb{b}_{k - r - 1},
		\mb{b}) \nonumber\\
		& \leq & \mathrm{OPT}_k + \lambda_r \sum_{\mb{c} \in \mathrm{span}^c_{}
			(\mb{b}_1, \cdots, \mb{b}_{k - r - 1}, \mb{p})}
		L_{\mb{c}} + \nonumber\\
		&  & \frac{1}{Z (\mb{p})} \sum_{\mb{b} \in \mb{p} + \mathrm{span}
			(\mb{b}_1, \cdots, \mb{b}_{k - r - 1})} n_{\mb{b}}　\cdot \nonumber \\
		& & [ f_{k - r} (\mb{b}_1, \cdots, \mb{b}_{k - r - 1},
		\mb{b}) M_{\mb{b}} + \nonumber\\
		& & \sum_{i = 1}^{k - r - 1} f_i
		(\mb{b}_1, \cdots, \mb{b}_{k - r - 1}, \mb{b})
		M_{\mb{b}_i} ] \nonumber\\
		&  & \label{inductive_eq5}
		\end{eqnarray}
		
		Our next goal is to bound the two terms in the last row of Eq.(\ref{inductive_eq5}) separately.\\
		
		\noindent\textbf{Bounding the first term in the last row of Eq.(\ref{inductive_eq5})}\\
		
		For the first term, we will show that
		\begin{eqnarray}\label{first_term}
		&  & \frac{1}{Z (\mb{p})} \sum_{\mb{b} \in \mb{p} + \mathrm{span}
			(\mb{b}_1, \cdots, \mb{b}_{k - r - 1})} n_{\mb{b}} \cdot \nonumber \\
		& & f_{k
			- r} (\mb{b}_1, \cdots, \mb{b}_{k - r - 1}, \mb{b})
		M_{\mb{b}}\nonumber\\
		& = &  \sum_{\mb{b} \in \mb{p} + \mathrm{span} (\mb{b}_1, \cdots,
			\mb{b}_{k - r - 1})} L_{\mb{b}} \cdot \nonumber \\
		& & \left( 1 + \frac{1}{2 Z (\mb{p})} N_{\mathrm{span}^c_{}
			(\mb{b}_1, \cdots, \mb{b}_{k - r - 1}, \mb{p})} \right).
		\end{eqnarray}
		By the definition of $M_{\mb{b}}$, it is easy to see $n_{\mb{b}} M_{\mb{b}} = L_{\mb{b}}$. \\ 
		So we focus on analyzing
		$f_{k - r} (\mb{b}_1, \cdots, \mb{b}_{k - r - 1}, \mb{b})$. \\ In fact, for every $ \mb{b} \in \mb{p} + \mathrm{span} (\mb{b}_1,
		\cdots, \mb{b}_{k - r - 1})$, we have
		\[\mathrm{span}^{\backslash (k - r)}  (\mb{b}_1, \cdots, \mb{b}_{k - r - 1}, \mb{b}) =
		\mathrm{span} (\mb{b}_1, \cdots, \mb{b}_{k - r - 1})\]
		Therefore
		\begin{align}
			 & \mb{b}+ \mathrm{span}^{\backslash (k - r)} (\mb{b}_1, \cdots, \mb{b}_{k - r - 1}, \mb{b}) \nonumber \\
			 =   & \mb{b}+ \mathrm{span}
			(\mb{b}_1, \cdots, \mb{b}_{k - r - 1}) \nonumber \\
			 =  &\mb{p} + \mathrm{span} (\mb{b}_1, \cdots, \mb{b}_{k - r - 1})
		\end{align}
		Also, \[\mathrm{span}^c_{} (\mb{b}_1, \cdots, \mb{b}_{k - r - 1},
		\mb{b}) = \mathrm{span}^c (\mb{b}_1, \cdots, \mb{b}_{k - r - 1}, \mb{p})\]
		Thus, for all $\mb{b} \in \mb{p} + \mathrm{span} (\mb{b}_1,
		\cdots, \mb{b}_{k - r - 1})$ we have
		\begin{eqnarray*}
			& &　f_{k - r} (\mb{b}_1, \cdots, \mb{b}_{k - r - 1}, \mb{b}) \\
			& = & N_{\mb{b}+ \mathrm{span}^{\backslash (k - r)} (\mb{b}_1,
				\cdots, \mb{b}_{k - r - 1}, \mb{b})} +  \\
                & & \frac{1}{2} 
			N_{\mathrm{span}^c_{} (\mb{b}_1, \cdots, \mb{b}_{k - r - 1},
				\mb{b})}\\
			& = & N_{\mb{p} + \mathrm{span} (\mb{b}_1, \cdots, \mb{b}_{k -
					r - 1})} + \frac{1}{2} N_{\mathrm{span}^c_{} (\mb{b}_1, \cdots,
				\mb{b}_{k - r - 1}, \mb{p})}
		\end{eqnarray*}
		Combining above, we have
		\begin{eqnarray*}
			&  & \frac{1}{Z (\mb{p})} \sum_{\mb{b} \in \mb{p} + \mathrm{span}
				(\mb{b}_1, \cdots, \mb{b}_{k - r - 1})} n_{\mb{b}} \cdot \\
			& & f_{k - r}
			(\mb{b}_1, \cdots, \mb{b}_{k - r - 1}, \mb{b})
			M_{\mb{b}} \nonumber\\
			& = & \sum_{\mb{b} \in \mb{p} + \mathrm{span}
				(\mb{b}_1, \cdots, \mb{b}_{k - r - 1})} L_{\mb{b}}
			\cdot \frac{1}{Z (\mb{p})} \cdot  \\ 
			& & \left( N_{\mb{p} + \mathrm{span}
				(\mb{b}_1, \cdots, \mb{b}_{k - r - 1})} + \frac{1}{2}
			N_{\mathrm{span}^c_{} (\mb{b}_1, \cdots, \mb{b}_{k - r - 1},
				\mb{p})} \right)  \nonumber\\
			& = &    \sum_{\mb{b} \in \mb{p} + \mathrm{span} (\mb{b}_1, \cdots,
				\mb{b}_{k - r - 1})} L_{\mb{b}} \cdot \\ 
			& &\left( 1 + \frac{1}{2 Z (\mb{p})} N_{\mathrm{span}^c_{}
				(\mb{b}_1, \cdots, \mb{b}_{k - r - 1}, \mb{p})} \right)  \label{eq6}
		\end{eqnarray*}
		This complete the proof of Eq.(\ref{first_term}).\\
		
		\noindent\textbf{Bounding the second term in the last row of Eq.(\ref{inductive_eq5})}\\
		
		For the second term, we are going to show
		\begin{eqnarray}\label{second_term}
		& & \frac{1}{Z (\mb{p})} \sum_{\mb{b} \in \mb{p} + \mathrm{span}
			(\mb{b}_1, \cdots, \mb{b}_{k - r - 1})} n_{\mb{b}}
		\sum_{i = 1}^{k - r - 1}M_{\mb{b}_i}  \cdot \nonumber \\
		& & f_i (\mb{b}_1, \cdots, \mb{b}_{k - r
			- 1}, \mb{b}) \nonumber \\
		& = & \frac{1}{Z (\mb{p})} \sum_{i = 1}^{k - r - 1} M_{\mb{b}_i}
		\sum_{\mb{b} \in \mb{p} + \mathrm{span} (\mb{b}_1, \cdots,
			\mb{b}_{k - r - 1})} n_{\mb{b}} \cdot \nonumber \\
		& & f_i (\mb{b}_1, \cdots,
		\mb{b}_{k - r - 1}, \mb{b}) \nonumber \\
		& \leq & \sum_{i = 1}^{k - r - 1} M_{\mb{b}_i} f_i (\mb{b}_1,
		\cdots, \mb{b}_{k - r - 1})
		\end{eqnarray}
		
		Consider $f_i (\mb{b}_1, \cdots, \mb{b}_{k - r - 1},  \mb{b})$. Observe that for all $1 \leq i \leq k - r -  1$ and   $\mb{b} \in \mb{p} + \mathrm{span} (\mb{b}_1, \cdots,
		\mb{b}_{k - r - 1})$, we have
		\begin{eqnarray*}
			& &　f_i (\mb{b}_1, \cdots, \mb{b}_{k - r - 1}, \mb{b}) 　\\
            & = &
			N_{\mb{b}_i + \mathrm{span}^{\backslash i} (\mb{b}_1, \cdots,
				\mb{b}_{k - r - 1}, \mb{b})} + \frac{1}{2} N_{\mathrm{span}^c_{}
				(\mb{b}_1, \cdots, \mb{b}_{k - r - 1}, \mb{b})}\\
			& = & N_{\mb{b}_i + \mathrm{span}^{\backslash i} (\mb{b}_1,
				\cdots, \mb{b}_{k - r - 1}, \mb{b})} + \frac{1}{2}
			N_{\mathrm{span}^c_{} (\mb{b}_1, \cdots, \mb{b}_{k - r - 1},
				\mb{p})}.
		\end{eqnarray*}
		Since
		\begin{eqnarray}
		& & \mb{b}_i + \mathrm{span}^{\backslash i} (\mb{b}_1, \cdots,
		\mb{b}_{k - r - 1}, \mb{b}) \nonumber \\ 
        & = & [\mb{b}_i +
		\mathrm{span}^{\backslash i} (\mb{b}_1, \cdots, \mb{b}_{k - r -
			1})] \cup \nonumber\\
		& &[\mb{b}_i +\mb{b}+ \mathrm{span}^{\backslash i} (_{}
		\mb{b}_1, \cdots, \mb{b}_{k - r - 1})]
		\end{eqnarray}
		
		we have
		\begin{eqnarray}
		& & f_i (\mb{b}_1, \cdots, \mb{b}_{k - r - 1}, \mb{b}) \nonumber\\ 
        & = &
		N_{\mb{b}_i +\mb{b}+ \mathrm{span}^{\backslash i}
			(\mb{b}_1, \cdots, \mb{b}_{k - r - 1})} +  \nonumber \\
		& & N_{\mb{b}_i +
			\mathrm{span}^{\backslash i} (\mb{b}_1, \cdots, \mb{b}_{k - r -
				1})} + \nonumber\\
		& & \frac{1}{2} N_{\mathrm{span}^c_{} (\mb{b}_1, \cdots,
			\mb{b}_{k - r - 1}, \mb{p})}. \label{second_term_eq8}
		\end{eqnarray}
		Let
		\begin{eqnarray}
		X_1 (i, \mb{p}) &=& \mb{p} + \mathrm{span}^{\backslash i} (\mb{b}_1,
		\cdots, \mb{b}_{k - r - 1}) \nonumber \\ X_2 (i, \mb{p}) &=& \mb{p}
		+\mb{b}_i + \mathrm{span}^{\backslash i} (\mb{b}_1, \cdots,
		\mb{b}_{k - r - 1}) \nonumber \\
		X_3 (i, \mb{p}) &=&\mb{b}_i + \mathrm{span}^{\backslash i}
		(\mb{b}_1, \cdots, \mb{b}_{k - r - 1}) \nonumber \\ X_4 (i, \mb{p}) &=&
		\mathrm{span}^c_{} (\mb{b}_1, \cdots, \mb{b}_{k - r - 1},
		\mb{p}) \label{second_term_eq9}
		\end{eqnarray}
		It is clear that $\forall j_1, j_2 \in \{ 1, 2, 3, 4 \}, j_1 \neq j_2$, one has \[ X_{j_1} (i, \mb{p})
		\cap X_{j_2} (i, \mb{p}) = \varnothing\] and \[  X_1 (i, \mb{p}) \cup X_2 (i, \mb{p}) = \mb{p} + \mathrm{span}
		(\mb{b}_1, \cdots, \mb{b}_{k - r - 1})\]
		If $\mb{b} \in X_1 (i, \mb{p}) $, then
		\begin{align*} & \mb{b}_i
		+\mb{b}+ \mathrm{span}^{\backslash i} (\mb{b}_1, \cdots,
		\mb{b}_{k - r - 1})\\
		=&\mb{b}_i + \mb{p} + \mathrm{span}^{\backslash
			i} (\mb{b}_1, \cdots, \mb{b}_{k - r - 1})\end{align*}Combining Eq.(\ref{second_term_eq8}) and Eq.(\ref{second_term_eq9}) yields
		\begin{align}
		 & f_i (\mb{b}_1, \cdots, \mb{b}_{k - r - 1}, \mb{b}) \nonumber \\
		=& N_{X_2 (i, \mb{p})} + N_{X_3 (i, \mb{p})} + \frac{1}{2} N_{X_4 (i,
			\mb{p})} \label{second_term_eq11}
		\end{align}
		If $\mb{b} \in X_2 (i, \mb{p})$,  then \begin{align*}
			& \mb{b}_i +\mb{b}+ \mathrm{span}^{\backslash i}
			(\mb{b}_1, \cdots, \mb{b}_{k - r - 1}) \\
			=& \mb{p} +
			\mathrm{span}^{\backslash i} (\mb{b}_1, \cdots, \mb{b}_{k - r -
				1})
		\end{align*}
		Combining Eq.(\ref{second_term_eq8}) and Eq.(\ref{second_term_eq9}) yields
		\begin{equation}
		f_i (\mb{b}_1, \cdots, \mb{b}_{k - r - 1}, \mb{b}) = N_{X_1 (i, \mb{p})} + N_{X_3 (i, \mb{p})} + \frac{1}{2} N_{X_4 (i,
			\mb{p})} \label{second_term_eq12}
		\end{equation}
		Now we bound the inner summation in the second line of Eq.(\ref{second_term}).
		\begin{eqnarray}
		&  & \sum_{\mb{b} \in \mb{p} + \mathrm{span} (\mb{b}_1, \cdots,
			\mb{b}_{k - r - 1})} n_{\mb{b}} f_i (\mb{b}_1, \cdots,
		\mb{b}_{k - r - 1}, \mb{b}) \nonumber\\
		& = & \sum_{\mb{b} \in X_1 (i, \mb{p}) \cup X_2 (i, \mb{p})}
		n_{\mb{b}} f_i (\mb{b}_1, \cdots, \mb{b}_{k - r - 1},
		\mb{b}) \nonumber\\
		& = & \sum_{\mb{b} \in X_1 (i, \mb{p})} n_{\mb{b}} f_i
		(\mb{b}_1, \cdots, \mb{b}_{k - r - 1}, \mb{b}) + \nonumber \\
		& & \sum_{\mb{b} \in X_2 (i, \mb{p})} n_{\mb{b}} f_i
		(\mb{b}_1, \cdots, \mb{b}_{k - r - 1}, \mb{b})
		\nonumber\\
		& = & \sum_{\mb{b} \in X_1 (i, \mb{p})} n_{\mb{b}} \left(
		N_{X_2 (i, \mb{p})} + N_{X_3 (i, \mb{p})} + \frac{1}{2} N_{X_4 (i,
			\mb{p})} \right) \nonumber\\
		&  & + \sum_{\mb{b} \in X_2 (i, \mb{p})}
		n_{\mb{b}} \left( N_{X_1 (i, \mb{p})} + N_{X_3 (i, \mb{p})} +
		\frac{1}{2} N_{X_4 (i, \mb{p})} \right) \nonumber\\
		& = & N_{X_1 (i, \mb{p})} \left( N_{X_2 (i, \mb{p})} + N_{X_3 (i,
			\mb{p})} + \frac{1}{2} N_{X_4 (i, \mb{p})} \right) \nonumber\\
		& & + N_{X_2 (i, \mb{p})}
		\left( N_{X_1 (i, \mb{p})} + N_{X_3 (i, \mb{p})} + \frac{1}{2} N_{X_4 (i,
			\mb{p})} \right) \nonumber\\
		& = & 2 N_{X_1 (i, \mb{p})} N_{X_2 (i, \mb{p})} + \nonumber \\ 
        & & (N_{X_1 (i, \mb{p})} +
		N_{X_2 (i, \mb{p})}) \left( N_{X_3 (i, \mb{p})} + \frac{1}{2} N_{X_4 (i,
			\mb{p})} \right) \nonumber\\
		& \leq & \frac{1}{2} (N_{X_1 (i, \mb{p})} + N_{X_2 (i, \mb{p})})^2 + \nonumber \\ & & 
		(N_{X_1 (i, \mb{p})} + N_{X_2 (i, \mb{p})}) \left( N_{X_3 (i, \mb{p})} +
		\frac{1}{2} N_{X_4 (i, \mb{p})} \right) \nonumber\\
		& = & (N_{X_1 (i, \mb{p})} + N_{X_2 (i, \mb{p})}) \cdot \nonumber \\ & & \left[ \frac{1}{2}
		(N_{X_1 (i, \mb{p})} + N_{X_2 (i, \mb{p})} + N_{X_4 (i, \mb{p})}) + N_{X_3
			(i, \mb{p})} \right]  \nonumber \\
            \label{second_term_eq13}
		\end{eqnarray}
		where the third equation follows from Eq.(\ref{second_term_eq11}) and Eq.(\ref{second_term_eq12}).
		By Eq.(\ref{second_term_eq9}) and Eq.(\ref{second_term_eq11}), we have
		\begin{equation}
		Z (\mb{p}) = N_{\mb{p} + \mathrm{span} (\mb{b}_1, \cdots,
			\mb{b}_{k - r - 1})} = N_{X_1 (i, \mb{p})} + N_{X_2 (i, \mb{p})},
		\label{second_term_eq14}
		\end{equation}
		and
		\begin{equation}
		X_1 (i, \mb{p}) \cup X_2 (i, \mb{p}) \cup X_4 (i, \mb{p}) =
		\mathrm{span}^c_{} (\mb{b}_1, \cdots, \mb{b}_{k - r - 1}).
		\label{eq15}
		\end{equation}
		Thus
		\begin{equation}
		N_{X_1 (i, \mb{p})} + N_{X_2 (i, \mb{p})} + N_{X_4 (i, \mb{p})} =
		N_{\mathrm{span}^c_{} (\mb{b}_1, \cdots, \mb{b}_{k - r - 1})}.
		\label{second_term_eq16}
		\end{equation}
		Therefore, by Eq.(\ref{second_term_eq9}), Eq.(\ref{second_term_eq14}), Eq.(\ref{second_term_eq16}) and the definition of $f_i(\mb{b}_1, \cdots, \mb{b}_{k - r - 1})$ we have
		\begin{eqnarray}
		&  & (N_{X_1 (i, \mb{p})} + N_{X_2 (i, \mb{p})}) \cdot \nonumber \\ & & \left[ \frac{1}{2}
		(N_{X_1 (i, \mb{p})} + N_{X_2 (i, \mb{p})} + N_{X_4 (i, \mb{p})}) + N_{X_3
			(i, \mb{p})} \right] \nonumber\\
		& = & Z (\mb{p}) \cdot ( \frac{1}{2} N_{\mathrm{span}^c_{} (\mb{b}_1,
			\cdots, \mb{b}_{k - r - 1})} + \nonumber \\ & & N_{\mb{b}_i + 
			\mathrm{span}^{\backslash i} (\mb{b}_1, \cdots, \mb{b}_{k - r -
				1})} ) \nonumber\\
		& = & Z (\mb{p}) f_i (\mb{b}_1, \cdots, \mb{b}_{k - r - 1}).
		\label{second_term_eq17}
		\end{eqnarray}
		Substitute Eq.(\ref{second_term_eq17}) into Eq.(\ref{second_term_eq13}), we get
		\begin{align}
		 & & \sum_{\mb{b} \in \mb{p} + \mathrm{span} (\mb{b}_1, \cdots,
			\mb{b}_{k - r - 1})} n_{\mb{b}} f_i (\mb{b}_1, \cdots,
		\mb{b}_{k - r - 1}, \mb{b}) \nonumber \\
        &\leq& Z (\mb{p}) f_i
		(\mb{b}_1, \cdots, \mb{b}_{k - r - 1}).
		\end{align}
		Now we are able to prove Eq.(\ref{second_term}),
		\begin{eqnarray}
		&  & \frac{1}{Z (\mb{p})} \sum_{\mb{b} \in \mb{p} + \mathrm{span}
			(\mb{b}_1, \cdots, \mb{b}_{k - r - 1})} n_{\mb{b}} \cdot \nonumber \\
		& & \sum_{i = 1}^{k - r - 1} f_i (\mb{b}_1, \cdots, \mb{b}_{k - r
			- 1}, \mb{b}) M_{\mb{b}_i} \nonumber\\
		& = & \frac{1}{Z (\mb{p})} \sum_{i = 1}^{k - r - 1} M_{\mb{b}_i}
		\sum_{\mb{b} \in \mb{p} + \mathrm{span} (\mb{b}_1, \cdots,
			\mb{b}_{k - r - 1})} n_{\mb{b}} \cdot \nonumber \\ & & f_i (\mb{b}_1, \cdots,
		\mb{b}_{k - r - 1}, \mb{b}) \nonumber\\
		& \leq & \frac{1}{Z (\mb{p})} \sum_{i = 1}^{k - r - 1} M_{\mb{b}_i}
		Z (\mb{p}) f_i (\mb{b}_1, \cdots, \mb{b}_{k - r - 1})
		\nonumber\\
		& = & \sum_{i = 1}^{k - r - 1} M_{\mb{b}_i} f_i (\mb{b}_1,
		\cdots, \mb{b}_{k - r - 1}). \label{second_term_eq18}
		\end{eqnarray}
		This finishes bounding the second term in Eq.(\ref{inductive_eq5}).

		Finally, combine Eq.(\ref{inductive_eq5}), Eq.(\ref{first_term}) and Eq.(\ref{second_term}), we have for all $\mb{p}
		\in \{ \mb{p}_1, \cdots, \mb{p}_{2^{r + 1} - 1} \}$ such that $Z (\mb{p}) > 0$
		\begin{eqnarray}\label{First_Layer_I}
		&  & \mathrm{Err}^{(r + 1)} (\mb{b}_1, \cdots, \mb{b}_{k - r -
			1}) \label{eq19} \nonumber\\
		& \leq & \mathrm{OPT}_k + \lambda_r \sum_{\mb{c} \in \mathrm{span}^c_{}
			(\mb{b}_1, \cdots, \mb{b}_{k - r - 1}, \mb{p})}
		L_{\mb{c}} \nonumber \\& & + \sum_{i = 1}^{k - r - 1} f_i (\mb{b}_1, \cdots,
		\mb{b}_{k - r - 1}) M_{\mb{b}_i} \nonumber\\
		&  & + \left( 1 + \frac{1}{2 Z (p)} N_{\mathrm{span}^c_{}
			(\mb{b}_1, \cdots, \mb{b}_{k - r - 1}, \mb{p})} \right) \cdot \nonumber \\ & &
		\sum_{\mb{b} \in \mb{p} + \mathrm{span} (\mb{b}_1, \cdots,
			\mb{b}_{k - r - 1})} L_{\mb{b}}.
		\end{eqnarray}
		Note that $\mathrm{span}^c_{} (\mb{b}_1, \cdots, \mb{b}_{k - r - 1}) =
		\mathrm{span}^c_{} (\mb{b}_1, \cdots, \mb{b}_{k - r - 1}, \mb{p})
		\cup [\mb{p} + \mathrm{span} (\mb{b}_1, \cdots, \mb{b}_{k - r -
			1})]$, we have
		\begin{eqnarray}\label{Minus}
		&&\sum_{\mb{c} \in \mathrm{span}^c_{} (\mb{b}_1, \cdots,
			\mb{b}_{k - r - 1}, \mb{p})} L_{\mb{c}} \nonumber \\ &=& \sum_{\mb{c}
			\in \mathrm{span}^c_{} (\mb{b}_1, \cdots, \mb{b}_{k - r - 1})}
		L_{\mb{c}} \nonumber \\ &-& \sum_{\mb{b} \in \mb{p} + \mathrm{span}
			(\mb{b}_1, \cdots, \mb{b}_{k - r - 1})} L_{\mb{b}}.
		\end{eqnarray}
		Substitute Eq.(\ref{Minus}) into Eq.(\ref{First_Layer_I}), we obtain
		\begin{eqnarray}
		&  & \mathrm{Err}^{(r + 1)} (\mb{b}_1, \cdots, \mb{b}_{k - r -
			1}) \nonumber\\
		& \leq & \mathrm{OPT}_k + \lambda_r \sum_{\mb{c} \in \mathrm{span}^c_{}
			(\mb{b}_1, \cdots, \mb{b}_{k - r - 1})} L_{\mb{c}} + \nonumber\\
            & & 
		\sum_{i = 1}^{k - r - 1} f_i (\mb{b}_1, \cdots, \mb{b}_{k - r
			- 1}) M_{\mb{b}_i} +  \nonumber\\
		&  &  \left( 1 + \frac{1}{2 Z (\mb{p})} N_{\mathrm{span}^c_{}
			(\mb{b}_1, \cdots, \mb{b}_{k - r - 1}, \mb{p})}-\lambda_r \right) \cdot \nonumber \\
            & & 
		\sum_{\mb{b} \in p + \mathrm{span} (\mb{b}_1, \cdots,
			\mb{b}_{k - r - 1})} L_{\mb{b}}. \label{First_Layer_II}
		\end{eqnarray}
		Now we finish the first layer weighted averaging. Eq.(\ref{First_Layer_II}) is the bound obtained by averaging within the coset indexed by $\mb{p}$. This bound will be further used in the second layer.
		
		Before we move to the second layer averaging, we point out that the inductive step $r \rightarrow r+1$ for the special case $r = 0$ has already been proved; and we do not need to involve the second layer. To see this, note that when $r=0$, $\mathrm{span}^c_{} (\mb{b}_1, \cdots, \mb{b}_{k - r  - 1}, \mb{p}) = \emptyset$ and $\mb{p} + \mathrm{span} (\mb{b}_1, \cdots,  \mb{b}_{k - r - 1}) = \mathrm{span}^c (\mb{b}_1, \cdots,  \mb{b}_{k - r - 1})$. Substitute these two equalities into Eq.(\ref{First_Layer_I}) yields
		\begin{eqnarray*}
			& & \mathrm{Err}^{(1)} (\mb{b}_1, \cdots, \mb{b}_{k - 1}) \\
			& = & \mathrm{OPT}_k + \sum_{i = 1}^{k - 1} f_i (\mb{b}_1, \cdots,
			\mb{b}_{k - 1}) M_{\mb{b}_i} + \\ 
            & &  \sum_{\mb{b} \in
				\mathrm{span}^c (\mb{b}_1, \cdots, \mb{b}_{k - 1})}
			L_{\mb{b}}\\
			& = & \mathrm{OPT}_k + \sum_{i = 1}^{k - 1} f_i (\mb{b}_1, \cdots,
			\mb{b}_{k - 1}) M_{\mb{b}_i} + \\
            & & \lambda_{1}
			\sum_{\mb{b} \in \mathrm{span}^c (\mb{b}_1, \cdots,
				\mb{b}_{k - 1})} L_{\mb{b}},
		\end{eqnarray*}
		where the last step uses $\lambda_{r + 1} = \lambda_1 = 1$.
		
		Now we will move to the second layer averaging, and we will assume $r \ge 1$ since $r=0$ has already been proved.\\
		
		\noindent\textbf{Second Layer Weighted Averaging (over the cosets)}\\
		
		In this layer we will conduct weighted averaging over all the nontrivial cosets. A coset $\mb{p}_i+\mathrm{span}(\mb{b}_1,\ldots,\mb{b}_{k-r-1})$ is nontrivial if $Z(\mb{p}_i)>0$ and $i \neq 0$ (i.e., the coset is not equal to \\ $\mathrm{span}(\mb{b}_1,\ldots,\mb{b}_{k-r-1})$). In Eq.(\ref{First_Layer_II}) we already obtain an upper bound for $\mathrm{Err}^{r+1}$ by weighted averaging within each nontrivial coset. In this layer we will prove the theorem by averaging over all the nontrivial cosets based on Eq.(\ref{First_Layer_II}). Note that in the upper bound Eq.(\ref{First_Layer_II}), only the last term depends on the coset. So we will focus on this term. For notational simplicity, denote this last term as $H$:
		\begin{eqnarray}\label{last_term_H}
		H &:=& \left( 1 + \frac{1}{2 Z (\mb{p})} N_{\mathrm{span}^c_{}
			(\mb{b}_1, \cdots, \mb{b}_{k - r - 1}, \mb{p})}-\lambda_r \right) \cdot \nonumber \\
		& & \sum_{\mb{b} \in \mb{p} + \mathrm{span} (\mb{b}_1, \cdots,
			\mb{b}_{k - r - 1})} L_{\mb{b}}.
		\end{eqnarray}
		Note that
		\begin{eqnarray}
         &[\mb{p}_i + \mathrm{span} (\mb{b}_1, \cdots, \mb{b}_{k - r -
			1})]\nonumber \\
             \cup& \mathrm{span}^c (\mb{b}_1, \cdots, \mb{b}_{k - r -
			1}, \mb{p}_i) \nonumber \\ =& \mathrm{span}^c (\mb{b}_1, \cdots, \mb{b}_{k -
			r - 1}), 
        \end{eqnarray}
		and
		\begin{eqnarray}
        &\bigcup_{i = 1}^{2^{r + 1} - 1} \mb{p}_i + \mathrm{span} (\mb{b}_1,
		\cdots, \mb{b}_{k - r - 1}) \nonumber \\ =& \mathrm{span}^c (\mb{b}_1,
		\cdots, \mb{b}_{k - r - 1}). 
        \end{eqnarray}
        We have
		\begin{eqnarray}& &  N_{\mathrm{span}^c (\mb{b}_1, \cdots, \mb{b}_{k - r - 1}, \mb{p}_i)} \nonumber \\ &=& N_{\mathrm{span}^c (\mb{b}_1, \cdots, \mb{b}_{k -
				r - 1})} - N_{\mb{p}_i + \mathrm{span} (\mb{b}_1, \cdots,
			\mb{b}_{k - r - 1})} \nonumber \\ &=& \sum_{j = 1}^{2^{r + 1} - 1} Z (\mb{p}_j) -
		Z (\mb{p}_i). 
        \end{eqnarray}
		Thus for a nontrivial coset indexed by $\mb{p}_i$ we have
		\begin{eqnarray}
		& &1 + \frac{1}{2 Z (\mb{p}_i)} N_{\mathrm{span}^c_{} (\mb{b}_1, \cdots,
			\mb{b}_{k - r - 1}, \mb{p}_i)} - \lambda_r \nonumber \\ &=& \frac{1}{2 Z
			(\mb{p}_i)} \left[ \sum_{j = 1}^{2^{r + 1} - 1} Z (\mb{p}_j) - (2
		\lambda_r - 1) Z (\mb{p}_i) \right] \nonumber \\ \label{Second_Layer_eq21}
		\end{eqnarray}
		By Eq.(\ref{last_term_H}) and Eq.(\ref{Second_Layer_eq21}), we have that for all $i \in [2^{r + 1} - 1]$ such that $Z
		(\mb{p}_i) > 0$,
		\begin{eqnarray}
		H &=& \frac{1}{2 Z (\mb{p}_i)} \left[ \sum_{j = 1}^{2^{r + 1} - 1} Z
		(\mb{p}_j) - (2 \lambda_r - 1) Z (\mb{p}_i) \right] \nonumber \\ & & \sum_{\mb{b} \in
			\mb{p}_i + \mathrm{span} (\mb{b}_1, \cdots, \mb{b}_{k - r - 1})}
		L_{\mb{b}} \label{Second_Layer_eq23}
		\end{eqnarray}
		
		Before we start the weighted averaging, we need to treat a special case differently. Consider the case that there exists a nontrivial coset indexed by $\mb{p}_i$ such that\\
		$\sum_{j =  1}^{2^{r + 1} - 1} Z (\mb{p}_j) - (2 \lambda_r - 1) Z (\mb{p}_i) \leq 0$. In this case, $H \le 0$ as a result of Eq.(\ref{Second_Layer_eq23}). Combining Eq.(\ref{First_Layer_II}) and the fact that $\lambda_r \le \lambda_{r+1}$, we have
		\begin{eqnarray*}
			& & \mathrm{Err}^{(r + 1)} (\mb{b}_1, \cdots, \mb{b}_{k - r - 1}) \\
			& \leq & \mathrm{OPT}_k + \lambda_r \sum_{\mb{c} \in \mathrm{span}^c_{}
				(\mb{b}_1, \cdots, \mb{b}_{k - r - 1})} L_{\mb{c}} + \\ & & 
			\sum_{i = 1}^{k - r - 1} f_i (\mb{b}_1, \cdots, \mb{b}_{k - r
				- 1}) M_{\mb{b}_i}\\
			& \leq & \mathrm{OPT}_k + \lambda_{r + 1} \sum_{\mb{c} \in
				\mathrm{span}^c_{} (\mb{b}_1, \cdots, \mb{b}_{k - r - 1})}
			L_{\mb{c}} +  \\ 
            & & \sum_{i = 1}^{k - r - 1} f_i (\mb{b}_1, \cdots,
			\mb{b}_{k - r - 1}) M_{\mb{b}_i}
		\end{eqnarray*}
		So the theorem is true for this special case.
		
		Below we conduct the weighted averaging assuming all the nontrivial cosets satisfy
		\[\sum_{j = 1}^{2^{r +  1} - 1} Z (\mb{p}_j) - (2 \lambda_r - 1) Z (\mb{p}_i) > 0.\]
		The weights are set as follows. We only give positive weights to nontrivial cosets. We assign to a nontrivial coset indexed by $\mb{p}_i$ weight $w_i$ given by:
		\[ w_i = \frac{2 Z (\mb{p}_i)}{\sum_{j = 1}^{2^{r + 1} - 1} Z (\mb{p}_j) -
			(2 \lambda_r - 1) Z (\mb{p}_i)}. \]
		By Eq.(\ref{Second_Layer_eq23}) we thus have
		\[ w_iH = \sum_{\mb{b} \in \mb{p}_i + \mathrm{span}
			(\mb{b}_1, \cdots, \mb{b}_{k - r - 1})} L_{\mb{b}}. \]
		Hence
		\begin{eqnarray}
		\sum_{i = 1}^{2^{r + 1} - 1} w_i H &=& \sum_{i = 1}^{2^{r + 1} - 1}
		\sum_{\mb{b} \in \mb{p}_i + \mathrm{span} (\mb{b}_1, \cdots,
			\mb{b}_{k - r - 1})} L_{\mb{b}} \nonumber \\ &=& \sum_{\mb{b} \in \mathrm{span}^c_{}
			(\mb{b}_1, \cdots, \mb{b}_{k - r - 1})} L_{\mb{b}}. \label{Second_Layer_eq24}
		\end{eqnarray}
		On the other hand, by Lemma \ref{Lemma:Sum_Ineq} we have
		\begin{eqnarray}
		& & \sum_{i = 1}^{2^{r + 1} - 1} w_i \nonumber \\
        & = & 2 \sum_{i = 1}^{2^{r + 1} - 1}
		\frac{Z (\mb{p}_i)}{\sum_{j = 1}^{2^{r + 1} - 1} Z (\mb{p}_j) - (2
			\lambda_r - 1) Z (\mb{p}_i)} \nonumber\\
		& \geq & \frac{2(2^{r + 1} - 1)}{2^{r + 1} - 1 - (2 \lambda_r - 1)}
		\nonumber\\
		& = & \frac{2^{r + 1} - 1}{2^r - \lambda_r} \label{Second_Layer_eq25}
		\end{eqnarray}
		Observe that  $(2^r - 1) \lambda_r = r 2^{r - 1}$, we thus have
		
		\[(2^{r + 1} - 1)
		\lambda_{r + 1} - 2 (2^r - 1) \lambda_r = (r + 1) 2^r - r 2^r = 2^r,\]
		and therefore,
		\begin{equation}\label{Second_Layer_lambda}
		(2^{r + 1} - 1) \lambda_{r + 1} - (2^{r + 1} - 1) \lambda_r =  2^r - \lambda_r.
		\end{equation}
		Combining Eq.(\ref{Second_Layer_lambda}) and Eq.(\ref{Second_Layer_eq25}) we obtain
		\begin{equation}
		\sum_{i = 1}^{2^{r + 1} - 1} w_i \geq \frac{1}{\lambda_{r + 1} -
			\lambda_r}. \label{Second_Layer_eq26}
		\end{equation}
		By Eq.(\ref{Second_Layer_eq24}) and Eq.(\ref{Second_Layer_eq26}) we have
		\begin{equation}\label{Second_Layer_H_Ineq}
		H \leq (\lambda_{r + 1} - \lambda_r) \sum_{\mb{b} \in
			\mathrm{span}^c_{} (\mb{b}_1, \cdots, \mb{b}_{k - r - 1})}
		L_{\mb{b}}.
		\end{equation}
		Combining Eq.(\ref{Second_Layer_H_Ineq}) with Eq.(\ref{First_Layer_II}) and the definition of $H$ in Eq.(\ref{last_term_H}) we finally obtain,
		\fontsize{10pt}{13pt}\selectfont
		\begin{eqnarray} & &\mathrm{Err}^{(r + 1)} (\mb{b}_1, \cdots, \mb{b}_{k - r - 1}) \nonumber \\
		&\leq& \mathrm{OPT}_k + \lambda_{r + 1} \sum_{\mb{c} \in \mathrm{span}^c_{}
			(\mb{b}_1, \cdots, \mb{b}_{k - r - 1})} L_{\mb{c}} \nonumber \\ & &+
		\sum_{i = 1}^{k - r - 1} f_i (\mb{b}_1, \cdots, \mb{b}_{k - r
			- 1}) M_{\mb{b}_i}. 
        \end{eqnarray}
		\fontsize{11pt}{13pt}\selectfont
		
		This finishes the inductive step and completes the proof of the theorem.
		
	\end{proof}
	
	\section{Proof of Theorem \ref{Thm:Lower_Bound}}
	\begin{proof}
		The proof is constructive. For every $k$ and $\epsilon > 0$, we will explicitly give a matrix $\mb{A}$, when taken as input, the column-selection algorithm has error lower bounded by $\left(\frac{k}{2}+1+\frac{k}{2(2^k-1)}-\epsilon \right) \cdot \mathrm{OPT}_k$. For simplicity, we will construct a square matrix, i.e., $d=n$. Based on this construction, rectangular matrices such that $d|n$ work as well. We denote the matrix to be constructed as $\mb{A}$. The idea is to let this matrix be an approximate low rank matrix. That is, $\mb{A}$ is the product of two rank-$k$ matrices plus a sparse (noise) matrix.
		
		We assume the size of $\mb{A}$ satisfies $k|n$ and $(2^k-1)|n$. Let $p:=n/k$ and $q:=n/(2^k-1)$. $\mb{A}$ is constructed as follows.
		\begin{equation}
		\textbf{A} := \textbf{L} \textbf{R} + \textbf{I}_n,
		\end{equation}
		where $\textbf{I}_n$ is the identity matrix, $\textbf{L}$ is an $n \times k$ matrix defined as:
		\[
		\textbf{L} := \left(\begin{array}{cccc}
		\textbf{c}_1 & \textbf{c}_2 & \ldots & \textbf{c}_k
		\end{array}\right),
		\]
		where $\textbf{c}_i = \left( \underset{( i - 1) p}{\underbrace{0 \ldots 0}}
		\underset{p}{\underbrace{1 \ldots 1}}  \underset{( k - i) p}{\underbrace{0
				\ldots 0}} \right)^T$. For each $\textbf{c}_i$ all but $p$ elements are zero. $\textbf{R}$ is a $k \times n$ matrix defined as:
		\[\textbf{R} := \left(\begin{array}{cccc}
		\textbf{b}_1 \otimes \textbf{1}_q & \textbf{b}_2 \otimes \textbf{1}_q & \ldots & \textbf{b}_{2^k - 1} \otimes \textbf{1}_q
		\end{array}\right),
		\]
		where $\textbf{1}_q = ( 1 1 \ldots 1)^{T}$ is the all-one vector of size $q$; $\otimes$ is the Kronecker product; and $\textbf{b}_i$, which is a $k$-dimensional column vector, is the binary
		representation of $i$, e.g., $\textbf{b}_1 = ( 0 \ldots 0 0 1)^T$, $\textbf{b}_2 = ( 0 \ldots 0 1 0)^T$, $\textbf{b}_3 = ( 0 \ldots 0 1 1)^T$, etc. Below is a visualization of $\textbf{L}$ and $\textbf{R}$
		\[ \textbf{L} = \left(\begin{array}{cccc}
		1 & 0 & \ldots & 0\\
		\vdots & \vdots &  & \vdots\\
		1 & 0 & \ldots & 0\\
		0 & 1 & \ldots & 0\\
		\vdots & \vdots &  & \vdots\\
		0 & 1 & \ldots & 0\\
		\vdots & \vdots &  & \vdots\\
		0 & 0 & \ldots & 1\\
		\vdots & \vdots &  & \vdots\\
		0 & 0 & \ldots & 1
		\end{array}\right)_{n \times k} \]
        \[\textbf{R}=\left(\begin{array}{cccccccccc}
		0 &  & 0 & 0  & & 0 &  & 1 &  & 1\\
		\vdots & \ldots & \vdots & \vdots & \ldots & \vdots & \ldots & \vdots & \ldots & \vdots\\
		0 &  & 0 & 1 & & 1 &  & 1 & & 1\\
		1 &  & 1 & 0 & & 0 &  & 1 & & 1
		\end{array}\right)_{k \times n}\]
		
		The properties of $\mathbf{L}$ and $\mathbf{R}$ we will use are 1) The $1$s in different columns of $\mathbf{L}$ do not overlap. Any non-zero linear combination of the columns of $\mathbf{L}$ contains at least $q$ $1$s; 2) The columns of $\mathbf{R}$ contain all non-zero $k$-dimensional vectors (and repeat $q$ times).
		
		To prove the theorem we will show that no matter which $k$ columns of $\mathbf{A}$ are chosen to form the basis matrix, the induced approximation error is at least the desired lower bound for sufficiently large $n$.
		
		We will discuss two cases separately. In the first case we assume that the $k$ columns chosen from $\mathbf{A}$ has such a property: These $k$ columns $\mathbf{A}-\mathbf{I}_n$ are linear independent. In the second case we assume they are linear dependent.
		
		Now consider the first case. Let $(\mb{U}, \mb{V})$ be the output of the column-selection algorithm for input matrix $\mb{A}$. $\mb{U}$ consists of $k$ columns of $\mb{A}$. Let the indices of these $k$ columns be $i_1,\ldots,i_k$. Let $\mb{U}^-$ be the matrix consisting of the columns $i_1,\ldots,i_k$ of $\mb{A}-\mb{I}_n$. Let us consider the rank-$k$ approximation of $\mb{A}-\mb{I}_n$. $\mb{U}^-$ must be an optimal basis matrix because $\mb{A}-\mb{I}_n=\mb{L} \mb{R}$ is of rank $k$ and by our assumption $\mb{U}^-$ consists of $k$ linear independent columns, each is a linear combination of the columns of $\mb{L}$. So there is a $k$ by $n$ matrix $\mb{V}^-$ such that $\mb{A}-\mb{I}_n = \mb{U}^- \mb{V}^-$.
		
		Our local goal is to show $\mb{V} = \mb{V}^-$ if $n$ is sufficiently large. Consider the column-wise approximation error of $\mb{A}$ by $\mb{U} \mb{V}$. Let $\mb{A}=(\mb{a}_1,\ldots,\mb{a}_n)$, $\mb{V}=(\mb{v}_1,\ldots,\mb{v}_n)$, and $\mb{V}^-=(\mb{v}^-_1,\ldots,\mb{v}^-_n)$. For brevity, we will use $|\cdot|$ to represent $\|\|_F$. For column $i$, on the one hand we have,
		
		\begin{eqnarray}\label{Triangle_LB}
		 & | \textbf{U} \textbf{v}_i - \textbf{a}_i | \nonumber \\
        = & | \textbf{U} \textbf{v}_i -
		\textbf{U} \textbf{v}^-_i + \textbf{U} \textbf{v}^-_i - \textbf{U}^- \textbf{v}^-_i + \textbf{U}^- \textbf{v}^-_i - \textbf{a}_i | \nonumber\\
		 \geq  & | \textbf{U} (\textbf{v}_i - \textbf{v}^-_i) | - | (\textbf{U}  -
		\textbf{U}^-) \textbf{v}^-_i | - | \textbf{U}^- \textbf{v}^-_i - \textbf{a}_i |
		\end{eqnarray}
		On the other hand,
		\begin{eqnarray}\label{Triangle_UB}
		& | \textbf{U} \textbf{v}_i - \textbf{a}_i |  \nonumber \\
         = & | \textbf{U} \textbf{v}_i -
		\textbf{U} \textbf{v}^-_i + \textbf{U} \textbf{v}^-_i - \textbf{U}^- \textbf{v}^-_i + \textbf{U}^- \textbf{v}^-_i - \textbf{a}_i | \nonumber\\
		 \le & | \textbf{U} (\textbf{v}_i - \textbf{v}^-_i) | + | (\textbf{U}  -
		\textbf{U}^-) \textbf{v}^-_i | + | \textbf{U}^- \textbf{v}^-_i - \textbf{a}_i |
		\end{eqnarray}
		Let us analyze the three terms in the RHS of Eq.(\ref{Triangle_LB}) and Eq.(\ref{Triangle_UB}) respectively. If $\mb{V} \neq \mb{V}^-$, there must be a column $i$ such that $\mb{v}_i \neq \mb{v}^-_i$. So the first term $\textbf{U} (\textbf{v}_i - \textbf{v}^-_i)$ is a non-zero linear combination of the columns of $\mb{U}$. As $\mb{U}$ consists of $k$ columns of $\mb{L}\mb{R}+\mb{I}_n$, by the construction of $\mb{L}$ and $\mb{R}$, it is not difficult to see that when $\mb{v}_i \neq \mb{v}^-_i$, $|\textbf{U} (\textbf{v}_i - \textbf{v}^-_i)| \ge p-1$. For the second term $|(\textbf{U}  -  \textbf{U}^-) \textbf{v}^-_i| $, since each column of $\textbf{U}  -  \textbf{U}^-$ is of form $\mb{e}_j= \left( 0\dots010\dots0\right)^T$(only the $j$-th element is $1$) for some $j$, we have $|(\textbf{U}  -  \textbf{U}^-) \textbf{v}^-_i| = |\mb{v}^-_i| \le k$. For the third term $\textbf{U}^- \textbf{v}^-_i - \textbf{a}_i$, because $\textbf{U}^- \textbf{V}^- = \mb{L} \mb{R}$, we have $|\textbf{U}^- \textbf{v}^-_i - \textbf{a}_i| = |\mb{e}_i| = 1$.
		
		Combining the above argument, if $\mb{v}_i \neq \mb{v}^-_i$, then $| \textbf{U} \textbf{v}_i - \textbf{a}_i | \ge p-k-2$. If $\mb{v}_i = \mb{v}^-_i$, then $
		| \textbf{U} \textbf{v}_i - \textbf{a}_i | \le k+1$. Thus for $n$ sufficiently large so that $p=n/k$ is lager than $2k+3$, we must have $\mb{V} = \mb{V}^-$.
		
		Now we are able to calculate the column-wise approximation error. For $i$ such that $\mb{a}_i$ is a column of $\mb{U}$, the approximation error of this column is zero. For $i$ such that $\mb{a}_i$ is not a column of $\mb{U}$, we have:
		\begin{eqnarray*}
			 & & | \textbf{U} \textbf{v}_i - \textbf{a}_i | \\
             & = & | \textbf{U} \textbf{v}_i -
			\textbf{U} \textbf{v}^-_i + \textbf{U} \textbf{v}^-_i - \textbf{U}^- \textbf{v}^-_i + \textbf{U}^- \textbf{v}^-_i - \textbf{a}_i | \nonumber\\
			 & = & | (\textbf{U}  -
			\textbf{U}^-) \textbf{v}^-_i + \textbf{U}^- \textbf{v}^-_i - \textbf{a}_i |\nonumber\\
			& = & | (\textbf{U}  -
			\textbf{U}^-) \textbf{v}^-_i + \textbf{e}_i |.
		\end{eqnarray*}
		As argued above, each column of $\textbf{U}  -  \textbf{U}^-$ is of form $\mb{e}_j$ for some $j$. Note that $i$ must be different to $j$ since $\mb{a}_i$ is not a column of $\mb{U}$. Thus we have $| (\textbf{U}  -  \textbf{U}^-) \textbf{v}^-_i + \textbf{e}_i | = |\mb{v}^-_i|+1$.
		Therefore,
		\begin{align}
		 &| \textbf{U} \textbf{V} -\textbf{A} | \nonumber \\=& \sum_{i = 1}^n | \textbf{U} \textbf{v}_i - \textbf{a}_i | \nonumber \\
		=& \sum_{i = 1}^n ( 1 + |\textbf{v}^-_i|) - 2 k \nonumber \\=& n - 2 k + \sum_{i =
			1}^n | \textbf{v}^-_i |,
		\end{align}
		where the second equation holds because for a column $i$ so that $\mb{a}_i$ is in $\mb{U}$, we have $|\mb{v}^-_i|=|\mb{v}_i|=1$; and the total number of $i$ such that $\mb{a}_i$ belongs to $\mb{U}$ is $k$.
		
		Now let us examine $\mb{v}^-_i$. Because $\textbf{A} - \textbf{I}_n = \mb{L} \mb{R}$ contains all $2^k - 1$
		non-zero linear combinations of the column vectors of $\textbf{L}$; and recall that $\mb{U}^-$ consists of $k$ linear independent vectors, each is a linear combination of the vectors of $\mb{L}$, $\mb{V}^-$ must be a column-shuffled version of $\mb{R}$, i.e., the ordering of the columns changed. From this fact and the construction of $\mb{R}$ it is not difficult to see that \begin{equation}
		\sum_{i = 1}^n | \textbf{v}^-_i | = q k 2^{k - 1}.
		\end{equation}
		So the approximation error for the column-selection algorithm is $| \textbf{A} - \textbf{U} \textbf{V} | = n + q k 2^{k - 1} - 2k$. Let us compare this approximation error to that of the optimal approximation. Note that $(\mb{L},\mb{R})$ yields an approximation error $n$. So the optimal error is at most $n$. Thus we have the approximation ratio of the column-selection algorithm is at least $\frac{n + q k 2^{k - 1} - 2 k}{n} = 1 + \frac{k}{2} + \frac{k}{2 ( 2^k  - 1)} - \frac{2 k}{n}$. By letting $n$ larger than $2k/\epsilon$ we prove the lower bound.
		
		Finally consider the second case that the $k$ columns chosen from $\mathbf{A}$ has the property that the corresponding $k$ columns of $\mathbf{A}-\mathbf{I}_n$ are linear dependent. It is not difficult to check, from the construction of $\mb{L}$ and $\mb{R}$, that the columns of $\mb{L}$ which cannot be represented by the $k$ columns of $\mathbf{A}-\mathbf{I}_n$ must induce a total approximation error at least $pq-n$. So the approximation ratio is no less than $\frac{pq-n}{n}=\frac{n}{k(2^k-1)}-1$, which goes to infinity as $n$ getting large. This complete the proof.
	\end{proof}
	\section{Proof of Theorem \ref{Thm:Boolean}}
Let $\mc{S}_1,\ldots,\mc{S}_{2^k -1}$ be an ordering of the $2^k -1$ non-empty subsets of $[k]$ so that $n_{\mc{S}_1} \le \ldots \le n_{\mc{S}_{2^k-1}}$, and $\mc{S}_0=\emptyset$. As described in Introduction, we will construct $\mc{B}_1,\ldots,\mc{B}_k$ in such a way that 1) $\mc{B}_i$ is a Boolean combination of $\mc{D}_{\mc{S}_l}$ for all $\ell \in [2^k-1]$; and 2) for all $\ell \in [2^k-1]$
\begin{equation}\label{ap_Boolean_Decomposition}
	\mc{U}_{\mc{S}_\ell} \symdiff \left(\bigcup_{i \in S_\ell} \mc{B}_i \right)
	\subseteq  \left(\bigcup_{\substack{\ell' \geq \ell}} \left( \mc{U}_{\mc{S}_{\ell'}} \symdiff \mc{D}_{\mc{S}_{\ell'}} \right)\right).
\end{equation}

\begin{lemma}\label{ap_lemma:apprm_ratio_boolean}
	If Eq.(\ref{ap_Boolean_Decomposition}) is true for all $\ell \in [2^k -1]$, then the approximation ratio induced by the basis matrix whose columns are $\mc{B}_1,\ldots,\mc{B}_k$ is at most $2^k$.
\end{lemma}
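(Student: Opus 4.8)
The plan is to combine the hypothesis Eq.(\ref{ap_Boolean_Decomposition}) with a column-by-column triangle inequality for the symmetric difference and then exploit the sorted order $n_{\mc{S}_1}\le\dots\le n_{\mc{S}_{2^k-1}}$ to absorb the extra cost into $\OPT_k$. Throughout I would use the partition $[n]=\bigcup_{\mc{S}\subseteq[k]}\mc{J}_{\mc{S}}$ of the column indices of $\mb{A}$ induced by the fixed optimal solution $(\mb{U},\mb{V})$. Since the coefficient matrix associated with a basis can only be recomputed to decrease the error, it suffices to bound the error of the particular coefficient assignment in which a column $j\in\mc{J}_{\mc{S}_\ell}$ is approximated by $\bigcup_{i\in\mc{S}_\ell}\mc{B}_i$.

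First, fix $\ell\in[2^k-1]$ and $j\in\mc{J}_{\mc{S}_\ell}$. Applying $\abs{\mc{X}\symdiff\mc{Z}}\le\abs{\mc{X}\symdiff\mc{Y}}+\abs{\mc{Y}\symdiff\mc{Z}}$ with $\mc{Y}=\mc{U}_{\mc{S}_\ell}$, and then bounding $\abs{\mc{U}_{\mc{S}_\ell}\symdiff\bigcup_{i\in\mc{S}_\ell}\mc{B}_i}$ via Eq.(\ref{ap_Boolean_Decomposition}) together with subadditivity of cardinality over unions, I obtain
\[
\Abs{\mc{A}_j\symdiff\Bigl(\bigcup_{i\in\mc{S}_\ell}\mc{B}_i\Bigr)}\le\abs{\mc{A}_j\symdiff\mc{U}_{\mc{S}_\ell}}+\sum_{\ell'\ge\ell}\abs{\mc{U}_{\mc{S}_{\ell'}}\symdiff\mc{D}_{\mc{S}_{\ell'}}}.
\]
Summing over $j\in\mc{J}_{\mc{S}_\ell}$, then over $\ell\in[2^k-1]$, and adding the $\ell=0$ term $\sum_{j\in\mc{J}_{\mc{S}_0}}\abs{\mc{A}_j}$ (which equals the contribution of the columns with empty coefficient vector to both $\mathrm{Err}(\mc{B}_1,\dots,\mc{B}_k)$ and $\OPT_k$, since $\mc{S}_0=\emptyset$ forces both reconstructions to be the empty set) gives
\[
\mathrm{Err}(\mc{B}_1,\dots,\mc{B}_k)\le\OPT_k+\sum_{\ell=1}^{2^k-1}\sum_{\ell'\ge\ell}n_{\mc{S}_\ell}\,\abs{\mc{D}_{\mc{S}_{\ell'}}\symdiff\mc{U}_{\mc{S}_{\ell'}}}.
\]

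Next I would invoke the ordering: for $\ell'\ge\ell$ one has $n_{\mc{S}_\ell}\le n_{\mc{S}_{\ell'}}$, so the double sum is at most $\sum_{\ell=1}^{2^k-1}\sum_{\ell'\ge\ell}n_{\mc{S}_{\ell'}}\abs{\mc{D}_{\mc{S}_{\ell'}}\symdiff\mc{U}_{\mc{S}_{\ell'}}}$; interchanging the summation order, the term for index $\ell'$ acquires a coefficient $\ell'\le 2^k-1$. Finally, because $\mc{D}_{\mc{S}_{\ell'}}$ is a nearest neighbour of $\mc{U}_{\mc{S}_{\ell'}}$ among the columns of $\mb{A}$, each of the $n_{\mc{S}_{\ell'}}$ columns in $\mc{J}_{\mc{S}_{\ell'}}$ is at least as far, hence $n_{\mc{S}_{\ell'}}\abs{\mc{D}_{\mc{S}_{\ell'}}\symdiff\mc{U}_{\mc{S}_{\ell'}}}\le\sum_{j\in\mc{J}_{\mc{S}_{\ell'}}}\abs{\mc{A}_j\symdiff\mc{U}_{\mc{S}_{\ell'}}}$, and summing over $\ell'$ this is at most $\OPT_k$. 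Putting the pieces together yields $\mathrm{Err}(\mc{B}_1,\dots,\mc{B}_k)\le\OPT_k+(2^k-1)\OPT_k=2^k\OPT_k$, as claimed.

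The estimates themselves are routine; the only point requiring care is the bookkeeping across the partition $\mc{J}_{\mc{S}}$ — in particular, handling the empty coset consistently and performing the reindexing $\sum_\ell\sum_{\ell'\ge\ell}=\sum_{\ell'}\sum_{\ell\le\ell'}$ \emph{before} the crude bound $\ell'\le 2^k-1$, since that is exactly where the sorted order of the $n_{\mc{S}_\ell}$ is used. No deeper obstacle is anticipated: the substantive work lies in the construction of $\mc{B}_1,\dots,\mc{B}_k$ guaranteeing Eq.(\ref{ap_Boolean_Decomposition}), which is assumed here.
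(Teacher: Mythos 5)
Your proof is correct and follows essentially the same route as the paper's: triangle inequality through $\mc{U}_{\mc{S}_\ell}$, the cardinality bound from Eq.(\ref{ap_Boolean_Decomposition}), summation over $\mc{J}_{\mc{S}_\ell}$ and $\ell$ with the empty-coset term added, monotonicity $n_{\mc{S}_\ell}\le n_{\mc{S}_{\ell'}}$, interchange of summation, and the nearest-neighbour bound. You also make explicit two details the paper leaves implicit — that it suffices to bound a particular (non-optimal) coefficient assignment, and why the $\mc{S}_0=\emptyset$ term is common to both sides — which is a helpful clarification, not a deviation.
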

\begin{proof}
	First we have for all $\ell$
	\begin{eqnarray*}
		\Abs{\mc{A}_j \symdiff \left(\bigcup_{i \in \mc{S}_\ell} \mc{B}_i\right)} & \leq & \abs{\mc{A}_j \symdiff \mc{U}_{S_\ell}} + \abs{\mc{U}_{\mc{S}_{\ell}}\symdiff (\bigcup_{i \in S_{\ell}}\mc{B}_i)} \nonumber \\
		& \leq & \abs{\mc{A}_j \symdiff \mc{U}_{\mc{S}_\ell}} + \sum_{\ell' \geq \ell} \abs{\mc{D}_{\mc{S}_{\ell'}} \symdiff \mc{U}_{\mc{S}_{\ell'}}}.
	\end{eqnarray*}
	Take summation both sides for all $j \in \mc{J}_{\mc{S}_{\ell}}$, we have

	\begin{align*}
	& \sum_{j \in \mc{J}_{\mc{S}_{\ell}}} \Abs{\mc{A}_j \symdiff \left(\bigcup_{i \in S_\ell} \mc{B}_i\right)} \\ \leq & ( \sum_{j \in \mc{J}_{\mc{S}_{\ell}}} \abs{\mc{A}_j \symdiff \mc{U}_{\mc{S}_\ell}} ) + \sum_{\ell' \geq \ell} n_{\mc{S}_{\ell}} \abs{\mc{D}_{\mc{S}_{\ell'}} \symdiff \mc{U}_{\mc{S}_{\ell'}}}.
	\end{align*}

	Let $\mathrm{Err}(\mc{B}_1,\ldots,\mc{B}_k)$ denote the approximation error induced by the basis $\mc{B}_1,\ldots,\mc{B}_k$; and $\OPT_k$ be the error of the optimal solution. Taking summation both sides for all $\ell \in [2^k-1]$ in the above inequality and then add $( \sum_{j \in \mc{J}_{\mc{S}_{0}}} \abs{\mc{A}_j} )$ both sides, we have
	\[
	\begin{split}
	\mathrm{Err}(\mc{B}_1,\ldots,\mc{B}_k) \leq \OPT_k + \sum_{\ell = 1}^{2^k-1} \sum_{\ell' \geq \ell} n_{\mc{S}_{\ell}} \abs{\mc{D}_{\mc{S}_{\ell'}} \symdiff \mc{U}_{\mc{S}_{\ell'}}}
	\end{split}
	\]
	Since $n_{\mc{S}_{\ell}} \leq n_{\mc{S}_{\ell'}}$ for all $\ell' \geq \ell$, the last inequality becomes
	\[
	\begin{split}
	\mathrm{Err}(\mc{B}_1,\dots,\mc{B}_k) \leq \OPT_k + \sum_{\ell = 1}^{2^k-1} \sum_{\ell' \geq \ell} n_{\mc{S}_{\ell'}} \abs{\mc{D}_{\mc{S}_{\ell'}} \symdiff \mc{U}_{\mc{S}_{\ell'}}}
	\end{split}
	\]
	Change the order of summations for $\ell$ and $\ell'$, and use the fact that
	$n_{\mc{S}_{\ell'}} \abs{\mc{D}_{\mc{S}_{\ell'}} \symdiff \mc{U}_{\mc{S}_{\ell'}}}  \leq \sum_{j \in \mc{J}_{\mc{S}_{\ell'}}} \abs{\mc{A}_j \symdiff \mc{U}_{\mc{S}_{\ell'}}}$, we finally get
	\[
	\begin{split}
	 & \mathrm{Err}(\mc{B}_1,\ldots,\mc{B}_k)  \\ 
    \leq & \OPT_k + \sum_{\ell' = 1}^{2^k -1} \ell' \sum_{j \in \mc{J}_{\mc{S}_{\ell'}}} \abs{\mc{A}_j \symdiff \mc{U}_{\mc{S}_{\ell'}}}\\
	\leq & \OPT_k + (2^k-1) \sum_{\ell' = 1}^{2^k -1} \sum_{j \in \mc{J}_{\mc{S}_{\ell'}}} \abs{\mc{A}_j \symdiff \mc{U}_{\mc{S}_{\ell'}}}\\
	\leq & 2^k\OPT_k.
	\end{split}
	\]
	This completes the proof of the lemma.
\end{proof}

Before describing the construction, we state the following results which will be frequently used in the rest of this section. Let $\mc{A,B,C,D}$ be sets.
\begin{lemma}
	\begin{enumerate}
		\item $(\mc{A} \cup \mc{B}) \setminus \mc{C} = (\mc{A} \setminus \mc{C}) \cup (\mc{B} \setminus \mc{C})$
		\item $(\mc{A} \cap \mc{B}) \setminus \mc{C} = (\mc{A} \setminus \mc{C}) \cap (\mc{B} \setminus \mc{C})$
		\item $\mc{A} \setminus (\mc{B} \cup \mc{C}) = (\mc{A} \setminus \mc{B}) \cap (\mc{A} \setminus \mc{C})$
		\item $\mc{A} \setminus (\mc{B} \cap \mc{C}) = (\mc{A} \setminus \mc{B}) \cup (\mc{A} \setminus \mc{C})$
	\end{enumerate}
\end{lemma}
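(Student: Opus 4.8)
The plan is to prove all four identities by \emph{element chasing}, reducing each to an elementary equivalence of propositional formulas. For a fixed element $x$ and a set $\mc{X}$, write $p_{\mc{X}}$ for the truth value of the statement ``$x \in \mc{X}$''. The three operations appearing in the lemma correspond to Boolean connectives: $x \in \mc{X} \cup \mc{Y}$ iff $p_{\mc{X}} \vee p_{\mc{Y}}$; $x \in \mc{X} \cap \mc{Y}$ iff $p_{\mc{X}} \wedge p_{\mc{Y}}$; and $x \in \mc{X} \setminus \mc{Y}$ iff $p_{\mc{X}} \wedge \neg p_{\mc{Y}}$. Since two sets coincide exactly when they have the same elements, it suffices, for each identity, to check that the Boolean formula describing membership in the left-hand side is logically equivalent to the one for the right-hand side under every assignment of truth values to $p_{\mc{A}}, p_{\mc{B}}, p_{\mc{C}}$; the ``iff'' then delivers both inclusions at once.

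Carrying this out: for (1) the two formulas are $(p_{\mc{A}} \vee p_{\mc{B}}) \wedge \neg p_{\mc{C}}$ and $(p_{\mc{A}} \wedge \neg p_{\mc{C}}) \vee (p_{\mc{B}} \wedge \neg p_{\mc{C}})$, equal by distributivity of $\wedge$ over $\vee$. For (2) they are $(p_{\mc{A}} \wedge p_{\mc{B}}) \wedge \neg p_{\mc{C}}$ and $(p_{\mc{A}} \wedge \neg p_{\mc{C}}) \wedge (p_{\mc{B}} \wedge \neg p_{\mc{C}})$, equal by commutativity, associativity, and idempotence of $\wedge$ (the literal $\neg p_{\mc{C}}$ is merely duplicated). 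For (3), membership in the left-hand side is $p_{\mc{A}} \wedge \neg(p_{\mc{B}} \vee p_{\mc{C}})$; De Morgan rewrites $\neg(p_{\mc{B}} \vee p_{\mc{C}})$ as $\neg p_{\mc{B}} \wedge \neg p_{\mc{C}}$, and duplicating $p_{\mc{A}}$ by idempotence of $\wedge$ gives $(p_{\mc{A}} \wedge \neg p_{\mc{B}}) \wedge (p_{\mc{A}} \wedge \neg p_{\mc{C}})$, which is membership in the right-hand side. For (4), the left-hand side is $p_{\mc{A}} \wedge \neg(p_{\mc{B}} \wedge p_{\mc{C}})$; De Morgan turns $\neg(p_{\mc{B}} \wedge p_{\mc{C}})$ into $\neg p_{\mc{B}} \vee \neg p_{\mc{C}}$, and distributing $\wedge$ over $\vee$ yields $(p_{\mc{A}} \wedge \neg p_{\mc{B}}) \vee (p_{\mc{A}} \wedge \neg p_{\mc{C}})$, which is membership in the right-hand side.

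There is no genuine obstacle here: each of the four is a one-line propositional tautology, and the only care needed is bookkeeping — recording which Boolean law (distributivity, a De Morgan law, or idempotence of $\wedge$) is invoked in which step, and making sure the formula manipulations are genuine logical equivalences rather than one-sided implications so that both set inclusions follow. An alternative, equally short route is to argue the two inclusions $\subseteq$ and $\supseteq$ directly for each identity, or to fix an ambient universe (for instance $\mc{A} \cup \mc{B} \cup \mc{C}$), write $\mc{X} \setminus \mc{Y} = \mc{X} \cap \mc{Y}^{c}$, and apply the Boolean-algebra laws of $\cap, \cup, {}^{c}$ directly. I would retain the propositional phrasing above, since it exposes the parallel structure of the four statements and avoids repeating essentially the same inclusion argument four separate times.
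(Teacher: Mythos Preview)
Your proof is correct. The paper does not actually supply a proof of this lemma at all --- it simply states the four identities as standard facts and moves on to the corollary that uses them --- so your element-chasing argument via the propositional translation is a perfectly acceptable way to fill in the omitted details, and there is nothing to compare against.
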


\begin{corollary}
	\label{ap_COR:BoundingSetDiff}
	By the previous lemma:
	\begin{enumerate}
		\item $(\mc{A} \cup \mc{B}) \setminus (\mc{C} \cup \mc{D}) \subseteq (\mc{A}\setminus \mc{C}) \cup (\mc{B} \setminus \mc{D})$
		\item $(\mc{A} \cap \mc{B}) \setminus (\mc{C} \cap \mc{D}) \subseteq (\mc{A}\setminus \mc{C}) \cup (\mc{B} \setminus \mc{D})$
	\end{enumerate}
	In particular:
	\begin{enumerate}
		\item $(\mc{A} \cup \mc{B}) \symdiff (\mc{C} \cup \mc{D}) \subseteq (\mc{A}\symdiff \mc{C}) \cup (\mc{B} \symdiff \mc{D})$
		\item $(\mc{A} \cap \mc{B}) \symdiff (\mc{C} \cap \mc{D}) \subseteq (\mc{A}\symdiff \mc{C}) \cup (\mc{B} \symdiff \mc{D})$
	\end{enumerate}
\end{corollary}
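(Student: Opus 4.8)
\textbf{Proof proposal for Corollary~\ref{ap_COR:BoundingSetDiff}.}
The plan is to obtain each of the four containments mechanically, by distributing the set difference over the outer union or intersection using the four identities of the preceding lemma and then weakening the result by discarding intersectands; the symmetric-difference statements then follow by splitting $\mc{X}\symdiff\mc{Y}=(\mc{X}\setminus\mc{Y})\cup(\mc{Y}\setminus\mc{X})$ and reassembling. (This corollary is the basic tool for bounding the symmetric difference between a union of the constructed sets $\mc{B}_i$ and a union of the optimal sets $\mc{U}_i$, which is what drives the GCSS error analysis for Theorem~\ref{Thm:Boolean}.)

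For the first containment, $(\mc{A}\cup\mc{B})\setminus(\mc{C}\cup\mc{D})\subseteq(\mc{A}\setminus\mc{C})\cup(\mc{B}\setminus\mc{D})$, I would first apply identity~(1) of the lemma to split the left-hand union, rewriting the set as $[\mc{A}\setminus(\mc{C}\cup\mc{D})]\cup[\mc{B}\setminus(\mc{C}\cup\mc{D})]$, then apply identity~(3) to each piece, getting $[(\mc{A}\setminus\mc{C})\cap(\mc{A}\setminus\mc{D})]\cup[(\mc{B}\setminus\mc{C})\cap(\mc{B}\setminus\mc{D})]$, and finally drop one intersectand from each piece — keeping $\mc{A}\setminus\mc{C}$ from the first and $\mc{B}\setminus\mc{D}$ from the second — which yields the claimed superset. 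The second containment is dual: apply identity~(4) to split off the intersection $\mc{C}\cap\mc{D}$ on the right, obtaining $[(\mc{A}\cap\mc{B})\setminus\mc{C}]\cup[(\mc{A}\cap\mc{B})\setminus\mc{D}]$, then apply identity~(2) to each piece to get $[(\mc{A}\setminus\mc{C})\cap(\mc{B}\setminus\mc{C})]\cup[(\mc{A}\setminus\mc{D})\cap(\mc{B}\setminus\mc{D})]$, and discard one intersectand from each — keeping $\mc{A}\setminus\mc{C}$ and $\mc{B}\setminus\mc{D}$ respectively.

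For the two ``in particular'' statements, I would expand $\symdiff$ into a union of two directed differences and apply the containment just proved to each half with the roles of the pairs $(\mc{A},\mc{B})$ and $(\mc{C},\mc{D})$ interchanged in the second half; thus $(\mc{A}\cup\mc{B})\symdiff(\mc{C}\cup\mc{D})$ is contained in $(\mc{A}\setminus\mc{C})\cup(\mc{B}\setminus\mc{D})\cup(\mc{C}\setminus\mc{A})\cup(\mc{D}\setminus\mc{B})$, and regrouping the four pieces as $[(\mc{A}\setminus\mc{C})\cup(\mc{C}\setminus\mc{A})]\cup[(\mc{B}\setminus\mc{D})\cup(\mc{D}\setminus\mc{B})]=(\mc{A}\symdiff\mc{C})\cup(\mc{B}\symdiff\mc{D})$ finishes it; the intersection case is identical using the second containment. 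I expect no real obstacle: the only point requiring care is keeping the pairing consistent across the two halves — i.e.\ always subtracting $\mc{C}$ from $\mc{A}$ and $\mc{D}$ from $\mc{B}$ when choosing which intersectand to discard — so I would, as a safeguard, also keep in mind the one-line element-chase proof (track an element of the left-hand set, split on which of $\mc{A},\mc{B}$ it lies in, and use that it avoids the corresponding one of $\mc{C},\mc{D}$), which gives the same result directly and can be substituted if the identity bookkeeping becomes cumbersome.
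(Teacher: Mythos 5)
Your proof is correct and fills in exactly the argument the paper leaves implicit behind the phrase ``By the previous lemma'': distribute set difference over the outer union/intersection using identities (1)--(4), drop an intersectand from each piece to land in $(\mc{A}\setminus\mc{C})\cup(\mc{B}\setminus\mc{D})$, and then split $\symdiff$ into two directed differences and regroup. The pairing check you flag (always $\mc{C}$ against $\mc{A}$, $\mc{D}$ against $\mc{B}$) is indeed the only place to slip, and you handle it correctly.
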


\begin{lemma}
	Triangle inequalities:
	\begin{enumerate}
		\item $\mc{A} \setminus \mc{B} \subseteq (\mc{A} \setminus \mc{C} ) \cup (\mc{C} \setminus \mc{B})$.
		\item $\mc{A} \symdiff \mc{B} \subseteq (\mc{A} \symdiff \mc{C}) \cup (\mc{C} \symdiff \mc{B})$.
	\end{enumerate}
\end{lemma}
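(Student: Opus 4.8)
The lemma itself admits a short element-chasing proof, and part~(2) reduces to part~(1). For part~(1) I would fix $x \in \mc{A} \setminus \mc{B}$ and split on whether $x \in \mc{C}$: if $x \in \mc{C}$ then $x \in \mc{C} \setminus \mc{B}$, otherwise $x \in \mc{A} \setminus \mc{C}$, so in either case $x$ lies in the right-hand side. For part~(2) I would write $\mc{A} \symdiff \mc{B} = (\mc{A} \setminus \mc{B}) \cup (\mc{B} \setminus \mc{A})$, apply part~(1) to each of the two differences with the \emph{same} mediating set $\mc{C}$, and regroup the four resulting terms as $(\mc{A}\symdiff\mc{C}) \cup (\mc{C}\symdiff\mc{B})$. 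The lemma presents no real difficulty; its purpose, together with the distributive laws and Corollary~\ref{ap_COR:BoundingSetDiff}, is to serve as the workhorse for the rest of the section.

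The substantive task these identities support is the proof of Theorem~\ref{Thm:Boolean}. By Lemma~\ref{ap_lemma:apprm_ratio_boolean} it suffices to exhibit sets $\mc{B}_1,\dots,\mc{B}_k$, each a Boolean combination of the nearest-neighbour columns $\mc{D}_{\mc{S}_\ell}$, satisfying Eq.(\ref{ap_Boolean_Decomposition}). I would follow the construction in the GCSS algorithm above: first fix an ordering $\mc{S}_1,\dots,\mc{S}_{2^k-1}$ of the nonempty subsets of $[k]$ with $n_{\mc{S}_1}\le\dots\le n_{\mc{S}_{2^k-1}}$; then form the intersections $\mc{E}^\ell_i = \bigcap_{\ell'\ge\ell,\,i\in\mc{S}_{\ell'}}\mc{D}_{\mc{S}_{\ell'}}$, which are monotone in $\ell$ and for which $\bigcup_{i\in\mc{S}_\ell}\mc{E}^\ell_i$ under-approximates $\mc{D}_{\mc{S}_\ell}$; then restore the missing elements piece by piece through the blocks $\mc{F}_i^{\ell_1,\ell_2} = \mc{E}^{\ell_1+1}_i \setminus \bigcup_{i'\in\mc{S}_{\ell_2}}\mc{E}^{\ell_1}_{i'}$, setting $\mc{B}_i = \mc{E}^1_i \cup \bigcup_{\ell_1<\ell_2,\,i\in\mc{S}_{\ell_1}\cap\mc{S}_{\ell_2}}\mc{F}_i^{\ell_1,\ell_2}$. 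The argument then goes through three lemmas: a pure set-algebra identity for $\bigcup_{i\in\mc{S}_\ell}\mc{B}_i$ that isolates a remainder $\mc{R}_\ell$ involving only blocks $\mc{F}_i^{\ell_1,\ell_2}$ with $\ell_1\ge\ell$; a triangle-inequality step, using the lemma above and Corollary~\ref{ap_COR:BoundingSetDiff}, showing $\mc{U}_{\mc{S}_\ell}\symdiff\bigl(\bigcup_{i\in\mc{S}_\ell}\mc{B}_i\bigr)$ is contained in the union of $\mc{U}_{\mc{S}_{\ell'}}\symdiff\mc{D}_{\mc{S}_{\ell'}}$ over $\ell'\ge\ell$ together with $\mc{R}_\ell$; and a containment bounding each block $\mc{F}_i^{\ell_1,\ell_2}$ by errors $\mc{U}_{\mc{S}_{\ell'}}\symdiff\mc{D}_{\mc{S}_{\ell'}}$ at indices $\ell'\ge\ell_1$. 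Chaining these three inclusions gives Eq.(\ref{ap_Boolean_Decomposition}), and Lemma~\ref{ap_lemma:apprm_ratio_boolean} then yields the ratio $2^k$.

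I expect the main obstacle to be getting the piece-by-piece restoration right. Over a semiring we can only ever build under-approximations, so supplying the elements missing from $\bigcup_{i\in\mc{S}_\ell}\mc{E}^\ell_i$ for one $\ell$ risks over-shooting for another; the device of indexing the added blocks by a \emph{pair} $\ell_1<\ell_2$ is exactly what lets each block be charged either to the single error $\mc{D}_{\mc{S}_{\ell_2}}\setminus\mc{U}_{\mc{S}_{\ell_2}}$ or recursively to errors at indices $\ge\ell_1$, and closing up these containments is where Corollary~\ref{ap_COR:BoundingSetDiff} and the triangle inequalities get used repeatedly. The sorted ordering $n_{\mc{S}_1}\le\dots\le n_{\mc{S}_{2^k-1}}$ is then essential in Lemma~\ref{ap_lemma:apprm_ratio_boolean}: it lets the cost of a poor approximation of $\mc{U}_{\mc{S}_\ell}$ for small $\ell$, which affects only $n_{\mc{S}_\ell}$ columns, be absorbed into the approximation error of $\mc{D}_{\mc{S}_{\ell'}}$ for larger $\ell'$, which affect the larger sets $\mc{J}_{\mc{S}_{\ell'}}$, and this crude charging is what produces the factor $2^k$ (sharpenable to $2^{k-1}+1$ by the weighted-averaging idea of Section~\ref{Section:GF(2)}).
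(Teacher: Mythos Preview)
Your element-chasing argument for part~(1) and reduction of part~(2) to part~(1) is correct and is the standard proof; the paper in fact states this lemma without proof, treating it as an elementary set-theoretic fact alongside the distributive laws. So your proof of the lemma itself is fine and there is nothing to compare it against.

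The bulk of your write-up, however, is not a proof of this lemma at all but an outline of how the lemma feeds into the proof of Theorem~\ref{Thm:Boolean}. That outline is accurate and matches the paper's structure (Lemmas~\ref{ap_Lemma:Decomposition}, \ref{ap_Def:R}, and~\ref{ap_UB_F} combined with Lemma~\ref{ap_lemma:apprm_ratio_boolean}), but it belongs with those statements, not here. If the assignment was to prove the triangle-inequality lemma, your first paragraph suffices; the rest is context, not proof.
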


Now we begin to construct $\mc{B}_1,\ldots,\mc{B}_{2^k-1}$ so that Eq.(\ref{ap_Boolean_Decomposition}) can be satisfied. The first step is to define sets $\mc{E}_i^{\ell}$. For $i \in [k]$ and $\ell \in [2^k-1]$, define
\[
\mc{E}^\ell_i := \bigcap_{\substack{\ell' \geq \ell :\\ i \in \mc{S}_{\ell'}}} \mc{D}_{\mc{S}_{\ell'}}
\]
Clearly, by definition we have $\mc{E}^{\ell}_i \subseteq \mc{E}^{\ell'}_i$ for every
$1 \leq \ell \leq \ell' \leq 2^k-1$.


We construct $\mc{B}_1,\ldots,\mc{B}_{2^k-1}$ based on the sets $\mc{D}_{\mc{S}_1},\dots,\mc{D}_{\mc{S}_{2^k-1}}$. The key idea of the construction
is to obtain that for all $\ell$
\begin{equation}
	\label{ap_EQ:OverviewOfConstruction}
	\bigcup_{i \in \mc{S}_\ell} \mc{B}_i = \left(\bigcup_{i \in \mc{S}_\ell} \mc{E}^\ell_i \right) \cup \mc{R}_\ell \enspace ,
\end{equation}
in such a way that it is to possible express $R_\ell$ as a subset of the union of sets
$\mc{U}_{\mc{S}_{\ell'}} \symdiff \mc{D}_{\mc{S}_{\ell'}}$ for $\ell' \geq \ell$. 
The following lemma states that the construction given in Eq.(\ref{ap_EQ:OverviewOfConstruction}) implies Eq.(\ref{ap_Boolean_Decomposition}).
\begin{lemma}\label{ap_Lemma:Decomposition} Assuming Eq.(\ref{ap_EQ:OverviewOfConstruction}) is true, then
	\begin{equation}
		\mc{U}_{\mc{S}_\ell} \symdiff \left(\bigcup_{i \in \mc{S}_\ell} \mc{B}_i \right)
		\subseteq \left(\bigcup_{\substack{\ell' \geq \ell:\\ \mc{S}_\ell \cap \mc{S}_{\ell'} \neq \emptyset}} \left( \mc{U}_{\mc{S}_{\ell'}} \symdiff \mc{D}_{\mc{S}_{\ell'}} \right)\right) \cup \mc{R}_\ell.
	\end{equation}
\end{lemma}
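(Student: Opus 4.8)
The plan is to pivot through the set $\mc{C}:=\bigcup_{i\in\mc{S}_\ell}\mc{E}^\ell_i$, which by the hypothesis Eq.(\ref{ap_EQ:OverviewOfConstruction}) satisfies $\bigcup_{i\in\mc{S}_\ell}\mc{B}_i=\mc{C}\cup\mc{R}_\ell$. Applying the triangle inequality for symmetric difference with $\mc{C}$ as the pivot gives
\[
\mc{U}_{\mc{S}_\ell}\symdiff\Bigl(\bigcup_{i\in\mc{S}_\ell}\mc{B}_i\Bigr)\ \subseteq\ \bigl(\mc{U}_{\mc{S}_\ell}\symdiff\mc{C}\bigr)\ \cup\ \bigl(\mc{C}\symdiff(\mc{C}\cup\mc{R}_\ell)\bigr).
\]
The second term equals $(\mc{C}\cup\mc{R}_\ell)\setminus\mc{C}\subseteq\mc{R}_\ell$, which already contributes the trailing $\mc{R}_\ell$ in the claimed inclusion, so the whole task reduces to showing $\mc{U}_{\mc{S}_\ell}\symdiff\mc{C}\subseteq\bigcup_{\ell'\geq\ell,\ \mc{S}_\ell\cap\mc{S}_{\ell'}\neq\emptyset}(\mc{U}_{\mc{S}_{\ell'}}\symdiff\mc{D}_{\mc{S}_{\ell'}})$.

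The one fact I would isolate first is that for every $i\in\mc{S}_\ell$ the index $\ell$ itself occurs in the intersection defining $\mc{E}^\ell_i$ (it satisfies $\ell\geq\ell$ and $i\in\mc{S}_\ell$), so $\mc{E}^\ell_i\subseteq\mc{D}_{\mc{S}_\ell}$ and hence $\mc{C}\subseteq\mc{D}_{\mc{S}_\ell}$; this is the ``under-approximation'' property referred to in the overview. With it the direction $\mc{C}\setminus\mc{U}_{\mc{S}_\ell}$ is immediate: any $x$ in it lies in some $\mc{E}^\ell_i\subseteq\mc{D}_{\mc{S}_\ell}$ but not in $\mc{U}_{\mc{S}_\ell}$, hence $x\in\mc{D}_{\mc{S}_\ell}\setminus\mc{U}_{\mc{S}_\ell}\subseteq\mc{U}_{\mc{S}_\ell}\symdiff\mc{D}_{\mc{S}_\ell}$, which is the $\ell'=\ell$ term of the right-hand side (and $\mc{S}_\ell\cap\mc{S}_\ell\neq\emptyset$). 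For the opposite direction, take $x\in\mc{U}_{\mc{S}_\ell}\setminus\mc{C}$; then $x\in\mc{U}_i$ for some $i\in\mc{S}_\ell$, and since $x\notin\mc{E}^\ell_i=\bigcap_{\ell'\geq\ell,\ i\in\mc{S}_{\ell'}}\mc{D}_{\mc{S}_{\ell'}}$ there is an index $\ell'\geq\ell$ with $i\in\mc{S}_{\ell'}$ and $x\notin\mc{D}_{\mc{S}_{\ell'}}$. From $i\in\mc{S}_{\ell'}$ we get $x\in\mc{U}_i\subseteq\mc{U}_{\mc{S}_{\ell'}}$, so $x\in\mc{U}_{\mc{S}_{\ell'}}\setminus\mc{D}_{\mc{S}_{\ell'}}\subseteq\mc{U}_{\mc{S}_{\ell'}}\symdiff\mc{D}_{\mc{S}_{\ell'}}$, and $i\in\mc{S}_\ell\cap\mc{S}_{\ell'}$ witnesses $\mc{S}_\ell\cap\mc{S}_{\ell'}\neq\emptyset$. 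Combining the two directions with the bound on the $\mc{R}_\ell$-term completes the argument.

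The step I expect to be most delicate is the second direction of $\mc{U}_{\mc{S}_\ell}\symdiff\mc{C}$: it is tempting to decompose $\bigl(\bigcup_i\mc{U}_i\bigr)\symdiff\bigl(\bigcup_i\mc{E}^\ell_i\bigr)$ into $\bigcup_i(\mc{U}_i\symdiff\mc{E}^\ell_i)$ via the union case of Corollary \ref{ap_COR:BoundingSetDiff} and argue term by term, which works for the pieces $\mc{U}_i\setminus\mc{E}^\ell_i$ but \emph{fails} for the pieces $\mc{E}^\ell_i\setminus\mc{U}_i$, since such an $x$ may be covered by $\mc{U}_j$ with $j\neq i$ and so need not land in any single $\mc{U}_{\mc{S}_{\ell'}}\symdiff\mc{D}_{\mc{S}_{\ell'}}$. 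The element chase above sidesteps this by exploiting the ``diagonal'' index $\ell'=\ell$ together with $\mc{E}^\ell_i\subseteq\mc{D}_{\mc{S}_\ell}$, which is precisely where the monotone ordering $n_{\mc{S}_1}\le\cdots\le n_{\mc{S}_{2^k-1}}$ built into the construction is used. Everything else is routine set algebra from the lemmas stated just before Lemma \ref{ap_Lemma:Decomposition}.
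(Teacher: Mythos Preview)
Your proof is correct and follows essentially the same route as the paper's: both split off $\mc{R}_\ell$ from the symmetric difference, bound $\mc{C}\setminus\mc{U}_{\mc{S}_\ell}$ via the under-approximation $\mc{C}\subseteq\mc{D}_{\mc{S}_\ell}$, and bound $\mc{U}_{\mc{S}_\ell}\setminus\mc{C}$ by unpacking $\mc{E}^\ell_i$ as an intersection of $\mc{D}_{\mc{S}_{\ell'}}$'s (the paper phrases this last step via Corollary~\ref{ap_COR:BoundingSetDiff} rather than an element chase, but the content is identical). One small correction to your closing commentary: the monotone ordering $n_{\mc{S}_1}\le\cdots\le n_{\mc{S}_{2^k-1}}$ is \emph{not} used anywhere in this lemma---the fact that $\ell$ itself appears in the intersection defining $\mc{E}^\ell_i$ follows purely from $i\in\mc{S}_\ell$ and the index range $\ell'\geq\ell$; the ordering only enters later, in the counting argument of Lemma~\ref{ap_lemma:apprm_ratio_boolean}.
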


\begin{proof}
	Using Eq.(\ref{ap_EQ:OverviewOfConstruction}) we have
	\[
	\begin{split}
	& \mc{U}_{\mc{S}_\ell} \symdiff \left(\bigcup_{i \in \mc{S}_\ell} \mc{B}_i \right)  \\
    \subseteq &
	\left(\mc{U}_{\mc{S}_\ell} \setminus \left[\bigcup_{i \in \mc{S}_\ell} \mc{E}^\ell_i \right]\right)  \cup  \left(\left[\bigcup_{i \in \mc{S}_\ell} \mc{E}^\ell_i \right] \setminus \mc{U}_{\mc{S}_\ell}\right) \cup \mc{R}_\ell\\
	\subseteq &\left(\bigcup_{i \in \mc{S}_\ell} \left(\mc{U}_i \setminus \mc{E}^\ell_i \right)\right) \cup (\mc{D}_{\mc{S}_\ell} \setminus \mc{U}_{\mc{S}_\ell}) \cup \mc{R}_\ell
	\end{split}
	\]
	where we used Corollary~\ref{ap_COR:BoundingSetDiff} and that $\bigcup_{i \in \mc{S}_\ell} \mc{E}^\ell_i \subseteq \mc{D}_{\mc{S}_\ell}$. Next we claim that for $i \in S_{\ell}$,
	\[
	\mc{U}_i \setminus \mc{E}^\ell_i \subseteq \bigcup_{\substack{\ell'\geq\ell:\\i \in \mc{S}_{\ell'}}} \left(\mc{U}_{\mc{S}_{\ell'}} \setminus \mc{D}_{\mc{S}_{\ell'}}\right).
	\]
	To see this, first note that
	\[
	\mc{U}_i \subseteq \bigcap_{\substack{\ell' \geq \ell
			:\\ i \in \mc{U}_{\ell'}}} \mc{U}_{\mc{S}_{\ell'}}.
	\]
	Thus
	\[
	\mc{U}_i \setminus \mc{E}^\ell_i \subseteq \left(\bigcap_{\substack{\ell' \geq \ell
			:\\ i \in \mc{S}_{\ell'}}} \mc{U}_{\mc{S}_{\ell'}}\right) \setminus \left(\bigcap_{\substack{\ell' \geq \ell
			:\\ i \in \mc{S}_{\ell'}}} \mc{D}_{\mc{S}_{\ell'}}\right) \subseteq \bigcup_{\substack{\ell'\geq\ell:\\i \in \mc{S}_{\ell'}}} \left(\mc{U}_{\mc{S}_{\ell'}} \setminus \mc{D}_{\mc{S}_{\ell'}}\right).
	\]
	Therefore
	\begin{equation}\label{ap_u_diff_b}
		\begin{split}
			& \mc{U}_{\mc{S}_\ell} \symdiff \left(\bigcup_{i \in \mc{S}_\ell} \mc{B}_i \right) \\ 
			 \subseteq & \left(\bigcup_{i \in \mc{S}_\ell} \bigcup_{\substack{\ell'\geq\ell:\\i \in \mc{S}_{\ell'}}} \left(\mc{U}_{\mc{S}_{\ell'}} \setminus \mc{D}_{\mc{S}_{\ell'}}\right)\right) \cup (\mc{D}_{\mc{S}_\ell} \setminus \mc{U}_{\mc{S}_\ell})  \cup \mc{R}_\ell \\  \subseteq & \left(\bigcup_{\substack{\ell' \geq \ell:\\ S_\ell \cap \mc{S}_{\ell'} \neq \emptyset}} \left( \mc{U}_{\mc{S}_{\ell'}} \symdiff \mc{D}_{\mc{S}_{\ell'}} \right)\right) \cup \mc{R}_\ell.
		\end{split}
	\end{equation}
\end{proof}

To satisfy Eq.(\ref{ap_EQ:OverviewOfConstruction}) we proceed as
follows: We initially let $\mc{B}_i = \mc{E}^1_i$ for all $i \in [k]$. Next,
for each $\ell \in [2^k-1]$, we will simply be adding
\[
\left(\bigcup_{i \in \mc{S}_\ell} \mc{E}^\ell_i \right) \setminus \left(\bigcup_{i \in \mc{S}_\ell} \mc{E}^1_i \right)
\]
to $\bigcup_{i \in S_\ell} \mc{B}_i$, which means that
Eq.\eqref{ap_EQ:OverviewOfConstruction} is satisfied. But now, in
order to be able to bound the set $\mc{R}_{\ell'}$ for all $\ell'$
simultaneously, we will do this carefully piece by piece using the
ordering of the sets $\mc{S}_1,\dots,\mc{S}_{2^k-1}$.

The main step of the construction is this: For $\ell_1<\ell_2$ such that $i \in \mc{S}_{\ell_1} \cap \mc{S}_{\ell_2}$, define the set $\mc{F}_i^{\ell_1,\ell_2}$ by
\[
\mc{F}_i^{\ell_1,\ell_2} := \mc{E}_i^{\ell_1+1} \setminus \left[\bigcup_{i' \in \mc{S}_{\ell_2}} \mc{E}^{\ell_1}_{i'}\right].
\]
We now define the rank-$k$ solution $(\mc{B}_1,\dots,\mc{B}_k)$ by
\begin{equation}\label{ap_Def:B}
	\mc{B}_i := \mc{E}_i^1 \cup \left(\bigcup_{\substack{\ell_1 < \ell_2:\\i \in \mc{S}_{\ell_1} \cap \mc{S}_{\ell_2}}} \mc{F}_i^{\ell_1,\ell_2} \right)\enspace .
\end{equation}
Let now $\ell$ be fixed. Then
\[
\bigcup_{i \in \mc{S}_\ell} \mc{B}_i = \left(\bigcup_{i \in \mc{S}_\ell} \mc{E}_i^1\right) \cup \left(\bigcup_{i \in S_\ell}\bigcup\limits_{\substack{\ell_1 < \ell_2:\\i \in \mc{S}_{\ell_1} \cap \mc{S}_{\ell_2}}} \mc{F}_i^{\ell_1,\ell_2} \right).
\]

The following lemma gives the formula of $\mc{R}_{\ell}$ in Eq.(\ref{ap_EQ:OverviewOfConstruction})
\begin{lemma}\label{ap_Def:R}
	\[
	\bigcup_{i \in \mc{S}_\ell} \mc{B}_i = \left(\bigcup_{i \in \mc{S}_\ell} \mc{E}_i^\ell \right) \cup \left(\bigcup_{i \in S_\ell}\bigcup_{\substack{\ell \leq \ell_1 < \ell_2:\\i \in \mc{S}_{\ell_1} \cap \mc{S}_{\ell_2}}} \mc{F}_i^{\ell_1,\ell_2} \right).
	\]
	Thus we can choose
	\[
	\mc{R}_\ell = \bigcup_{i \in \mc{S}_\ell}\bigcup_{\substack{\ell \leq \ell_1 < \ell_2:\\i \in \mc{S}_{\ell_1} \cap \mc{S}_{\ell_2}}} \mc{F}_i^{\ell_1,\ell_2} \enspace .
	\]
\end{lemma}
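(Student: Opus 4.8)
The plan is to argue entirely at the level of sets, starting from the expression for $\bigcup_{i\in\mc{S}_\ell}\mc{B}_i$ already displayed just above the lemma,
\[
\bigcup_{i \in \mc{S}_\ell} \mc{B}_i = \left(\bigcup_{i \in \mc{S}_\ell} \mc{E}_i^1\right) \cup \left(\bigcup_{i \in \mc{S}_\ell}\bigcup_{\substack{\ell_1 < \ell_2:\\ i \in \mc{S}_{\ell_1} \cap \mc{S}_{\ell_2}}} \mc{F}_i^{\ell_1,\ell_2} \right),
\]
and splitting the inner double union over the pairs $(\ell_1,\ell_2)$ according to whether $\ell_1<\ell$ or $\ell_1\ge\ell$. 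The $\ell_1\ge\ell$ part is exactly the set $\mc{R}_\ell$ claimed in the statement, so the whole lemma reduces to the single identity
\[
\left(\bigcup_{i \in \mc{S}_\ell} \mc{E}_i^1\right) \cup \left(\bigcup_{i \in \mc{S}_\ell}\bigcup_{\substack{\ell_1 < \ell_2,\, \ell_1 < \ell:\\ i \in \mc{S}_{\ell_1} \cap \mc{S}_{\ell_2}}} \mc{F}_i^{\ell_1,\ell_2} \right) \;=\; \bigcup_{i \in \mc{S}_\ell} \mc{E}_i^\ell ,
\]
which I would prove by double inclusion.

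The inclusion ``$\subseteq$'' should be immediate from the monotonicity $\mc{E}_i^1 \subseteq \mc{E}_i^2 \subseteq \cdots$ recorded right after the definition of $\mc{E}_i^\ell$: the first summand lands in $\bigcup_{i\in\mc{S}_\ell}\mc{E}_i^\ell$ since $\mc{E}_i^1\subseteq\mc{E}_i^\ell$, while every admissible term obeys $\mc{F}_i^{\ell_1,\ell_2}\subseteq\mc{E}_i^{\ell_1+1}\subseteq\mc{E}_i^\ell$, using $\ell_1<\ell\Rightarrow\ell_1+1\le\ell$ together with $i\in\mc{S}_\ell$. This direction is two or three lines.

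The content is the reverse inclusion, and this is the step I expect to be the real obstacle. Given $x\in\bigcup_{i\in\mc{S}_\ell}\mc{E}_i^\ell$ that is not already in $\bigcup_{i\in\mc{S}_\ell}\mc{E}_i^1$, I need to name one term $\mc{F}_i^{\ell_1,\ell_2}$ of the left-hand double union (so $i\in\mc{S}_\ell$, $\ell_1<\ell_2$, $\ell_1<\ell$, $i\in\mc{S}_{\ell_1}\cap\mc{S}_{\ell_2}$) that contains $x$. The idea is to track when $x$ first enters the $\mc{E}$-chain: for $i\in\mc{S}_\ell$ set $m_i:=\min\{m: x\in\mc{E}_i^m\}\in\{2,3,\dots\}\cup\{\infty\}$, which is $\ge 2$ because $x\notin\mc{E}_i^1$ and is finite for at least one $i$ because $x\in\mc{E}_{i^\ast}^\ell$ for some $i^\ast\in\mc{S}_\ell$. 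The crucial choice is the \emph{global} minimum $m^\ast:=\min_{i\in\mc{S}_\ell}m_i$, which satisfies $2\le m^\ast\le\ell$; picking $i^{\ast\ast}\in\mc{S}_\ell$ that attains it and setting $\ell_1:=m^\ast-1$, $\ell_2:=\ell$ gives $1\le\ell_1<\ell$. Then $x\in\mc{E}_{i^{\ast\ast}}^{\ell_1+1}\setminus\mc{E}_{i^{\ast\ast}}^{\ell_1}$, which forces $i^{\ast\ast}\in\mc{S}_{\ell_1}$ (otherwise the two intersections defining $\mc{E}_{i^{\ast\ast}}^{\ell_1}$ and $\mc{E}_{i^{\ast\ast}}^{\ell_1+1}$ run over the same index set and are equal), and minimality of $m^\ast$ yields $\ell_1<m^\ast\le m_{i'}$ for every $i'\in\mc{S}_\ell$, hence $x\notin\mc{E}_{i'}^{\ell_1}$ for all of them, i.e.\ $x\notin\bigcup_{i'\in\mc{S}_\ell}\mc{E}_{i'}^{\ell_1}$. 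Therefore $x\in\mc{E}_{i^{\ast\ast}}^{\ell_1+1}\setminus\bigcup_{i'\in\mc{S}_\ell}\mc{E}_{i'}^{\ell_1}=\mc{F}_{i^{\ast\ast}}^{\ell_1,\ell}$, which is one of the allowed terms. The delicate point --- and the reason to minimise over all of $\mc{S}_\ell$ instead of chasing a single chain $\mc{E}_{i^\ast}^1\subseteq\mc{E}_{i^\ast}^2\subseteq\cdots$ --- is precisely to keep the subtracted set $\bigcup_{i'\in\mc{S}_\ell}\mc{E}_{i'}^{\ell_1}$ from already containing $x$.

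Putting the two inclusions together with the $(\ell_1,\ell_2)$-split gives $\bigcup_{i\in\mc{S}_\ell}\mc{B}_i=\big(\bigcup_{i\in\mc{S}_\ell}\mc{E}_i^\ell\big)\cup\mc{R}_\ell$ with $\mc{R}_\ell$ as stated, which is the lemma and pins down the choice of $\mc{R}_\ell$ in Eq.\eqref{ap_EQ:OverviewOfConstruction}. The remaining property advertised for $\mc{R}_\ell$ --- that it is contained in $\bigcup_{\ell'\ge\ell}\big(\mc{U}_{\mc{S}_{\ell'}}\symdiff\mc{D}_{\mc{S}_{\ell'}}\big)$ --- is a separate matter, to be settled afterwards from a containment bound on the individual sets $\mc{F}_i^{\ell_1,\ell_2}$, and is not needed here.
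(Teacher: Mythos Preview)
Your proof is correct and follows essentially the same route as the paper: both reduce to showing $\bigcup_{i\in\mc{S}_\ell}\mc{E}_i^\ell$ is covered by the $\mc{E}_i^1$'s together with the $\mc{F}$-terms, and both exploit the specific choice $\ell_2=\ell$ to make the subtracted set in $\mc{F}_i^{\ell_1,\ell}$ exactly $\bigcup_{i'\in\mc{S}_\ell}\mc{E}_{i'}^{\ell_1}$. The only difference is presentational---the paper proves $\bigcup_{i\in\mc{S}_\ell}\mc{E}_i^{\ell'}\subseteq\bigcup_{i\in\mc{S}_\ell}\mc{B}_i$ by induction on $\ell'$, whereas you unwind the same induction into a minimal-index element-chasing argument and package the result as the slightly stronger standalone identity $(\bigcup_i\mc{E}_i^1)\cup(\text{the }\ell_1<\ell\text{ terms})=\bigcup_i\mc{E}_i^\ell$.
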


\begin{proof} When $\ell_1 < \ell$, $\mc{F}_i^{\ell_1,\ell_2} \subseteq \mc{E}_i^{\ell_1 +1} \subseteq \mc{E}_i^\ell$, therefore,
	\[\bigcup_{\substack{\ell_1 < \ell \\ \ell_1 < \ell_2:\\i \in \mc{S}_{\ell_1} \cap \mc{S}_{\ell_2}}} \mc{F}_i^{\ell_1,\ell_2} \subseteq \mc{E}_i^\ell\]
	And then,
	\[ \bigcup_{i \in S_\ell}\bigcup_{\substack{\ell_1 < \ell_2:\\i \in \mc{S}_{\ell_1} \cap \mc{S}_{\ell_2}}} \mc{F}_i^{\ell_1,\ell_2} \subseteq \left(\bigcup_{i \in \mc{S}_\ell} \mc{E}_i^\ell \right) \cup \left(\bigcup_{i \in S_\ell}\bigcup_{\substack{\ell \leq \ell_1 < \ell_2:\\i \in \mc{S}_{\ell_1} \cap \mc{S}_{\ell_2}}} \mc{F}_i^{\ell_1,\ell_2} \right) \]
	On the other hand,
	\[
	\bigcup_{i \in S_\ell}\bigcup_{\substack{\ell \leq \ell_1 < \ell_2:\\i \in \mc{S}_{\ell_1} \cap \mc{S}_{\ell_2}}} \mc{F}_i^{\ell_1,\ell_2} \subseteq \bigcup_{i \in S_\ell}\bigcup_{\substack{\ell_1 < \ell_2:\\i \in \mc{S}_{\ell_1} \cap \mc{S}_{\ell_2}}} \mc{F}_i^{\ell_1,\ell_2}
	\]
	Hence,
	\[
	\begin{split}
	& \left(\bigcup_{i \in \mc{S}_\ell} \mc{E}_i^\ell \right) \cup \left(\bigcup_{i \in S_\ell}\bigcup_{\substack{\ell \leq \ell_1 < \ell_2:\\i \in \mc{S}_{\ell_1} \cap \mc{S}_{\ell_2}}} \mc{F}_i^{\ell_1,\ell_2} \right) \\
	= & \left(\bigcup_{i \in \mc{S}_\ell} \mc{E}_i^\ell \right) \cup \left(\bigcup_{i \in S_\ell}\bigcup_{\substack{\ell_1 < \ell_2:\\i \in \mc{S}_{\ell_1} \cap \mc{S}_{\ell_2}}} \mc{F}_i^{\ell_1,\ell_2} \right) \\
	= & \bigcup_{i \in \mc{S}_\ell} \left[ \mc{E}_i^\ell \cup \left( \bigcup_{\substack{\ell_1 < \ell_2:\\i \in \mc{S}_{\ell_1} \cap \mc{S}_{\ell_2}}} \mc{F}_i^{\ell_1,\ell_2} \right) \right] \\
	\end{split}\]
	Since
	\[\mc{B}_i = \mc{E}_i^1 \cup \left(\bigcup_{\substack{\ell_1 < \ell_2:\\i \in \mc{S}_{\ell_1} \cap \mc{S}_{\ell_2}}} \mc{F}_i^{\ell_1,\ell_2} \right)\subseteq \mc{E}_i^\ell \cup \left(\bigcup_{\substack{\ell_1 < \ell_2:\\i \in \mc{S}_{\ell_1} \cap \mc{S}_{\ell_2}}} \mc{F}_i^{\ell_1,\ell_2} \right)\]
	we have
	\[
	\bigcup_{i \in \mc{S}_\ell} \mc{B}_i \subseteq \bigcup_{i \in \mc{S}_\ell} \left[ \mc{E}_i^\ell \cup \left(\bigcup_{\substack{\ell_1 < \ell_2:\\i \in \mc{S}_{\ell_1} \cap \mc{S}_{\ell_2}}} \mc{F}_i^{\ell_1,\ell_2} \right) \right]
	\]
	Now it suffices to prove
	\[
	\bigcup_{i \in \mc{S}_\ell} \left[ \mc{E}_i^\ell \cup \left(\bigcup_{\substack{\ell_1 < \ell_2:\\i \in \mc{S}_{\ell_1} \cap \mc{S}_{\ell_2}}} \mc{F}_i^{\ell_1,\ell_2} \right) \right] \subseteq \bigcup_{i \in \mc{S}_\ell} \mc{B}_i
	\]
	Note that by \eqref{ap_Def:B},
	\[
	\bigcup_{\substack{\ell_1 < \ell_2:\\i \in \mc{S}_{\ell_1} \cap \mc{S}_{\ell_2}}} \mc{F}_i^{\ell_1,\ell_2} \subseteq \mc{B}_i
	\]
	Thus we only need to show that for all $\ell$
	\begin{equation}\label{ap_R_ell}
		\bigcup_{i \in \mc{S}_\ell} \mc{E}^\ell_i \subseteq \bigcup_{i \in \mc{S}_\ell} \mc{B}_i.
	\end{equation}
	We prove Eq.(\ref{ap_R_ell}) by induction that for all $\ell'\leq \ell$ we have
	\begin{equation}
		\bigcup_{i \in \mc{S}_\ell} \mc{E}^{\ell'}_i \subseteq \bigcup_{i \in \mc{S}_\ell} \mc{B}_i
	\end{equation}
	The base case $\ell'=1$ holds since $\mc{E}^1_i \subseteq \mc{B}_i$ by
	definition of $\mc{B}_i$, for all $i$. Let now $\ell'<\ell$ and $i \in \mc{S}_\ell$.
	If $i \notin \mc{S}_{\ell'}$ we have $\mc{E}^{\ell'+1}_i=\mc{E}^{\ell'}_i$, and hence
	$\mc{E}^{\ell'+1}_i \subseteq \bigcup_{i' \in \mc{S}_\ell} \mc{B}_{i'}$ by induction.
	Suppose now $i \in \mc{S}_{\ell'}$. By induction we have
	\[
	\bigcup_{i' \in \mc{S}_{\ell}} \mc{E}^{\ell'}_{i'} \subseteq \bigcup_{i' \in \mc{S}_\ell} \mc{B}_{i'} \enspace ,
	\]
	Since $i \in \mc{S}_{\ell'} \cap \mc{S}_{\ell}$ we have
	\[
	\mc{F}^{\ell',\ell}_i = \mc{E}^{\ell'+1}_i \setminus \left[\bigcup_{i' \in \mc{S}_{\ell}} \mc{E}^{\ell'}_{i'}\right] \subseteq \mc{B}_i \enspace ,
	\]
	and we can conclude that $\mc{E}^{\ell'+1}_i \subseteq \bigcup_{i' \in
		\mc{S}_\ell} \mc{B}_{i'}$ in this case as well. Since this holds for any $i
	\in \mc{S}_\ell$, this concludes the proof.

\end{proof}

Finally, that $\mc{R}_{\ell}$ is a subset of the union of of sets $\mc{U}_{\mc{S}_{\ell'}} \symdiff \mc{D}_{\mc{S}_{\ell'}}$ for $\ell' \geq \ell$ can be obtained from the following lemma.
\begin{lemma}\label{ap_UB_F}
	Let $\ell_1 < \ell_2$ such that $i \in \mc{S}_{\ell_1} \cap \mc{S}_{\ell_2}$. Then
	\[
	\mc{F}^{\ell_1,\ell_2}_i
	\subseteq \left( \mc{D}_{\mc{S}_{\ell_2}} \setminus \mc{U}_{\mc{S}_{\ell_2}}\right) \cup \left(\bigcup_{i' \in \mc{S}_{\ell_2}} \bigcup_{\substack{\ell'\geq \ell_1:\\ i' \in \mc{S}_{\ell'}}} \left(\mc{U}_{\mc{S}_{\ell'}} \setminus \mc{D}_{\mc{S}_{\ell'}}\right)\right).
	\]
\end{lemma}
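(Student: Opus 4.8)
\textbf{Proof proposal for Lemma \ref{ap_UB_F}.}
The plan is a direct pointwise (element-chasing) argument, with a single case split according to whether the element lies in $\mc{U}_{\mc{S}_{\ell_2}}$. First I would fix an arbitrary $x \in \mc{F}_i^{\ell_1,\ell_2}$ and unpack the definition $\mc{F}_i^{\ell_1,\ell_2} = \mc{E}_i^{\ell_1+1} \setminus \bigl[\bigcup_{i' \in \mc{S}_{\ell_2}} \mc{E}^{\ell_1}_{i'}\bigr]$: this gives $x \in \mc{E}_i^{\ell_1+1}$ and $x \notin \mc{E}^{\ell_1}_{i'}$ for every $i' \in \mc{S}_{\ell_2}$. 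I would also record the trivial but crucial index fact that $\ell_1 < \ell_2$ forces $\ell_2 \geq \ell_1+1$, and that $i \in \mc{S}_{\ell_2}$, so that $\mc{D}_{\mc{S}_{\ell_2}}$ is one of the sets intersected in the definition $\mc{E}_i^{\ell_1+1} = \bigcap_{\ell' \geq \ell_1+1:\, i \in \mc{S}_{\ell'}} \mc{D}_{\mc{S}_{\ell'}}$; in particular $\mc{E}_i^{\ell_1+1} \subseteq \mc{D}_{\mc{S}_{\ell_2}}$.

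In the first case, $x \notin \mc{U}_{\mc{S}_{\ell_2}}$. Then from $x \in \mc{E}_i^{\ell_1+1} \subseteq \mc{D}_{\mc{S}_{\ell_2}}$ we immediately get $x \in \mc{D}_{\mc{S}_{\ell_2}} \setminus \mc{U}_{\mc{S}_{\ell_2}}$, which is the first set on the right-hand side, so we are done in this case.

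In the second case, $x \in \mc{U}_{\mc{S}_{\ell_2}} = \bigcup_{i' \in \mc{S}_{\ell_2}} \mc{U}_{i'}$, so I would fix some $i' \in \mc{S}_{\ell_2}$ with $x \in \mc{U}_{i'}$. Since $x \notin \mc{E}^{\ell_1}_{i'} = \bigcap_{\ell' \geq \ell_1:\, i' \in \mc{S}_{\ell'}} \mc{D}_{\mc{S}_{\ell'}}$ (the intersection being nonempty because $i' \in \mc{S}_{\ell_2}$ and $\ell_2 \geq \ell_1$), there must be some $\ell' \geq \ell_1$ with $i' \in \mc{S}_{\ell'}$ and $x \notin \mc{D}_{\mc{S}_{\ell'}}$. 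Because $i' \in \mc{S}_{\ell'}$ we have $\mc{U}_{i'} \subseteq \mc{U}_{\mc{S}_{\ell'}}$, hence $x \in \mc{U}_{\mc{S}_{\ell'}} \setminus \mc{D}_{\mc{S}_{\ell'}}$. This set appears in the double union on the right-hand side for exactly this choice of $i' \in \mc{S}_{\ell_2}$ and this $\ell' \geq \ell_1$ with $i' \in \mc{S}_{\ell'}$, completing this case. Combining the two cases establishes the claimed inclusion.

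I do not expect a genuine obstacle here: the statement is a bookkeeping lemma whose content is entirely the index constraints. The only points requiring care are verifying that the two relevant intersections are nonempty so that the inclusions $\mc{E}_i^{\ell_1+1} \subseteq \mc{D}_{\mc{S}_{\ell_2}}$ and ``$x \notin \mc{E}^{\ell_1}_{i'}$ yields a witness $\ell'$'' are legitimate, and matching the produced witness $(i',\ell')$ against the ranges of the double union on the right. (One could alternatively push this through using the set-algebra identities of Corollary~\ref{ap_COR:BoundingSetDiff}, but the pointwise argument above is the cleanest route.)
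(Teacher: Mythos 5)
Your proof is correct, and it is essentially the paper's argument unwound to the element level: the paper's first step is the set-theoretic triangle inequality $\mc{A}\setminus\mc{B}\subseteq(\mc{A}\setminus\mc{C})\cup(\mc{C}\setminus\mc{B})$ with $\mc{C}=\mc{U}_{\mc{S}_{\ell_2}}$, which is exactly your case split on whether $x\in\mc{U}_{\mc{S}_{\ell_2}}$, and the paper then bounds the two pieces using $\mc{E}_i^{\ell_1+1}\subseteq\mc{D}_{\mc{S}_{\ell_2}}$ and Corollary~\ref{ap_COR:BoundingSetDiff}, precisely as your two cases do pointwise. Your parenthetical remark at the end correctly identifies the alternative set-algebraic route; the paper took it, but the content and the index bookkeeping (including the nonemptiness of the two intersection index sets via $\ell_2\geq\ell_1+1$ and $i,i'\in\mc{S}_{\ell_2}$) are identical.
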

\begin{proof}
	We have
	\[
	\begin{split}
	& \mc{F}^{\ell_1,\ell_2}_i  \\    
     \subseteq & \left(\mc{E}_i^{\ell_1+1} \setminus \mc{U}_{\mc{S}_{\ell_2}}\right) \cup \left(\mc{U}_{\mc{S}_{\ell_2}} \setminus \left[\bigcup_{i' \in \mc{S}_{\ell_2}} \mc{E}^{\ell_1}_{i'}\right]\right) \\
	 \subseteq & \left( \mc{D}_{\mc{S}_{\ell_2}} \setminus \mc{U}_{\mc{S}_{\ell_2}}\right) \cup \left(\left[\bigcup_{i' \in \mc{S}_{\ell_2}} \bigcap_{\substack{\ell'\geq \ell_1:\\ i' \in \mc{S}_{\ell'}}} \mc{U}_{\mc{S}_{\ell'}}\right]\setminus\left[\bigcup_{i' \in \mc{S}_{\ell_2}} \bigcap_{\substack{\ell'\geq \ell_1:\\ i' \in \mc{S}_{\ell'}}} \mc{D}_{\mc{S}_{\ell'}}\right]\right)\\  \subseteq & \left( \mc{D}_{\mc{S}_{\ell_2}} \setminus \mc{U}_{\mc{S}_{\ell_2}}\right) \cup \left(\bigcup_{i' \in \mc{S}_{\ell_2}} \bigcup_{\substack{\ell'\geq \ell_1:\\ i' \in \mc{S}_{\ell'}}} \left(\mc{U}_{\mc{S}_{\ell'}} \setminus \mc{D}_{\mc{S}_{\ell'}}\right)\right),
	\end{split}
	\]
	where we used that $\mc{E}^{\ell_1}_i \subseteq \mc{D}_{\mc{S}_{\ell_2}}$ and $\mc{U}_{i'} \subseteq \mc{U}_{\mc{S}_{\ell'}}$ for every $i' \in \mc{S}_{\ell_2}$ and $\ell'\geq \ell_1$ for which $i' \in \mc{S}_{\ell'}$.
\end{proof}

Combining \eqref{ap_u_diff_b} , Lemma \ref{ap_Def:R} and Lemma \ref{ap_UB_F}, we conclude that
\[
\begin{split}
\mc{U}_{S_\ell} \symdiff \left(\bigcup_{i \in \mc{S}_\ell} \mc{B}_i \right)
\subseteq  \left(\bigcup_{\substack{\ell' \geq \ell}} \left( \mc{U}_{\mc{S}_{\ell'}} \symdiff \mc{D}_{\mc{S}_{\ell'}} \right)\right)
\end{split}
\]
for all $\ell$.
	
	\section{Proof of Theorem \ref{Thm:NP_Hardness}}
	The lemma below is an adaptation of Lemma 4.2 in \cite{TIT:RothV08} from the $\pmo$ case to the $\zo$ case.
	\begin{lemma}\label{hardness_lemma_1}
		Let $\mb{W}$ be an $n\times n$ matrix and let $m \geq 1$, and define $\mb{W}'=\mb{W}
		\tensor \mb{J}_m$, where $\mb{J}_m := \mb{J}_{m,m}$. Then
		\[
		\max_{\mb{u,v}} \mb{u}^\transpose \mb{W}'\mb{v} = m^2 \cdot \max_{\mb{x,y}}
		\mb{x}^\transpose \mb{Wy} \enspace ,
		\]
		where $\mb{u,v} \in \zo^{mn}$ and $\mb{x,y} \in \zo^n$, respectively.
		Furthermore, if $\mb{x}$ and $\mb{y}$ maximize $\mb{x}^\transpose \mb{Wy}$, then
		$\mb{u} = \mb{x} \tensor \mb{1}_m$ and $\mb{v} = \mb{y} \tensor \mb{1}_m$ maximize $\mb{u}^\transpose
		\mb{W'v}$.
		\label{LEM:BlockmatrixOptima}
	\end{lemma}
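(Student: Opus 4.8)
The plan is to prove the stated equality by establishing the two inequalities separately, and to read off the ``furthermore'' clause from the lower-bound construction. First I would handle the easy direction $\max_{\mb{u},\mb{v}} \mb{u}^\transpose \mb{W}' \mb{v} \ge m^2 \max_{\mb{x},\mb{y}} \mb{x}^\transpose \mb{W}\mb{y}$ by exhibiting a good witness: take $\mb{x},\mb{y}\in\zo^n$ attaining the right-hand maximum and set $\mb{u} = \mb{x}\tensor\mb{1}_m$, $\mb{v} = \mb{y}\tensor\mb{1}_m$. By the mixed-product property of the Kronecker product,
\[
\mb{u}^\transpose\mb{W}'\mb{v} = (\mb{x}^\transpose\tensor\mb{1}_m^\transpose)(\mb{W}\tensor\mb{J}_m)(\mb{y}\tensor\mb{1}_m) = (\mb{x}^\transpose\mb{W}\mb{y})\cdot(\mb{1}_m^\transpose\mb{J}_m\mb{1}_m) = m^2\,\mb{x}^\transpose\mb{W}\mb{y},
\]
since $\mb{1}_m^\transpose\mb{J}_m\mb{1}_m = m^2$. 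This proves the lower bound, and once a matching upper bound is available it also proves the ``furthermore'' clause, because this particular $\mb{u},\mb{v}$ then attains the common optimum.

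For the upper bound I would take an arbitrary $\mb{u},\mb{v}\in\zo^{mn}$ and split each into $n$ consecutive blocks of length $m$, writing $\mb{u} = (\mb{u}^{(1)},\dots,\mb{u}^{(n)})$ and $\mb{v} = (\mb{v}^{(1)},\dots,\mb{v}^{(n)})$. Let $s_i := \mb{1}_m^\transpose\mb{u}^{(i)}\in\{0,1,\dots,m\}$ and $t_j := \mb{1}_m^\transpose\mb{v}^{(j)}\in\{0,1,\dots,m\}$ be the block sums. Since the $(i,j)$ block of $\mb{W}'=\mb{W}\tensor\mb{J}_m$ equals $w_{ij}\mb{J}_m$, a direct block computation gives
\[
\mb{u}^\transpose\mb{W}'\mb{v} = \sum_{i,j} (\mb{u}^{(i)})^\transpose (w_{ij}\mb{J}_m)\,\mb{v}^{(j)} = \sum_{i,j} w_{ij}\, s_i t_j = \mb{s}^\transpose\mb{W}\mb{t},
\]
where $\mb{s}=(s_1,\dots,s_n)^\transpose$ and $\mb{t}=(t_1,\dots,t_n)^\transpose$ lie in $\{0,1,\dots,m\}^n$. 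So it remains to show $\mb{s}^\transpose\mb{W}\mb{t} \le m^2\max_{\mb{x},\mb{y}\in\zo^n}\mb{x}^\transpose\mb{W}\mb{y}$ for every such $\mb{s},\mb{t}$.

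This last claim follows from a standard vertex (rounding) argument: for fixed $\mb{t}$, the map $\mb{s}\mapsto\mb{s}^\transpose\mb{W}\mb{t}=\sum_i s_i(\mb{W}\mb{t})_i$ is linear in each coordinate $s_i$ separately, so, optimizing one coordinate at a time over the interval $[0,m]$, its maximum over the box $[0,m]^n$ is attained at a vertex $\mb{s}\in\{0,m\}^n$ --- this is insensitive to the (possibly negative) entries of $\mb{W}$ --- and the same applies to $\mb{t}$. Since $\{0,m\}^n\subseteq\{0,1,\dots,m\}^n$, the maximum over the discrete grid coincides with the maximum over the box and is attained at some $\mb{s}=m\mb{x}$, $\mb{t}=m\mb{y}$ with $\mb{x},\mb{y}\in\zo^n$, where $\mb{s}^\transpose\mb{W}\mb{t}=m^2\,\mb{x}^\transpose\mb{W}\mb{y}$. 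Combining the two bounds gives the claimed equality, and the witness from the first paragraph supplies the ``furthermore'' part. I do not expect a genuine obstacle here; the only point needing a line of care is the vertex argument on the discrete grid (one must note that optimizing coordinate-by-coordinate never leaves the grid, since the extreme points $0$ and $m$ are grid points), while the rest is routine Kronecker-product bookkeeping.
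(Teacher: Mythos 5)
Your proof is correct and rests on the same core idea as the paper's: the lower bound via $\mb{u}=\mb{x}\tensor\mb{1}_m$, $\mb{v}=\mb{y}\tensor\mb{1}_m$ and the mixed-product identity, and the upper bound via the observation that the bilinear form is separately linear in each coordinate so the discrete optimum can be pushed to block-constant vectors. Your presentation (reduce to block sums $\mb{s},\mb{t}\in\{0,\dots,m\}^n$ and apply a vertex argument on the box) is a slight repackaging of the paper's in-place argument (fix $\mb{v}$, note $\mb{W}'\mb{v}$ is block-constant, hence a block-constant maximizer $\mb{u}$ exists, then swap roles), but the two are the same argument in substance.
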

	
	\begin{proof}
		Consider first $\mb{u} = \mb{x} \tensor \mb{1}_d$ and $\mb{v} = \mb{y} \tensor \mb{1}_m$. Then
		\[
		\begin{split}
		& \mb{u}^\transpose(\mb{W} \tensor \mb{J}_m)\mb{v} \\
        &= (\mb{x} \tensor \mb{1}_m)^\transpose(\mb{W} \tensor \mb{J}_m)(\mb{y} \tensor \mb{1}_m) \\
		&=(\mb{x}^\transpose \mb{Wy})\tensor(\mb{1}_m^\transpose \mb{J}_m\mb{1}_m) = m^2 \cdot (\mb{x}^\transpose \mb{Wy}) \enspace .
		\end{split}
		\]
		Next, take $\mb{u}$ and $\mb{v}$ maximizing $\mb{u}^\transpose \mb{W'v}$. We show that $\mb{u}$
		and $\mb{v}$ can be brought to the form $\mb{u} = \mb{x} \tensor \mb{1}_m$ and $\mb{v} = \mb{y}
		\tensor \mb{1}_m$ without decreasing the value of $\mb{u}^\transpose \mb{W'v}$. We
		first fix $\mb{v}$ and bring $\mb{u}$ to the desired form, and then similarly
		bring $\mb{v}$ to the desired form.
		
		So fix $\mb{v}$, and let $\mb{z}=\mb{W'v}$. Note that $\mb{u}$ maximizing $\mb{u}^\transpose \mb{z}$
		must satisfy $\mb{u}_i=1$ when $\mb{z}_i>0$ and $\mb{u}_i=0$ when $\mb{z}_i<0$. Since
		$\mb{W}'=\mb{W} \tensor \mb{J}_m$ we have that $\mb{z}_{jm+1}=\mb{z}_{jm+2}= \dots =
		\mb{z}_{(j+1)m}$ for all $j=0,1,\dots,n-1$. Hence we can choose a
		maximizing $\mb{u}$ satisfying $\mb{u}_{jm+1}=\mb{u}_{jm+2}=\dots=\mb{u}_{(j+1)m}$ for all
		$j=0,1,\dots,n-1$ as well, meaning $\mb{u}=\mb{x} \tensor \mb{1}_m$ for suitable $\mb{x}
		\in \zo^n$. We can now fix $\mb{u}$ and in a similar way bring $\mb{v}$ to the
		form $\mb{v}=\mb{y} \tensor \mb{1}_m$ for suitable $\mb{y} \in \zo^n$.
	\end{proof}

	The following lemma, which is the $\zo$ analogue of Lemma 4.3 in
	\cite{TIT:RothV08}, is a direct consequence of Lindsey's
	Lemma. We state the proof for completeness.
	\begin{lemma}\label{hardness_lemma_2}
		Let $\mb{H}$ be a $m \times m$ Hadamard matrix. For every $\mb{x,y} \in \zo^m$,
		\[
		\abs{\mb{x}^\transpose \mb{Hy}} \leq m^{3/2} \enspace .
		\]
		\label{LEM:LindseyCor}
	\end{lemma}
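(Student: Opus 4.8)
The plan is to derive the bound directly from the defining property of a Hadamard matrix, namely $\mb{H}^\transpose \mb{H} = m\mb{I}_m$ (equivalently, the columns of $\mb{H}$ are pairwise orthogonal and each has squared Euclidean norm $m$), combined with the Cauchy--Schwarz inequality. This is exactly the argument behind Lindsey's Lemma, specialized to $\zo$-vectors.

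First I would rewrite the quadratic form as an inner product over the reals: $\mb{x}^\transpose \mb{H}\mb{y} = \iprod{\mb{x}}{\mb{H}\mb{y}}$, where the inner product and norm $\norm{\cdot}$ below are the standard Euclidean ones on $\RR^m$. Applying Cauchy--Schwarz gives
\[
\abs{\mb{x}^\transpose \mb{H}\mb{y}} = \abs{\iprod{\mb{x}}{\mb{H}\mb{y}}} \le \norm{\mb{x}} \cdot \norm{\mb{H}\mb{y}}.
\]
Next I would bound each factor. Since $\mb{x} \in \zo^m$ we have $\norm{\mb{x}}^2 = \sum_i x_i \le m$, hence $\norm{\mb{x}} \le \sqrt{m}$. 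For the other factor, using $\mb{H}^\transpose \mb{H} = m\mb{I}_m$ we get $\norm{\mb{H}\mb{y}}^2 = \mb{y}^\transpose \mb{H}^\transpose \mb{H}\mb{y} = m\,\norm{\mb{y}}^2 \le m^2$, since likewise $\norm{\mb{y}}^2 \le m$. Combining, $\abs{\mb{x}^\transpose \mb{H}\mb{y}} \le \sqrt{m}\cdot m = m^{3/2}$, which is the claim.

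There is essentially no serious obstacle here; the only thing to be careful about is invoking the correct defining identity for Hadamard matrices ($\pmo$-entries with orthogonal rows/columns, so $\mb{H}^\transpose\mb{H}=m\mb{I}_m$) and noting that we are free to pass from the $\zo$ quadratic form to a genuine real inner product before applying Cauchy--Schwarz. If one prefers, the same estimate can be phrased via the operator norm $\norm{\mb{H}}_2 = \sqrt{m}$ of a Hadamard matrix, giving $\abs{\mb{x}^\transpose \mb{H}\mb{y}} \le \norm{\mb{x}}\,\norm{\mb{H}}_2\,\norm{\mb{y}} \le \sqrt{m}\cdot\sqrt{m}\cdot\sqrt{m} = m^{3/2}$; I would present whichever version reads most cleanly alongside the surrounding hardness reduction.
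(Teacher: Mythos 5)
Your argument is correct and matches the paper's proof essentially verbatim: both use $\mb{H}^\transpose\mb{H}=m\mb{I}_m$ to compute $\norm{\mb{H}\mb{y}}^2=m\norm{\mb{y}}^2$, then apply Cauchy--Schwarz together with $\norm{\mb{x}},\norm{\mb{y}}\le\sqrt{m}$ for $\zo$-vectors. The operator-norm rephrasing you mention at the end is the same estimate in different notation, not a genuinely distinct route.
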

	
	\begin{proof}
		First note
		\[
		\norm{\mb{Hy}}^2 = \mb{y}^\transpose (\mb{H}^\transpose \mb{H})\mb{y} = \mb{y}^\transpose (m\mb{I}) \mb{y} = m\cdot\norm{\mb{y}}^2 \enspace .
		\]
		We can then complete the proof by the Cauchy-Schwartz inequality,
		\[
		\abs{\mb{x}^\transpose \mb{Hy}} \leq \norm{\mb{x}^\transpose}\cdot\norm{\mb{Hy}} = \sqrt{m}\cdot\norm{\mb{x}}\cdot\norm{\mb{y}} \leq m^{3/2} \enspace .
		\]
	\end{proof}
	
	\begin{lemma}\label{hardness_lemma_3}
		Let $\mb{W}=(w_{ij})$ be a $n \times n$ $\zpmo$-matrix and let $\mb{H}$ be a
		$m\times m$ Hadamard matrix. Define the $(mn) \times (mn)$
		$\pmo$-block matrix $\widetilde{\mb{W}}=(\widetilde{\mb{W}}_{ij})$, where block
		$\widetilde{\mb{W}}_{ij}$ is given by
		\[
		\widetilde{\mb{W}}_{ij} = \begin{cases}
		w_{ij} \mb{J}_m & \text{ if } w_{ij}\neq 0\\
		\mb{H} & \text{ if } w_{ij} = 0
		\end{cases} \enspace .
		\]
		Let $\mb{W}' = \mb{W} \tensor \mb{J}_m$. Then for all $\mb{u,v} \in \zo^{mn}$,
		\[
		\Abs{\mb{u}^\transpose\widetilde{\mb{W}}\mb{v} - \mb{u}^\transpose \mb{W'v}} \leq n^2 \cdot m^{3/2} \enspace .
		\]
		\label{LEM:HadamardReplacement}
	\end{lemma}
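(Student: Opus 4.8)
The plan is to read off the difference $\widetilde{\mb{W}}-\mb{W}'$ blockwise and then apply the Lindsey-type estimate of Lemma~\ref{hardness_lemma_2} to each surviving block. First I would observe that both $\widetilde{\mb{W}}$ and $\mb{W}'=\mb{W}\tensor\mb{J}_m$ are $n\times n$ arrays of $m\times m$ blocks, and that their $(i,j)$-th blocks agree whenever $w_{ij}\neq 0$: in that case $\widetilde{\mb{W}}_{ij}=w_{ij}\mb{J}_m=(\mb{W}\tensor\mb{J}_m)_{ij}$. When $w_{ij}=0$ the $(i,j)$-th block of $\mb{W}'$ is the zero matrix while that of $\widetilde{\mb{W}}$ is $\mb{H}$. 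Hence $\widetilde{\mb{W}}-\mb{W}'$ is exactly the block matrix whose $(i,j)$-th block equals $\mb{H}$ if $w_{ij}=0$ and $\mb{0}$ otherwise, and the number of $\mb{H}$-blocks equals the number of zero entries of $\mb{W}$, which is at most $n^2$.

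Next I would decompose the argument vectors compatibly with this block structure, writing $\mb{u}=(\mb{u}_1,\dots,\mb{u}_n)$ and $\mb{v}=(\mb{v}_1,\dots,\mb{v}_n)$ with each $\mb{u}_i,\mb{v}_j\in\zo^m$. Expanding the bilinear form against the block description above gives
\[
\mb{u}^\transpose\widetilde{\mb{W}}\mb{v}-\mb{u}^\transpose\mb{W}'\mb{v}
= \mb{u}^\transpose\bigl(\widetilde{\mb{W}}-\mb{W}'\bigr)\mb{v}
= \sum_{\substack{1\le i,j\le n:\\ w_{ij}=0}} \mb{u}_i^\transpose\mb{H}\mb{v}_j \enspace .
\]
Applying Lemma~\ref{hardness_lemma_2} to each summand yields $\abs{\mb{u}_i^\transpose\mb{H}\mb{v}_j}\le m^{3/2}$, and since the sum has at most $n^2$ terms the triangle inequality gives
\[
\Abs{\mb{u}^\transpose\widetilde{\mb{W}}\mb{v}-\mb{u}^\transpose\mb{W}'\mb{v}}
\le \sum_{\substack{1\le i,j\le n:\\ w_{ij}=0}} \abs{\mb{u}_i^\transpose\mb{H}\mb{v}_j}
\le n^2\cdot m^{3/2} \enspace ,
\]
which is the claimed bound.

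There is essentially no obstacle here: all of the analytic content sits in the Lindsey bound of Lemma~\ref{hardness_lemma_2}, which is already in hand, and the only point requiring (mild) care is lining up the block decompositions of $\widetilde{\mb{W}}$ and $\mb{W}'$ so that their difference has the clean form used above. This lemma is then the engine of the reduction: combined with Lemma~\ref{hardness_lemma_1}, it lets us replace the $0$-entries of a $\zpmo$-weighted \textsc{Maximum Edge Weight Biclique} instance by Hadamard blocks while changing the optimal objective by at most the additive error $n^2 m^{3/2}$ from $m^2$ times the original optimum; choosing $m$ to be a suitable polynomial in $n$ makes this error small enough that the exact optimum of the original instance can be recovered, completing the reduction from the $\zpmo$ case to the $\zo$ (equivalently $\pmo$, after the affine shift $2\mb{A}-\mb{J}$) case.
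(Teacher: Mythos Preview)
Your proof is correct and is essentially the same as the paper's: both rewrite the difference as $\mb{u}^\transpose(\widetilde{\mb{W}}-\mb{W}')\mb{v}$, observe that the only nonzero blocks of $\widetilde{\mb{W}}-\mb{W}'$ are copies of $\mb{H}$ (at most $n^2$ of them), and apply Lemma~\ref{hardness_lemma_2} to each block. The paper compresses the block decomposition into the single line $\Abs{\mb{u}^\transpose(\widetilde{\mb{W}}-\mb{W}')\mb{v}}\le n^2\cdot\max_{\mb{x},\mb{y}\in\zo^m}\abs{\mb{x}^\transpose\mb{H}\mb{y}}$, whereas you spell it out explicitly, but the argument is identical.
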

	
	\begin{proof}
		This is by simple estimation.
		\[
		\begin{split}
		\Abs{\mb{u}^\transpose\widetilde{\mb{W}}\mb{v}-\mb{u}^\transpose \mb{W'v} }
		& = \Abs{\mb{u}^\transpose(\widetilde{\mb{W}}-\mb{W}')\mb{v}}\\
		& \leq n^2 \cdot \max_{\mb{x,y} \in \zo^m} \Abs{\mb{x}^\transpose \mb{H y}} \\
		& \leq  n^2 \cdot m^{3/2} \enspace ,
		\end{split}
		\]
		where the last inequality follows from Lemma~\ref{LEM:LindseyCor}.
	\end{proof}
	
	\begin{proof}\textbf{of Theorem \ref{Thm:NP_Hardness}}
		Suppose now that $\mb{W}$ is an $n \times n$ $\zpmo$-matrix. Let $m=2^\ell$
		be the smallest power of 2 that is greater than $4n^4$, and let $\mb{H}$
		be the $m \times m$ Sylvester Hadamard matrix. We then define
		$\widetilde{\mb{W}}$ and $\mb{W}'$ as in
		Lemma~\ref{LEM:HadamardReplacement}. Then
		\[
		\begin{split}
		&\Abs{\max_{\mb{u,v} \in \zo^{mn}} \mb{u}^\transpose \widetilde{\mb{W}}\mb{v}-m^2 \cdot \max_{\mb{x,y} \in \zo^n} \mb{x}^\transpose \mb{W y}}
		\\= &\Abs{\max_{\mb{u,v} \in \zo^{mn}} \mb{u}^\transpose \widetilde{\mb{W}} \mb{v}- \max_{\mb{u,v} \in \zo^{mn}} \mb{u}^\transpose \mb{W'v}}\\\leq & n^2 \cdot m^{3/2} \leq \frac{m^{1/2}}{2} \cdot m^{3/2} = \frac{m^2}{2} \enspace ,
		\end{split}
		\]
		where the first equality is by Lemma~\ref{LEM:BlockmatrixOptima} and
		the first inequality is by Lemma~\ref{LEM:HadamardReplacement}. \\
		Since
		the expression $m^2 \cdot \max_{\mb{x,y} \in \zo^n} \mb{x}^\transpose \mb{W y}$ is an integer
		multiple of $m^2$, the value \\ $\max_{\mb{u,v} \in \zo^{mn}} \mb{u}^\transpose
		\widetilde{\mb{W}}\mb{v}$ uniquely determines the value $\max_{\mb{x,y} \in \zo^n} \mb{x}^\transpose
		\mb{W y}$. This then gives the desired reduction.
	\end{proof}

\bibliography{example_paper}
\end{document}